\documentclass{article}
\usepackage[margin=1in]{geometry}
\usepackage{amsmath,amsfonts,amssymb,amsthm,hyperref,braket,todonotes,xcolor,zref-clever,graphicx,physics,mathtools}
\usepackage[
    n,
    operators,
    advantage,
    sets,
    adversary,
    landau,
    probability,
    notions,
    logic,
    ff,
    mm,
    primitives,
    events,
    complexity,
    asymptotics,
    keys]{cryptocode}
\createprocedureblock{procb}{center ,boxed}{}{}{}
\newcommand{\pw}{\mathsf{pw}}
\newcommand{\pwg}{\mathsf{pwguess}}
\newcommand{\NewS}{\mathsf{NewSession}}
\newcommand{\role}{\mathsf{role}}
\newcommand{\sid}{\mathsf{sid}}
\newcommand{\TestPwd}{\mathsf{TestPwd}}
\newcommand{\NewKey}{\mathsf{NewKey}}
\newcommand{\Hyb}{\mathsf{Hyb}}
\newcommand{\Xr}{\mathsf{X}}
\newcommand{\Yr}{\mathsf{Y}}
\newcommand{\Zr}{\mathsf{Z}}
\newcommand{\Ar}{\mathsf{A}}
\newcommand{\Br}{\mathsf{B}}
\newcommand{\Cr}{\mathsf{C}}
\newcommand{\Wr}{\mathsf{W}}
\newcommand{\Kr}{\mathsf{K}}
\newcommand{\Rr}{\mathsf{R}}
\newcommand{\Dr}{\mathsf{D}}
\newcommand{\Or}{\mathsf{O}}
\newcommand{\Fr}{\mathsf{F}}
\newcommand{\Sreg}{\mathsf{S}}
\newcommand{\Treg}{\mathsf{T}}
\newcommand{\PK}{\mathcal{PK}}
\newcommand{\SK}{\mathcal{SK}}

\newcommand{\B}{\mathcal{B}}
\newcommand{\FPake}{\mathcal{F}_{\text{PAKE-ea}}}
\DeclareMathOperator*{\E}{\mathbb{E}}
\newcommand{\cA}{\mathcal{A}}
\newcommand{\cD}{\mathcal{D}}
\newcommand{\cE}{\mathcal{E}}
\newcommand{\cK}{\mathcal{K}}
\newcommand{\cX}{\mathcal{X}}
\newcommand{\cY}{\mathcal{Y}}
\newcommand{\cT}{\mathcal{T}}
\newcommand{\inp}{\mathsf{in}}
\newcommand{\good}{\mathsf{good}}
\newcommand{\Amp}{\mathsf{Amp}}

\newcommand{\pure}{\mathsf{pure}}
\newcommand{\clock}{\mathsf{clock}}
\newcommand{\forward}{\mathsf{for}}
\newcommand{\rev}{\mathsf{rev}}

\newcommand{\acc}{\mathsf{acc}}

\newcommand{\Hkey}{H_{\mathsf{key}}}
\newcommand{\Htag}{H_{\mathsf{tag}}}

\newcommand{\fresh}{\mathsf{fresh}}
\newcommand{\interrupted}{\mathsf{interrupted}}
\newcommand{\compromised}{\mathsf{compromised}}
\newcommand{\completed}{\mathsf{completed}}

\newcommand{\PAKE}{\mathsf{PAKE}}
\newcommand{\chall}{\mathsf{chall}}
\newcommand{\KEM}{\mathsf{KEM}}
\newcommand{\Enc}{\mathsf{KEM.Encaps}}
\newcommand{\Dec}{\mathsf{KEM.Decaps}}
\newcommand{\KG}{\mathsf{KEM.KeyGen}}
\newcommand{\Env}{\mathcal{Z}}
\newcommand{\Sim}{\mathcal{S}}
\newcommand{\R}{\mathcal{R}}
\newcommand{\pkey}{\mathsf{key}}
\newcommand{\ICEnc}{\mathcal{E}}
\newcommand{\ICDec}{\mathcal{D}}
\newcommand{\IC}{\mathsf{IC}}

\newcommand{\TF}{\mathsf{2F}}
\newcommand{\msg}{\mathsf{msg}}
\newcommand{\flag}{\mathsf{flag}}
\newcommand{\Real}{\mathsf{Real}}
\newcommand{\Ideal}{\mathsf{Ideal}}
\newcommand{\chck}{\mathsf{check}}

\newcommand{\Sen}{\mathsf{Sen}}
\newcommand{\Rec}{\mathsf{Rec}}

\newcommand{\cF}{\mathcal{F}}
\newcommand{\OT}{\mathsf{OT}}
\newcommand{\Com}{\mathsf{Com}}
\newcommand{\Open}{\mathsf{Open}}

\newtheorem{theorem}{Theorem}[section]
\newtheorem{corollary}{Corollary}[theorem]
\newtheorem{lemma}[theorem]{Lemma}
\newtheorem{claim}[theorem]{Claim}
\newtheorem{definition}[theorem]{Definition}

\AddToHook{env/lemma/begin}{\zcsetup{countertype={theorem=lemma}}}
\AddToHook{env/claim/begin}{\zcsetup{countertype={theorem=claim}}}
\AddToHook{env/definition/begin}{\zcsetup{countertype={theorem=definition}}}
\zcRefTypeSetup{claim}{
  name-sg=claim, name-pl=claims,
  Name-sg=Claim, Name-pl=Claims
}
\title{Natural Barriers to Quantum Extraction: \\ \Large{On the Post-Quantum (In)security of (O)EKE and Masny-Rindal OT}}
\author{James Bartusek\thanks{Columbia University. \href{mailto:bartusek.james@gmail.com}{bartusek.james@gmail.com}. } \and Jake Januzelli\thanks{Columbia University. \href{mailto:jj3544@columbia.edu}{jj3544@columbia.edu}}}
\date{}

\newif\ifnotes
\notestrue

\begin{document}

\pagenumbering{roman}
\maketitle

\begin{abstract} 
Encrypted key exchange (EKE), introduced by Bellovin and Merritt (IEEE S\&P 1992), and Masny-Rindal OT, introduced by Masny and Rindal (ACM CCS 2019), are highly-efficient methods for compiling essentially any KEM into advanced cryptographic protocols, namely password-authenticated key exchange (PAKE) and oblivious transfer (OT), by relying only on idealized symmetric-key primitives. They have become leading candidates for practically-implementable PAKE and OT due to (1) their simplicity, (2) their plug-and-play nature, allowing for flexibility in the choice of KEM, and (3) existing proofs of UC-security (in the classical adversarial model).  

Due to point (2) above, these compilers yield attractive candidates for efficient \emph{post-quantum} PAKE and OT, especially given the recent post-quantum KEM standardization efforts. This motivates the question of whether the (UC-)security of these compilers translates to the quantum adversarial model. In this work, we show that it does not. In particular, we prove that a general family of (O)EKE protocols, as well as Masny-Rindal OT, are \emph{not} UC-secure against quantum polynomial-time adversaries, even when instantiated with a post-quantum KEM. To establish UC-insecurity, we devise an adversarial strategy that provably thwarts any attempt by the simulator to extract its input (the password in the case of PAKE, and the receiver's choice bit in the case of OT).

To complement these negative results, we establish that both compilers yield certain notions of \emph{game-based} security. In the PAKE setting, we consider OEKE instantiated with the ``2-Feistel'' cipher, and prove its stand-alone game-based security in the quantum random oracle model.  We view these results as a proof of concept that security of (O)EKE and Masny-Rindal OT can yet be redeemed in the quantum setting, though we caution that the situation will be more subtle than in the classical setting due to the breakdown of simulation-based security. Along the way, we establish a novel ``advantage-tight'' one-way to hiding lemma that may be of independent interest.

\end{abstract}

\newpage
\tableofcontents
\thispagestyle{empty}
\newpage
\pagenumbering{arabic}

\section{Introduction}
The threat of quantum computers has precipitated widespread adoption of post-quantum cryptography in recent years. Moving  from cryptographic assumptions vulnerable to Shor's algorithm (like factoring or discrete logarithm) to post-quantum assumptions is the first step in this transition. However, this does not always suffice to establish security against quantum attackers: Classical security proofs often make additional assumptions about the adversary that can fail in the quantum setting (e.g. that one can rewind the adversary \cite{FOCS:AmbRosUnr14}).

Practically every efficient post-quantum primitive relies on the Random Oracle Model (ROM) \cite{CCS:BelRog93} to justify security, often through transforms like Fiat-Shamir \cite{AC:PoiSte96} and Fujisaki-Okamoto \cite{C:FujOka99}. In 2011, \cite{AC:BDFLSZ11} introduced the Quantum Random Oracle Model (QROM), which allows quantum adversaries to query a random oracle in superposition (an ability that any real-life quantum attacker will have by implementing a reversible circuit for the hash function in question). Although many important constructions have now been shown secure in the QROM (e.g. \cite{C:Zhandry19,C:DFMS19,C:DonFehMaj20}), these proofs require expertise in quantum information theory and can often be substantially more complex than in the ROM. In light of this, one may ask whether expending the effort to write a QROM proof is actually necessary: Are there any cryptographic protocols in use that are secure in the ROM but insecure in the QROM? The seminal works of Yamakawa and Zhandry \cite{EC:YamZha21,FOCS:YamZha22} construct a variety of theoretical examples including one-way functions and public key encryption, but they are not naturally occurring, and were specifically created with this aim in mind. 

In general, there appears to be a belief in the community that for any ``naturally-occurring'' ROM-based protocol instantiated with post-quantum assumptions, if we currently lack a QROM proof, then this likely indicates a lack of appropriate proof techniques, not an attack. As an illustrative example, a report from the European Telecommunications Standards Institute (one of three official standards organizations for the EU) refers to the \cite{EC:YamZha21,FOCS:YamZha22} counterexamples as ``completely artificial'' and notes that
\begin{center}\textit{``it is often stated \cite{C:DFMS19} that a concrete protocol that is secure in the ROM will remain secure in the Q-ROM (if based on quantum-safe hardness assumptions), although with no formal proof in this sense.''}
\cite[Section~7.1]{ETSI}
\end{center}
\paragraph{This work.} We challenge this claim, and give the first examples of \emph{naturally occurring} cryptographic protocols based on post-quantum assumptions that are secure in the ROM but provably \emph{not} secure in the QROM. Concretely, we show that the oblivious transfer (OT) protocol of Masny and Rindal \cite{masnyrindal} and the (Once) Encrypted Key Exchange, or (O)EKE \cite{SP:BelMer92,CCS:BreChePoi03}, family of protocols for Password Authenticated Key Exchange (PAKE) are insecure in the Universally Composable (UC) framework \cite{FOCS:Canetti01} when quantum access to the random oracle is allowed, despite being UC-secure in the ROM. As a bonus, we show the same insecurity for variants of OEKE in the Quantum Ideal Cipher Model (QICM).

We complement these negative results with positive ones, showing that both protocols do satisfy some notions of game-based security in the QROM under the same assumptions used for the classical ROM proofs. Prior to this work, there were no fully-established results on either the simulation-based or game-based security of either protocol against quantum adversaries.\footnote{Although \cite{hovelmanns2025cake} attempted an analysis of the post-quantum game-based security of OEKE, their handling of the quantum-accessible ideal cipher was only conjectured, not proven.}
\paragraph{Masny-Rindal OT.}
Recall that in 1 out of 2 (random) Oblivious Transfer (OT), a sender outputs two strings $s_0, s_1$ and a receiver outputs $b, s_b$ for a choice bit $b$. Security requires the receiver to learn nothing about $s_{1-b}$ and the sender to learn nothing about $b$. 

In 2019, Masny and Rindal \cite{masnyrindal} introduced a highly efficient compiler that upgrades any Key Encapsulation Mechanism (KEM) protocol satisfying certain natural properties to a two-message random OT protocol, in the ROM.\footnote{In fact, their techniques also yield a \emph{simultaneous-message} OT protocol in some settings, but in this work we focus on the two-message version.} The compiler is described in \zcref[S]{OTdiag} and can be instantiated with ML-KEM, making it the premier candidate for deployable post-quantum OT. In particular, their results establish that the protocol is UC-secure\footnote{The functionality that they prove security with respect to is called \emph{endemic} OT, which is the weakest known notion of random OT that remains meaningful. Here, meaningful refers to the fact that any random OT satisfying endemic security implies a simulation-secure chosen-message OT via the natural compiler in which the sender transmits $m_0 \oplus s_0, m_1 \oplus s_1$ along with its second message, where $m_0$ and $m_1$ are its chosen strings.} in the ROM when instantiated with ML-KEM. In this work, we show that this does not hold true in the \emph{quantum} random oracle model.

\begin{theorem}[Informal]
    Masny-Rindal OT instantiated with \emph{any} $\KEM$ is not post-quantum simulation-secure.
\end{theorem}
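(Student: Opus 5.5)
The approach is to build an explicit malicious receiver $\cA$ in the QROM and an environment $\Env$ that together rule out every simulator $\Sim$. Recall the two-message Masny--Rindal protocol. The receiver with choice bit $b$ samples $(pk_b, sk) \gets \KG$ and a random $r_{1-b}$. It sets $r_b := pk_b \oplus H(r_{1-b})$ (or the corresponding group operation) and sends $(r_0,r_1)$. The sender recomputes $pk_0 = r_0 \oplus H(r_1)$ and $pk_1 = r_1 \oplus H(r_0)$, encapsulates under both keys, and outputs the two encapsulated keys as $s_0,s_1$. Classically, $\Sim$ extracts $b$ by seeing which of $r_0,r_1$ was queried to $H$. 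Against a quantum receiver this extraction route fails, and I will show that every other extraction route must measure information that the receiver keeps in superposition.

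\textbf{The attack.} $\cA$ runs the honest receiver coherently on the choice bit $\ket{+}$. In a purified form it prepares
\[
\textstyle\sum_{b} \ket{b}\sum_{sk, r} \alpha_{sk,r}\ket{sk}\ket{r}\ket{r_0^{(b)}, r_1^{(b)}},
\]
where the $H$ evaluations are made by superposition queries. It then measures only the message register $(r_0,r_1)$ and sends it. The first thing to verify is that for a typical transcript both branches $b=0,1$ keep comparable amplitude. This should follow because $H$ is random and public keys are pseudorandom; that property is already one of the KEM properties Masny--Rindal assume. The residual state is then a genuine superposition $\beta_0\ket{0}\ket{\psi_0}+\beta_1\ket{1}\ket{\psi_1}$ with $|\beta_0|,|\beta_1|$ non-negligible. $\Env$ tosses a coin and instructs $\cA$ to do one of two things.
\begin{itemize}
\item \emph{Test mode:} measure $b$ and decapsulate the sender's ciphertext $c_b$ with the corresponding $sk$, then report $(b,k)$. $\Env$ checks that $k=s_b$.
\item \emph{Coherence mode:} uncompute the preparation, using the transcript, the register contents and fresh superposition queries to $H$. $\Env$ checks that the choice-bit register returns to $\ket{+}$, i.e.\ that the interference survives.
\end{itemize}
In the real world both checks pass with probability close to $1$, up to the small loss from the transcript measurement, which I will compute explicitly.

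\textbf{Ruling out simulators.} In the ideal world $\Sim$ must send some bit $b^*$ to the ideal (endemic) OT functionality and obtain only $s_{b^*}$. Test mode forces $b^* = b$ except with negligible probability: if $b^*\neq b$, then $\cA$'s decapsulated key must equal an ideal-functionality output that $\Sim$ never learned, and $\Sim$ cannot arrange that. Hence $b^*$ must be computed, by $\Sim$'s processing of $\cA$'s oracle queries and messages, so that it is correlated with the register $b$, which $\Sim$ never holds directly. The natural tool is an information--disturbance (gentle-measurement / compressed-oracle) argument. Any procedure on $\Sim$'s side, acting on the oracle database and the messages, whose output predicts $b$ with advantage $\epsilon$ must decohere the $\ket{b}$ register by an amount bounded below in $\epsilon$. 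That decoherence lowers the coherence-mode acceptance probability by a non-negligible amount. Because $\Env$ picks the mode only after $\Sim$ has committed to $b^*$, the two modes cannot both be satisfied.

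\textbf{Main obstacle.} The hardest step is making this disturbance argument rigorous against an arbitrary simulator. $\Sim$ may reprogram $H$, simulate it by any stateful unitary, and entangle its own workspace with $\cA$'s queries. I would first model $\Sim$'s oracle as an arbitrary isometry interacting with $\cA$'s queries. Then I would show that the joint state of $(b, \text{simulator workspace})$ after the receiver's message gives $\Sim$ a bit $b^*$ with $\Pr[b^*=b]\ge 1-\delta$ only if the reduced state on $\cA$'s side has off-diagonal weight $O(\sqrt{\delta})$ (up to the uncomputation fidelity). This is a monogamy/no-cloning-of-which-path argument. The remaining work is purely technical: ensuring that $\cA$'s uncomputation still works when $\Sim$'s oracle is not a genuine random function. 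That is resolved by having $\Env$ run coherence mode only via queries that $\Sim$ must answer consistently with the answers it gave earlier, on pain of being distinguished from the real world.
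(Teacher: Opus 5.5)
\textbf{The gap is your coherence mode, and it is fatal rather than technical.} After $(t_0,t_1)$ is measured, the $b$-branch of your receiver's state holds a superposition over secret keys (or key-generation coins) consistent with $\pk_b = t_b\oplus H(t_{1-b})$. Since $\pk_0\neq\pk_1$, these supports are disjoint. Returning $\Br$ to $\ket{+}$ therefore needs a map taking key material for $\pk_0$ to key material for $\pk_1$, which essentially means inverting $\KG$. Simply undoing the preparation leaves negligible overlap with the uniform superposition over coins. So coherence mode does \emph{not} pass with probability close to $1$ even in the real world, and your real/ideal gap disappears.

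In fact, for exactly this adversary the naive extractor that measures $\Br$ in the standard basis is computationally undetectable. By the equivalence of mapping and distinguishing, detecting the collapse would let one recover both openings, and a measure-and-reprogram binding argument rules that out. So no information--disturbance argument can close the proof: the disturbance exists, but no efficient environment can see it. Your two modes show the same tension: test mode needs the key material, and coherence mode needs it gone.

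\textbf{The missing idea is the paper's dummy branch.} The adversary's input is a uniform mixture over a branch bit $c$ known to the distinguisher. The purification sits in a distinguisher register $\Dr$ entangled with the adversary's register $\Ar=(\Br,\Xr,\Kr)$:
\begin{itemize}
\item For $c=0$, $\Kr$ holds a \emph{uniform} $\pk_b$, with a copy in $\Dr$.
\item For $c=1$, $\Kr$ holds $\pk(r_b)$, with the coins $r_b$ in $\Dr$.
\end{itemize}
The adversary runs the coherent receiver, measures $(t_0,t_1)$, and classically queries $y_j=H(t_j)$. The distinguisher then tests as follows:
\begin{itemize}
\item For $c=0$, it efficiently projects onto $\ket{\psi_\chck}$, which accepts with probability $1$ in the real world. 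Nothing needs inverting, because the ``opening'' is now $\pk_b=t_b\oplus y_{1-b}$ itself.
\item For $c=1$, it measures $(r_b,b)$ from $\Dr,\Br$, decapsulates, and compares the result with $m_b$.
\end{itemize}

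The ideal-world argument then runs in four steps:
\begin{enumerate}
\item Replace the $c=1$ test by ``$b=b^*$''. This loses only a negligible amount, since $m_{1-b^*}$ is uniform.
\item $\Dr$ is now unused on that branch, so a public-key-uniformity challenger can hold it. This lets you swap the $c=1$ state for the $c=0$ state.
\item Gentle measurement lets you post-select on $\ket{\psi_\chck}$.
\item Measuring $\Br$ then gives a bit that is uniform and independent of $b^*$.
\end{enumerate}
Hence $p_\Ideal\le 3/4+\mathsf{negl}(\lambda)$, while $p_\Real=1-\mathsf{negl}(\lambda)$.

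Separately, to cover \emph{any} KEM you cannot just assume pseudorandom public keys. The paper first shows that without them the protocol already fails against a malicious sender.
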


Our attack exploits an inability for the simulator to \emph{extract} a malicious receiver's choice bit. However, it does not establish that the malicious receiver can actually obtain both of the sender's messages. Hence, it does not rule out the possibility that Masny-Rindal actually does satisfy some game-based notion of security. We investigate this possibility, and show that it does.

\begin{theorem}[Informal]
    Let $\KEM$ be any KEM that satisfies (computational) public-key uniformity and key unpredictability. Then Masny-Rindal OT instantiated with $\KEM$ satisfies the following claim in the QROM: The probability that any QPT malicious receiver can guess $s_{b'}$ for random $b'$ is at most $1/2 + \mathsf{negl}$.
\end{theorem}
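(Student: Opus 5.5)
We recall the Masny-Rindal protocol: the receiver with choice bit $b$ samples $(\pk_b,\sk_b)\gets\KG$ and a uniformly random $r_{1-b}$, sets $r_b = \pk_b \oplus H(r_{1-b})$ (or the group analogue), and sends $(r_0,r_1)$. The sender recovers $\pk_i = r_i \oplus H(r_{1-i})$ for $i\in\{0,1\}$, encapsulates $(c_i,k_i)\gets\Enc(\pk_i)$, sends $(c_0,c_1)$, and outputs $s_i = k_i$, possibly further hashed. A malicious receiver $\cA$ makes quantum queries to $H$, outputs $(r_0,r_1)$, receives $(c_0,c_1)$, and then outputs a guess for $s_{b'}$ where $b'$ is uniform and independent. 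The proof plan is to show that for at least one index $i$ the receiver cannot know a secret key for $\pk_i$. Key unpredictability for an honestly-sampled, uniform $\pk$ then makes $k_i$ unguessable, so the success probability is at most $\frac12\cdot 1 + \frac12\cdot\negl$.

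\textbf{Step 1 (reprogramming to plant challenge keys).} We switch to Zhandry's compressed/superposition oracle, or equivalently a small-range/reprogramming argument, to write $H$ as a random function. The two planted positions are chosen one at a time, guessing which $i$ we attack. By computational public-key uniformity, $H(x) \oplus \pk$ for an honest $\pk$ is indistinguishable from uniform, so we may simulate $H$ such that a random output of $H$ equals $\pk^* \oplus u$ for a challenge $\pk^*$. More concretely, we replace $H$ by $H'(x) = G(x) \oplus \pk^*_x$, where each $\pk^*_x$ comes from a fresh honest key generation. This is implementable as a $2q$-wise-independent family of KEM randomness, and it is indistinguishable from $H$ via a hybrid over public-key uniformity.

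\textbf{Step 2 (extracting the relevant query).} In the modified game, $\pk_i = r_i \oplus H'(r_{1-i})$. Either $\pk_{i}$ is an honestly generated key $\pk^*_{r_{1-i}}$ shifted by $r_i \oplus G(r_{1-i})$, or the adversary essentially fixed $\pk$ without structure. We therefore define $G$ as the all-zero map via a further indistinguishability step, namely programming $H'(x) = \pk^*_x$ directly, which is justified by public-key uniformity: uniform strings and honest public keys are indistinguishable. The point is that $\pk_0 \oplus \pk_1$-style constraints force $r_0 = \pk_0 \oplus H(r_1)$ and $r_1 = \pk_1 \oplus H(r_0)$. By a measure-and-reprogram / compressed-oracle argument, with high probability at most one of $H(r_0), H(r_1)$ can have been \emph{adaptively chosen} after the other. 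Hence for the earlier-fixed index $i$, $\pk_i$ equals the honest key $\pk^*_{r_{1-i}}$ XORed with a value determined before that key was revealed. We then reduce to key unpredictability on a challenge $\pk^*$ planted at a random query position, losing a $\mathrm{poly}(q)$ factor, or use a tight one-way-to-hiding variant.

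\textbf{Main obstacle.} The hardest step is making the ``one of the two indices is non-adaptive'' argument rigorous for superposition queries, since we cannot simply record which query came first. I would first try Don et al.'s measure-and-reprogram lemma on the pair $(r_0,r_1)$, extracting a classical query and its order with $O(q^2)$ loss. Failing that, I would use Zhandry's compressed oracle to bound the probability that the database contains a cycle-like pair consistent with both $\pk$'s being adversarially controlled. Finally, the bound must be advantage-tight. Since $b'$ is uniform, the receiver succeeds with probability $1$ on the ``good'' index and negligibly on the other, so any non-tight reduction loss must only multiply the negligible term and never the $1/2$.
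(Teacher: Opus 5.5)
Your final accounting (``for one index $i$ the receiver cannot know $k_i$, so the success probability is at most $\frac12\cdot 1+\frac12\cdot\mathsf{negl}(\lambda)$'') assumes an index $i$ that is defined in the \emph{real} experiment, is independent of $b'$, and for which guessing $k_i$ is negligible. Against a superposition-querying receiver no such index exists unless you measure the receiver's state. After sending $(t_0,t_1)$ the receiver can hold $\sum_b\beta_b\ket{b,s_b}$, and \zcref[S]{thm:MR-attack} shows that this bit cannot be extracted. Any measurement you use to define $i$, such as measure-and-reprogram, relates real and simulated success only up to a $(2q+1)^{-4}$ factor. Applied to the event ``outputs $k_{b'}$'', that factor also multiplies the $\frac12$, so you get nothing. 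You correctly see that the bound must be advantage-tight, but you give no mechanism for achieving it.

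The missing ingredient is quantum rewinding. Write the second-stage receiver on challenge $b'$ as a unitary $U_{b'}$, and let $P_{b'}=U_{b'}^\dagger P^*_{b'}U_{b'}$, where $P^*_{b'}$ projects the output register onto the correct $K_{b'}$. Success $\frac12+\epsilon$ means $\Tr(P_0\sigma)+\Tr(P_1\sigma)=1+2\epsilon$, so \zcref[S]{lemma:binrewind} gives $\Tr(P_0P_1P_0\sigma)\geq\epsilon^2$. Concretely: run $U_0$, measure (harmless on success, since $P^*_0$ fixes a single value), undo $U_0$, and run $U_1$; this outputs \emph{both} keys with probability at least $\epsilon^2$. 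The $\frac12$ is absorbed before any lossy step. What remains is to show that outputting both $K_0$ and $K_1$ has negligible probability. That event is a predicate on $(t_0,t_1,\sigma)$ and $H$, so a polynomial loss there is harmless.

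For that two-output bound, your Steps 1--2 do not produce a usable challenge. Setting $H(x)=\pk^*_x$ globally gives $\pk_i=t_i\oplus\pk^*_{t_{1-i}}$ (your $r_i$ are the paper's $t_i$). That is an honest key shifted by an adversarially chosen offset, and key unpredictability says nothing about it. The intermediate step $H'(x)=G(x)\oplus\pk^*_x$ is also vacuous, since it is uniform for any choice of $\pk^*_x$. What you need is to program \emph{adaptively} $H(t_{1-i}):=\pk^*\oplus t_i$, which requires $t_i$ to be fixed first.

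The paper does this with the two-point, ordered measure-and-reprogram of \zcref[S]{thm:mandp}:
\begin{itemize}
\item it sets $\Theta_{\pi(0)}$ uniform and $\Theta_{\pi(1)}=\pk^*\oplus t_{\pi(0)}$, so that $\pk_{\pi(0)}=\pk^*$;
\item it embeds the challenge ciphertext as $c_{\pi(0)}$;
\item it outputs the receiver's guess for $K_{\pi(0)}$.
\end{itemize}
The random order $\pi$ replaces your ``which index is non-adaptive'' argument entirely, so no compressed-oracle cycle analysis is needed. Two details remain. First, the non-uniformity of $\Theta_{\pi(1)}$ is paid for by public-key uniformity. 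Second, the receiver keeps querying $H$ after receiving $(c_0,c_1)$, so the predicate must be given access to the reprogrammed oracle (the paper's variant in \zcref[S]{thm:mandp}). This gives $\mathsf{Adv}(\text{UNP})\geq\delta/(2q_1+1)^4-\mathsf{Adv}(\text{PKU})$, where $\delta$ is the two-output success probability and $q_1$ is the number of first-stage queries. Combined with the rewinding step, this yields $\epsilon=\mathsf{negl}(\lambda)$.
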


\begin{figure}[ht]
\pseudocodeblock{
\underline{P_1, \role = \text{``receiver''}} \< \< \underline{P_2, \role = \text{``sender''}} \\ 
(\pk,\sk) \gets \KG \< \< \\
t_{1-b} \gets \PK, t_b = \pk \oplus H(t_{1-b}) \< \sendmessageright*{t_0, t_1} \< \text{For } j \in \{0,1\}: \\
\< \< \quad \pk_j = t_j \oplus H(t_{1-j}),  \\
\text{output } s_b = \Dec(\sk,c_b) \< \sendmessageleft*{c_0, c_1} \< \quad (c_j,s_j) \gets \Enc(\pk_j)
}
\caption{Masny-Rindal OT with random oracle $H: \PK \to \PK$, a KEM $\KEM$, and receiver's choice bit $b \in \{0,1\}$.}
\label{OTdiag}
\end{figure}
\paragraph{PAKE.} In the PAKE setting parties hold (possibly low entropy) passwords $\pw, \pw'$, and want to run an authenticated key exchange based on them: if $\pw = \pw'$ they agree on a shared secret key, and if not they derive independent keys (or output $\bot$). PAKE protocols offer a form of authentication without public key infrastructure, and have been deployed in password managers, end-to-end encrypted backup recovery, key escrow, and WiFi login.\footnote{See \url{https://github.com/fancy-cryptography/fancy-cryptography} for more information.} Although there are both game-based \cite{EC:BelPoiRog00} and UC \cite{EC:CHKLM05} definitions of PAKE security, the UC definition has become the gold standard that deployed protocols target, due to modeling password correlation between parties and security under arbitrary composition: all Round 2 candidates of the 2020 CFRG PAKE competition have proofs of UC-security \cite{CFRG_PAKE_Selection}.

We study the dominant paradigm of post-quantum PAKE: Once Encrypted Key Exchange (OEKE) (depicted in \zcref[S]{OEKEdiag}).  Although originally instantiated with an ideal cipher, recent works  \cite{CCS:McQRosRoy20,EC:JanRoyXu25,arriaga2026tempo} opt to replace the ideal cipher with a two-round Feistel network using a random oracle. This 2-Feistel variant of OEKE is widely favored due to its optimal round complexity, efficiency, and reliance only on the ROM and a post-quantum KEM. In fact, a variant of this protocol is part of an Internet-Draft submitted to the Internet Engineering Task Force for possible standardization \cite{vos_hybrid_draft}. In the corresponding paper \cite{vos2025hybrid}, the authors prove their protocol is UC-secure in the (classical) ROM, conjecturing that their protocol is also secure in the QROM:

\begin{center}\textit{``A crucial next step for theoretical security is to prove security in the quantum random oracle model. We believe that this is possible using compressed quantum random oracles, because the simulated random oracles only rely on historical queries in simple ways.''}
\cite[Section~6]{vos2025hybrid}
\end{center}

\begin{figure}[ht]

\pseudocodeblock{
\underline{P_1, \role = \text{``initiator''}} \< \< \underline{P_2, \role = \text{``respondent''}} \\ 
(\pk,\sk) \gets \KG \< \sendmessageright*{\phi := \ICEnc(\pw,\pk)} \< \pk := \ICDec(\pw,\phi) \\
\< \< (c',K') = \Enc(\pk) \\
K := \Dec(\sk,c') \< \sendmessageleft*{c',\tau'} \< \tau' = H(\pw, \pk, \phi, c',K') \\
\tau = H(\pw, \pk, \phi, c',K) \< \< \\
\text{if } \tau' = \tau \text{ output }\pkey = G(K) \< \< \text{output }\pkey' = G(K') \\
\text{otherwise output } \bot  \< \<
}
\caption{A typical instantiation of OEKE with shared password $\pw$, random oracles $H,G$, a KEM $\KEM$, ideal cipher encryption $\ICEnc$ and decryption $\ICDec$.}
\label{OEKEdiag}
\end{figure}

We refute this conjecture, and show that OEKE is UC-insecure in the QROM.

\begin{theorem}[Informal]
    OEKE instantiated with \emph{any} $\KEM$ and using either ideal cipher encryption or 2-Feistel encryption is not post-quantum UC-secure.
\end{theorem}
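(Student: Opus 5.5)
The plan is to construct an explicit malicious environment/adversary pair and show that no simulator can match the real-world view. The reason is that the simulator must extract the adversary's password guess before it learns anything. Concretely, in the ideal world the simulator sees only the adversary's first message $\phi$ and its queries to the (simulated) ideal cipher or Feistel oracles. It must then issue a single $\TestPwd$ query, or none, to $\FPake$ before producing the respondent's message. The adversary will play initiator against an honest respondent holding $\pw$.

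\textbf{Step 1: the attack, which makes a uniform superposition query.} Instead of computing $\phi = \ICEnc(\pw^*,\pk)$ classically for one guess $\pw^*$, the adversary picks $\phi$ uniformly. It then queries the decryption oracle on the uniform superposition $\sum_{\pw'} \ket{\pw'}\ket{\phi}\ket{0}$, obtaining $\sum_{\pw'}\ket{\pw'}\ket{\ICDec(\pw',\phi)}$. The environment chooses $\pw$ uniformly from a dictionary of size two (or small size $N$) and tells the adversary nothing about it. The adversary then uncomputes, or otherwise uses this state so that it is (i) returned to a fixed state it can verify, and (ii) later able to recover the key $K'$ whenever the respondent's $\pk$ came from its superposition. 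The cleanest version I would try first is as follows. The adversary keeps the state and, after receiving $(c',\tau')$, performs a second superposition query that inverts the first. It then measures in the Fourier/uniform basis to check that the query register was not disturbed. The environment outputs $1$ iff this check passes and, additionally, the adversary's derived key matches the honest respondent's $\pkey'$ for the correct password. Here the adversary generates one KEM key pair per branch coherently, so each branch has $\pk$ with a known $\sk$.

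\textbf{Step 2: real-world success.} I will show that in the real protocol the check passes with probability essentially one. The respondent's decryption query is a classical query at $(\pw,\phi)$, which commutes with the adversary's queries to the truly random cipher. The adversary can therefore decrypt $c'$ in the branch corresponding to the correct $\pw$ and match $\pkey'$. This step needs care in arranging the branch computations so that the key is obtained without measuring the password register; a coherent decapsulation followed by a query to $G$ in superposition should suffice.

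\textbf{Step 3: ideal-world failure, which I expect to be the main obstacle.} Any simulator must commit, through $\TestPwd$, to at most one password before learning $\pw$, or else it cannot compute a consistent $\pkey'$ from $\FPake$ (the key is random unless $\TestPwd$ succeeded). So it can succeed only if it effectively guesses $\pw$, with probability about $1/N$. Alternatively, it can extract information from the adversary's superposition query, for instance by measuring the queried password. I will argue, via a gentle-measurement/disturbance argument, that any extraction revealing non-negligible information about which branch gets the "correct" ciphertext disturbs the uniform superposition. The final uncompute-and-check then detects this with constant probability. Formally, I would bound the sum of the simulator's success probability on the key-match event and the check-pass event by $1+1/N+\mathsf{negl}$, showing the simulator cannot achieve both near one. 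The delicate point is that the simulator controls the oracle entirely (it may reprogram it adaptively). I would therefore handle a general simulator as an arbitrary isometry on the query register and argue information–disturbance for the measured $\TestPwd$ choice. For the 2-Feistel variant the same argument is run with the Feistel oracles in place of $\ICDec$.
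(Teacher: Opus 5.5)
There is a genuine gap: you ask a single adversarial state to do two jobs that cannot be done together. The real-world key match needs, in the correct-password branch, a secret key for the public key the respondent decrypts. The ideal-world argument needs an \emph{efficient} test that catches a simulator who measures the password register of your query. Each of your two versions achieves only one of these.

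If $\phi$ is uniform and you query $\ICDec$ on $\sum_{\pw'}\ket{\pw'}\ket{\phi}$, the values $\ICDec(\pw',\phi)$ are cipher outputs with no known secret key, so Step 2 fails. If instead you generate key pairs coherently and encrypt (the only way to know the secret keys), then once $\phi$ is measured, each $\pw_u$ is entangled with key-generation coins for $\pk_u = \ICDec(\pw_u,\phi)$. Uncomputing the cipher query and Fourier-measuring the password register then tests nothing, since the password is still entangled with those coins, and erasing them would require inverting $\KG$. More generally, by the equivalence of distinguishing and mapping, an efficient test that detected the collapse would give an efficient map from one branch to the other, i.e., coins for one password's public key from those of the other.

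So consider the simulator that just measures the queried password, sends $\TestPwd$ on it, and programs $\NewKey$ consistently when the guess is right. It is not caught, and your bound on the sum of the key-match and check-pass probabilities cannot come out of an information--disturbance argument. Relatedly, your Step 3 never uses any property of the KEM, yet this is exactly where a KEM assumption has to enter.

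The missing idea is to put the two jobs in two branches chosen by a coin the simulator never sees, and to connect them using public-key uniformity. That assumption is without loss of generality, since otherwise an offline dictionary attack breaks security against a malicious respondent. In the paper's environment, both branches make the same $\ICEnc$ query on registers $\Xr,\Yr$ holding $\pw_u$ and a public key, with copies in $\Ar,\Br$ kept by the environment.
In the dummy branch $\pk$ is uniform, so the post-measurement state $\ket{\psi} = \frac{1}{\sqrt2}\sum_u\ket{\pw_u}\ket{\pw_u}\ket{\pk_u}\ket{\pk_u}$ is computable from $\phi$ by classical decryptions, and the environment projects onto it. In the honest branch $\pk = \pk(r)$ with $r$ purified in $\Br$; the environment measures $(\pw_i,r_i)$ and, if $\pw_i = \pw_b$, compares $\cK(\Dec(\sk(r_i),c'))$ with $P_2$'s key. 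Your two versions are essentially these two branches, but they would have to issue identical-looking queries, which your $\ICDec$-based and $\ICEnc$-based versions do not.

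The ideal-world argument then runs as follows. First replace the key check by the event that the simulator's password guess equals $\pw_i$, since otherwise the ideal key is uniform. Next swap the honest-branch state for the dummy one using public-key uniformity, which applies because the simulator only touches $\Xr,\Yr$. Then use closeness to the real world plus gentle measurement to post-select onto $\ket{\psi}$. The joint state becomes $\ketbra{\psi}\otimes\sigma_{\Wr}$, so $\pw_i$ is independent of the simulator's view and its $\TestPwd$ guess is right with probability $1/2$. Hence the ideal world accepts with probability at most $7/8+\mathsf{negl}(\lambda)$. For 2-Feistel, the check stays efficient because decryption under a candidate password also recovers the encryption randomness $w$.
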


In \zcref[S]{subsec:oekeattack} we prove the attack works for the ideal cipher instantiation, as this suffices to illustrate the main idea. In \zcref[S]{subsec:attack-general} we give some details as to how it generalizes to a wide variety of OEKE variants in the QROM: this includes NoIC \cite{arriaga2025noic}, Tempo \cite{arriaga2026tempo}, OEKE/EKE \cite{EC:JanRoyXu25}, OQUAKE \cite{vos2025hybrid} and (O)CAKE \cite{ACNS:BCPRR23,hovelmanns2025cake}.

Similar to the Masny-Rindal OT case, our attack relies on an inability for the simulator to extract (in this case, the password) from a malicious initiator. Therefore, this also leaves open the possibility of game-based security. We prove some positive results when the encryption used is 2-Feistel.\footnote{We work with 2-Feistel since (1) it is more efficient and easier to instantiate than an ideal cipher, and (2) the QROM is easier to work with than the QICM.}

\begin{theorem}[Informal]
    Let $\KEM$ be any KEM that satisfies (computational) public-key uniformity and key unpredictability. Then OEKE instantiated with $\KEM$ and using 2-Feistel encryption satisfies stand-alone game-based security in the QROM against both malicious initiators and malicious respondents.\footnote{For those familiar with BPR security \cite{EC:BelPoiRog00} see \zcref[S]{sec:future}.}
\end{theorem}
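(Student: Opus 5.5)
The approach is a hybrid argument in the QROM that separately handles a malicious initiator and a malicious respondent. In each case the argument reduces the adversary's advantage beyond the trivial online-guessing bound $q_s/|\mathcal{D}|$ to breaking public-key uniformity or key unpredictability of $\KEM$. Here $q_s$ is the number of sessions the adversary actively attacks. I will model the 2-Feistel cipher as $\phi = (s, T)$ with $s = \pk \oplus H_1(\pw, T)$ and $T = r \oplus H_2(\pw, s)$, or a similar two-round structure. All random oracles are implemented as compressed oracles or as $2q$-wise independent functions, whichever is convenient.

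\textbf{Malicious respondent.} The adversary sees honest $\phi$ values and sends $(c', \tau')$.

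1. Using public-key uniformity, replace $\pk$ by a uniform string. Then $\phi$ is (up to a reprogramming step) independent of $\pw$. Concretely, I would sample $\phi$ uniformly and reprogram the Feistel oracles at the inputs determined by $\pw$. The reprogramming cost is bounded by the adaptive reprogramming lemma or the one-way-to-hiding (O2H) lemma; this costs $O(q\sqrt{2^{-n}})$ or, with the tight variant, $O(q^2 2^{-n})$.

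2. The initiator accepts only if $\tau' = H(\pw, \pk, \phi, c', K)$. For every password other than $\pw$ this value is a fresh oracle output, so I would bound acceptance for the wrong-password case by a standard QROM search bound.

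3. To bound the adversary's advantage in distinguishing $G(K)$ from random, apply key unpredictability to $K$. This is done via O2H, with the extractor measuring a random $G$ or $H$ query to obtain $K$.

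\textbf{Malicious initiator.} This is where the UC attack lives, and it is the main obstacle. The adversary sends an arbitrary $\phi$, and we cannot extract a single password from $\phi$. For a game-based notion, however, we only need that the adversary effectively tests at most one password per session. I would proceed as follows.

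1. Each candidate $\pw$ determines $\pk_{\pw} = \ICDec(\pw, \phi)$. The honest respondent encapsulates to $\pk_{\pw^*}$ and sends $\tau' = H(\pw^*, \pk_{\pw^*}, \phi, c', K')$.

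2. The adversary's success requires computing $H$ on the point containing $K'$. By key unpredictability, it can do so only if it "knows" $\sk$ for $\pk_{\pw^*}$.

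3. To argue that it can know such a secret key for essentially one password per $\phi$, I would use the advantage-tight O2H lemma. Replace $H$ and $G$ at the points involving $K'$ by independent oracles, and bound the distinguishing gain by the probability that measuring a query finds $K'$. The advantage-tightness is what keeps the bound near $1/|\mathcal{D}|$ rather than incurring a square-root loss. That probability is then split according to which password the measured query carries.

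4. The number of passwords for which the adversary can find $K'$ is then controlled by a reduction to public-key uniformity. Because the Feistel decryption outputs of $\phi$ under distinct passwords are random-oracle-masked, a reduction can embed a KEM challenge public key as $\pk_{\pw}$ for a randomly chosen $\pw$ and reprogram $H_1$ accordingly. Winning for more than one password then yields a key-unpredictability break with probability at least the excess advantage divided by $|\mathcal{D}|$.

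\textbf{Expected difficulty.} Getting a bound of the form $q_s/|\mathcal{D}| + \mathsf{negl}$, rather than a square-root-lossy bound, is the technical core. I would attack it first by applying the advantage-tight O2H lemma with the hidden set being $\{K'\}$ per session and a semi-classical oracle puncturing argument, summing over sessions with a union bound.
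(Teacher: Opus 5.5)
\textbf{Malicious initiator.} Your step 4 has a genuine gap. The experiment has a single session: after $\phi$ is sent, one $\pw \gets D$ is used, and you must show that the adversary outputs $K'$ with probability at most $2^{-\gamma}+\mathsf{negl}$. ``Winning for more than one password'' is not an event in that experiment. To charge $2^{-\gamma}$ to ``the password the adversary knows'' and the excess to a KEM break, you essentially have two options. The first is to identify that password before $\pw$ is sampled, i.e.\ to extract it. The paper's dummy-branch attack shows this is impossible in the QROM. The second is to rewind, and this is what the paper does, in two steps.
\begin{enumerate}
\item It shows that a \emph{two-output} game is won with negligible probability. In this game, after $\phi$ the adversary receives encapsulations $c_1,c_2$ under $\pk_{\pw_1},\pk_{\pw_2}$ for distinct $\pw_1,\pw_2$, together with the corresponding salts, derived-key values and tags, and must output both keys.
\item It proves a quantum rewinding lemma generalizing Dall'Agnol--Spooner. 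If $\E_{\pw\gets D}\Tr(P_\pw\sigma)=2^{-\gamma}+\epsilon$, then running the adversary's unitary for $\pw_1$, measuring the key register, and applying $U_{\pw_2}U_{\pw_1}^\dagger$ recovers both keys with probability at least $\epsilon^3$.
\end{enumerate}
This rewinding step is where tightness comes from. The lossy measure-and-reprogram reduction is affordable only because its target bound is negligible.

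Your embedding claim is also false as stated. For the password under which the adversary honestly encrypted, $H_1(\pw,w)$ is queried before $T$ exists, so $\pk_\pw$ cannot be reprogrammed to a challenge key. ``Random-oracle masking'' does not hold for every password. What rescues the two-output reduction is an ordering fact specific to the 2-Feistel. Suppose that for both $i$ the query $H_1(\pw_i,u_i)$ is measured before $H_2(\pw_i,T)$. Then $u_1\oplus H_2(\pw_1,T)=u_2\oplus H_2(\pw_2,T)$ must hold for a value programmed after $u_1,u_2$ are fixed, which happens with probability $2^{-\ell}$. So for some $i^*$ the $H_2$ point comes first, and $H_1(\pw_{i^*},u_{i^*})$ can be programmed to $(\pk^*)^{-1}\odot T$. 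This uses 4-point measure-and-reprogram. Public-key uniformity only covers the non-uniform programmed value; key unpredictability supplies the contradiction.

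Two smaller points: to apply advantage-tight O2H you need $\Hkey$ independent of the first-stage state (the paper salts it), and you need a search bound that holds given $\Hkey(w,K')$ and $\tau'$.

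\textbf{Malicious respondent.} Step 1 is right. In fact, for fixed oracles $\cE_\TF(\pw,\cdot\,;\cdot)$ is a bijection, so uniform $\pk$ makes $\phi$ exactly uniform and independent of $\pw$, with no reprogramming needed. Step 2, however, does not give the required bound. The adversary can evaluate $\Htag$ on tuples for many candidate passwords and then send one tag. You must show that the acceptance probability is at most the password-guessing probability plus an \emph{additive} negligible term. An O2H-style search bound gives something like $q\sqrt{2^{-\gamma}}$ instead.

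The paper instead simulates $\Htag$ with the extractable QROM simulator of Don--Fehr--Majenz--Schaffner and extracts a preimage of $\tau'$ right after the adversary outputs. Moving the extraction ahead of the honest initiator's verification query costs only the $O(2^{-\lambda/2})$ almost-commutation error, and extraction correctness costs $O(q^3/2^\lambda)$. This turns acceptance into password guessing.

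Your step 3 is both inapplicable and unnecessary. Since $c'$ is adversarial, key unpredictability says nothing about $\Dec(\sk,c')$; an adversary who guessed $\pw$ legitimately knows $K$. And because $\chall=\bot$ on rejection, $\Pr[\mathsf{win}]\le 1/2+\Pr[\mathsf{Acc}]/2$ already suffices.
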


\paragraph{Advantage-tight one-way to hiding.} Along the way to our positive results on game-based OEKE security, we establish a new one-way to hiding lemma that to the best of our knowledge has not appeared before, and is of independent interest. One-way to hiding lemmas \cite{EC:Unruh14,C:AmbHamUnr19} are essentially search-to-decision reductions with respect to quantum-accessible oracles: Given a distribution over two classical functions $G$ and $H$ that differ on a set $S$, they relate the maximum probability $p$ that an adversary can find an $x \in S$ given quantum access to $G,H$ to the maximum distinguishing advantage $f$ that an adversary has given access to $G$ or $H$. A typical statement (e.g. \cite[Lemma~3.3]{EC:KSSSS20}) upper bounds $f$ as a function of $p$ and the number of queries $q$ made by the adversary (or its query depth $d$).

Here, we show that $f$ is upper-bounded by $p$ \emph{itself} (and independent of $q$ or $d$), up to an additive negligible term. This statement comes with two caveats: (1) The set $S$ must be \emph{efficiently checkable} (though this is also required to obtain $f \leq p$ in the classical setting), and (2) the reduction (search adversary) may run in time much longer than (but still polynomial in) the decision adversary, in particular, the multiplicative blowup can depend on the number of queries $q$.

Nevertheless, we establish a completely ``advantage-tight'' one-way to hiding theorem as stated below, which turns out to be crucial for establishing the tight post-quantum game-based security of OEKE.

\begin{theorem}[Informal]
    Let $D$ be any sampler that outputs functions $H,G$ that differ on a set $S$, along with an arbitrary quantum state $\rho$. Suppose that for all QPT $B$,  \[\Pr_{H,G,S,\rho}[x \in S : x \gets B^{H,G,M_S}(\rho)] \leq p(\lambda) + \mathsf{negl}(\lambda),\] where $M_S$ is the membership-checking predicate for the set $S$. Then for all QPT $A$,\[\Big|\Pr_{H,G,S,\rho \gets D}[A^H(\rho) = 1] -  \Pr_{H,G,S,\rho \gets D}[A^G(\rho) = 1]\Big| \leq p(\lambda) + \mathsf{negl}(\lambda).\] 

\end{theorem}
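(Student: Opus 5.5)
The plan is to build, from any distinguisher $A$ with advantage $f$, a search algorithm $B^{H,G,M_S}(\rho)$ that outputs an element of $S$ with probability at least $f-\mathsf{negl}(\lambda)$. Standard one-way to hiding (e.g.\ semi-classical O2H of \cite{C:AmbHamUnr19}) only yields $\Pr[\mathsf{Find}]\ge f^2/(4q)$, so the new ingredient must be an amplification step. Because $B$ receives only a single copy of $\rho$, we cannot amplify by independent repetition. Instead we amplify by rewinding $A$ coherently, which is possible because $B$ holds $H$, $G$ and $M_S$ and can therefore implement both $A$'s unitary and its inverse.

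\textbf{Step 1 (structural decomposition).} Purify $A$ as $U_A^{O}=U_q O U_{q-1}\cdots O U_0$ acting on $\rho$, where $O$ is the oracle unitary. Define the punctured oracle $O_{\setminus S}$, which applies $O$ and additionally coherently records in a fresh qubit whether the query lies in $S$ (computed via $M_S$). Let $\Pi_{\neg\mathsf{Find}}$ denote the projector onto all record qubits being $0$.

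On the image of $\Pi_{\neg\mathsf{Find}}$, the punctured runs with $H$ and with $G$ coincide exactly, because $H$ and $G$ agree outside $S$. Consequently, if we write $\ket{\psi}$ for the purified $\rho$, the branch of $\ket{\psi}$ that never touches $S$ produces identical outputs under $H$ and under $G$.

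From this I would aim to prove the inequality
\[
f\;\le\;\|(I-P)\ket{\psi}\|^2 + \mathsf{negl},
\]
where $P$ is the projector onto the subspace of initial states that never leave the no-find subspace under the punctured dynamics. That subspace is the invariant subspace in a Jordan-lemma decomposition of the two projectors $\Pi_0=\ket{0}\!\bra{0}_{\text{rec}}$ and $\Pi_1=(U^{H}_{\setminus S})^\dagger\,\Pi_{\neg\mathsf{Find}}\,U^{H}_{\setminus S}$. Note that the un-punctured outputs are obtained from the punctured ones by discarding the record registers. The intended meaning of the inequality is that the trace distance between the two output distributions is at most the weight of $\ket{\psi}$ lying outside the invariant part, rather than its square root.

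\textbf{Step 2 (alternating-projection extraction).} $B$ prepares $\ket{\psi}\ket{0}_{\text{rec}}$ and then repeatedly runs the punctured $U^{H}_{\setminus S}$, measures the record registers (each register is $q$ qubits per run), and applies the inverse. Whenever some record reads $1$, $B$ measures the corresponding query register; that query is then in $S$, which $B$ confirms with $M_S$ before outputting it. When all records read $0$, $B$ continues alternating.

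By Jordan's lemma, the component of $\ket{\psi}$ in each two-dimensional block with principal angle $\theta_j\neq 0$ is eventually ejected into a find outcome, while the invariant component (angle $0$) never is. Blocks with very small $\theta_j$ need $\mathrm{poly}(\lambda)/\theta_j^2$ rounds, so we cut off at $T=\mathrm{poly}(\lambda,q)$ rounds. We then argue that blocks with $\theta_j<1/\mathrm{poly}$ contribute only negligibly to $f$: on such blocks, a single forward run is $\theta_j$-close to no-find, and the resulting $H$-versus-$G$ discrepancy is $O(\theta_j)$ times the block weight. This is where the blowup in $B$'s running time depends on $q$, matching the caveat in the statement.

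\textbf{Step 3 (conclusion).} Combining the two steps gives $\Pr[B\text{ outputs }x\in S]\ge \|(I-P)\ket{\psi}\|^2-\mathsf{negl}\ge f-\mathsf{negl}$. Applying the hypothesis to $B$ then yields $f\le p+\mathsf{negl}$.

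\textbf{Main obstacle.} I expect the hardest part to be Step 1, namely obtaining a \emph{linear}, not square-root, relation between $f$ and the non-invariant weight. The naive hybrid argument loses a square root, because it compares amplitudes rather than probabilities. My first attempt would be to show directly that the output projector $\Pi_{\mathrm{out}}$ commutes with the Jordan decomposition up to the non-invariant blocks. This should follow because on the invariant subspace the $H$-run and the $G$-run are literally the same unitary, so the two acceptance probabilities differ only by contributions supported on the complement of the invariant subspace. Each such contribution is bounded by its squared norm, once we also account for the small-angle cutoff argued in Step 2.
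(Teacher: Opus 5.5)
\textbf{The gap is in Step 1.} The inequality $f\le\|(I-P)\ket{\psi}\|^2+\mathsf{negl}$, with $P$ the forward never-find subspace, is false. The problem is structural, not technical: your extractor $B$ never uses $A$'s output register. No search procedure built only from the \emph{forward} find event (queries made while running $A$ from $\ket{\psi}$) can bound the advantage linearly.

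Here is a one-sample counterexample.
\begin{itemize}
\item Take $q=1$, $S=\{x^*\}$, $H\equiv 0$, and $G$ equal to $H$ except $G(x^*)=1$. With the answer qubit in $\ket{-}$, the query acts on the input register as $I$ under $H$ and as $I-2\ket{x^*}\bra{x^*}$ under $G$.
\item Let the input be $\ket{\psi}=\sqrt{1-\epsilon}\ket{x_0}+\sqrt{\epsilon}\ket{x^*}$, and let $A$ accept iff it projects onto $(\ket{x_0}+\ket{x^*})/\sqrt{2}$.
\item Then $\Pr[A^H=1]-\Pr[A^G=1]=2\sqrt{\epsilon(1-\epsilon)}$.
\item The weight of $\ket{\psi}$ outside your invariant subspace is exactly $\epsilon$, and your alternating-projection $B$ outputs $x^*$ with probability exactly $\epsilon$.
\end{itemize}

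Your commutation heuristic misses cross terms. Write $E=A_H^\dagger\Pi_{\acc}A_H-A_G^\dagger\Pi_{\acc}A_G$. An invariant $\ket{v}$ does satisfy $A_H\ket{v}=A_G\ket{v}$, but then $E\ket{v}=(A_H-A_G)^\dagger\Pi_{\acc}A_H\ket{v}$. This is nonzero whenever rewinding the \emph{accepted} part of the output touches $S$; in the example, $E\ket{x_0}=\ket{x^*}$. So $\bra{w}E\ket{v}\neq 0$ for non-invariant $\ket{w}$, and these terms are of order the square root of the non-invariant weight. The small-angle cutoff in your Step 2 has the same defect, since it too needs control of $E$ on both sides of the cut.

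\textbf{The missing idea is the \emph{reverse} search branch.} The paper writes $\Delta=A_H-A_G=\sum_t R_t(O_H-O_G)F_t$ and $E=A_H^\dagger\Pi_{\acc}\Delta+\Delta^\dagger\Pi_{\acc}A_G$.
\begin{itemize}
\item The first term is controlled by forward finding, $\|\Pi_S F_t\ket{\phi}\|$.
\item The second is controlled only by $\|\Pi_S R_t^\dagger\Pi_{\acc}A_G\ket{\phi}\|$. That is, run $A_G$, project onto acceptance, and rewind to see whether a query lands in $S$. In the example this finds $x^*$ with probability about $1/4$ for small $\epsilon$.
\end{itemize}
The paper places the forward and reverse history states in superposition inside one unitary $U_A$, using a branch qubit, a clock register, and a qubit implementing $\Pi_{\acc}$ unitarily. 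It then proves $\|E\ket{\phi}\|\le 4q\|\Pi_{\good}U_A\Pi_{\inp}\ket{\phi}\|$ for \emph{every} $\ket{\phi}$. Let $P$ be the span of right singular vectors of $\Pi_{\good}U_A\Pi_{\inp}$ with singular value at least $1/(8r)$. This gives $\|E(I-P)\|\le q/(2r)$, then $\|E-PEP\|\le q/r$ by Hermiticity, and finally $|\bra{\psi}E\ket{\psi}|\le\|P\ket{\psi}\|^2+q/r$ using $-I\preceq E\preceq I$. This two-sided operator bound is exactly what kills your cross terms.

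Your Jordan-block alternating-projection extraction with a polynomial angle cutoff is a reasonable substitute for the paper's QSVT-based amplification lemma. However, it must be applied to this combined forward/reverse unitary, not to the forward punctured run alone.
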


\section{Technical overview}

\subsection{An illustrative commitment scheme}

Our main techniques can be appreciated by studying a simple bit commitment scheme, divorced from the particulars of OT and PAKE. In particular, we abstract out a commitment scheme underlying the Masny-Rindal OT construction \cite{masnyrindal} and show that it satisfies the following combination of properties, in addition to being computationally hiding.

\begin{enumerate}
    \item Extractable against classical committers.
    \item Computationally collapse-binding (the post-quantum analogue of standard binding). 
    \item Provably not extractable against quantum committers.
\end{enumerate}

In this overview, we'll present the commitment scheme, argue why it is \emph{not} quantumly-extractable, and then argue why it \emph{is} collapse-binding. We then touch on how these proofs of non-extractability and collapse-binding are at the heart of our results showing that Masny-Rindal OT and OEKE are not simulation-secure against quantum adversaries but do satisfy some notions of game-based security.

\paragraph{The scheme.} Let $G : \{0,1\}^\lambda \to \{0,1\}^{n}$ be an injective PRG, and let $H : \{0,1\}^{n} \to \{0,1\}^{n}$ be a random oracle. To commit to a bit $b$, sample a seed $s_b \gets \{0,1\}^\lambda$, a string $t_{1-b} \gets \{0,1\}^{n}$, and define $t_b \coloneqq G(s_b) \oplus H(t_{1-b})$. The commitment consists of the two strings $(t_0,t_1)$. To open, send $s_b$, allowing the receiver to verify that $t_b = G(s_b) \oplus H(t_{1-b})$.

It is not hard to see that $(t_0,t_1)$ computationally hides the committer's choice of $b$, since, after switching $G(s_b)$ to uniform, both $t_0$ and $t_1$  are distributed uniformly at random. Moreover, \cite{masnyrindal} shows as a component of their OT security proof that one can extract the bit $b$ from any classical committer by observing the \emph{order} in which they make their random oracle queries. In particular, the committer can ``program'' $t_b$ by querying $t_{1-b}$ first, but then $H(t_b) \oplus t_{1-b}$, which is supposed to equal $G(s_{1-b})$, becomes uncontrollable. Therefore, if $t_{1-b}$ was queried first, the only bit that the committer can hope to open is $b$.

\subsection{Failure to quantumly extract}

The first hint that this extraction strategy may fail in the quantum setting is an inability to define ``query order'' in any meaningful way. Indeed, the quantum random oracle allows for superposition queries, meaning that an adversarial committer could query $H(t_0)$ and $H(t_1)$ at the same time\footnote{The same observation was made in \cite[Section~4~3.1.2]{verschoor2022quantum}.}. One natural attempt to thwart the extractor is to simply run the honest committer's strategy coherently over a uniform choice of the bit $b$, and delay the measurement of $b$ until the opening stage.

\paragraph{Definition of extractability.} Before further investigating this strategy, we recall the formal notion of an extractable commitment. First, any adversarial committer $\cA$ is split into two parts: (1) the commit stage adversary $\cA_\Com$ is initialized with some state $\rho$ on register $\Cr$ and outputs $(t_0,t_1)$ and an updated state $\rho^\Real_\Cr$, and (2) the opening stage adversary $\cA_\Open$ takes $\rho^\Real_\Cr$ and outputs a bit $b$ and a candidate opening $s_b$.  

For any such adversary, there must exist an extractor that takes as input $\rho_\Cr$ and produces a bit $b^*$ and a state $\rho^\Ideal_\Cr$ satisfying the following two properties:
\begin{enumerate}
    \item The probability that $\cA_\Open(\rho^\Ideal)$ produces a successful opening to $1-b^*$ is negligible.
    \item $\rho^\Ideal_\Cr$ is computationally indistinguishable from the state $\rho^\Real_\Cr$ output by the real committer. In fact, this must hold even in the presence of a register $\Dr$ that may be entangled with $\rho_\Cr$. This indistinguishability in the presence of an auxiliary state (sometimes referred to as the ``environment'') is a standard property in simulation-based security and we will use $\Dr$ in our impossibility result.
\end{enumerate}

We allow the extractor to make arbitrary (potentially non-black-box) use of the code of $\cA_\Com,\cA_\Open$ and respond to the adversary's random oracle queries using any strategy.

\paragraph{The coherent honest committer.} Now, coming back to our candidate commitment, we see that running the honest commitment strategy in equal superposition over the choice of $b$ gives (omitting normalization)
\begin{align*}&\ket{0}_\Br\sum_{s_0 \in \{0,1\}^{\lambda},t_1 \in \{0,1\}^{n}}\ket{s_0}_\Sreg\ket{G(s_0) \oplus H(t_1)}_{\Treg_0}\ket{t_1}_{\Treg_1} \\ + &\ket{1}_\Br\sum_{s_1 \in \{0,1\}^{\lambda},t_0 \in \{0,1\}^{n}}\ket{s_1}_\Sreg\ket{t_0}_{\Treg_0}\ket{G(s_1) \oplus H(t_0)}_{\Treg_1}.\end{align*}
Measuring registers $\Treg_0,\Treg_1$ yields a commitment string $(t_0,t_1)$ and a left-over committer state
\[\sum_{b : t_b \oplus H(t_{1-b}) \in \mathsf{Im}(G)}\ket{b}_\Br\ket*{s_b = G^{-1}(t_b \oplus H(t_{1-b}))}_\Sreg,\]

which is then passed to $\cA_\Open$. It appears we may already have a simple counterexample to extractability: $\cA_\Open$ can simply measure in the standard basis to obtain a valid $(b,s_b)$, each with probability 1/2 (where the probability is taken over its entire execution). So if the extractor makes a guess $b^*$ before this point, it seems that it will be wrong with probability 1/2, violating the property (1) above.

However, recall that the extractor has full access to the committer's state on registers $\Br,\Sreg$ and can apply an arbitrary operation before sending it to $\cA_\Open$, as long as this operation preserves computational indistinguishability. A natural extraction attempt would therefore have the extractor \emph{pre-measure} this state in the standard basis, producing $\ket*{b^*,s_{b^*}}$, and fixing the single choice of $b^*$ that $\cA_\Open$ will be able to open. In fact, this extractor \emph{does} satisfy property (2): In the following sub-section we will argue that no QPT adversary will be able to recover both strings $s_0$ and $s_1$ and thus, by the quantum equivalence of mapping and distinguishing, \[\frac{1}{\sqrt{2}}\left(\ket{0,s_0} + \ket{1,s_1}\right) \approx_c \frac{1}{2}\ketbra{0,s_0} +\frac{1}{2}\ketbra{1,s_1}.\] Hence, we will have to alter the adversarial strategy if we hope to rule out the existence of \emph{any} extractor.
\paragraph{Adding a dummy branch.} One way to see the issue with the above attempt is that the distinguisher does not have a computationally accessible description of the committer's state, and thus cannot detect if the extractor is messing with it. To remedy this, we consider an adversarial strategy that runs the following two strategies with $1/2$ probability: (0) a ``dummy'' strategy that replaces $G(s_0), G(s_1)$ with uniform strings $u_0,u_1$, and (1) the coherent honest committer's strategy given above. The key is that branch (1) will enable successful opening of either $b = 0$ or $b = 1$, while branch (0) will enable the distinguisher to efficiently check that the extractor did not collapse the state. Even though successful opening and efficient checking cannot be performed simultaneously, we will be able to argue that the extractor cannot distinguish between the two branches and thus cannot tailor their strategy to one or the other. Here, we make use of the inaccessible distinguisher register $\Dr$, as follows.

Consider the following mixed state over the uniformly random choice of branch $c \in \{0,1\}$, again omitting normalization: 
\begin{align*}
    &\text{if } c = 0: \quad \sum_{b \in \{0,1\},u_b \in \{0,1\}^{n},t_{1-b} \in \{0,1\}^{n}}\ket{u_b}_{\Dr}\ket{b}_\Br\ket{u_b,t_{1-b}}_\Treg, \\ &\text{if } c = 1: \quad \sum_{b \in \{0,1\},s_b \in \{0,1\}^\lambda,t_{1-b} \in \{0,1\}^{n}}\ket{s_b}_{\Dr}\ket{b}_\Br\ket{G(s_b),t_{1-b}}_\Treg. 
\end{align*}

Here, $\Dr $ is the register held by the distinguisher, and $\Cr = (\Br,\Treg)$ is given to the committer. Completing the honest committer's strategy coherently and measuring the resulting commitment $(t_0,t_1)$ yields the left-over state
\begin{align*}
    &\text{if } c = 0: \quad \frac{1}{\sqrt{2}}\left(\ket{0,t_0 \oplus H(t_1)} + \ket{1,t_1 \oplus H(t_0)}\right), \\ &\text{if } c = 1: \quad \sum_b\beta_b\ket{b,s_b},
\end{align*}
for some weights $\beta_b$ whose values will not be important for the argument.  If $c = 0$, the distinguisher uses its knowledge of $t_0,t_1$ to project its remaining state onto $\frac{1}{\sqrt{2}}(\ket{0,t_0 \oplus H(t_1)} + \ket{1,t_1 \oplus H(t_0)})$. Since this measurement accepts with probability 1 in the real execution, any valid extractor must output a state in which this measurement accepts with probability $1-\mathsf{negl}(\lambda)$. Then we can appeal to the pseudorandomness of $G$ to argue that the extractor, who just sees register $\Cr$, cannot distinguish between the two branches, and hence, the state on branch $c = 1$ can be undetectably replaced with the state on branch $c = 0$. At this point, measuring the $\Br$ register of the state produces a uniformly random bit $b$, \emph{independent} of whatever choice $b^*$ the extractor fixed earlier in the experiment, yielding our final contradiction.

\subsection{Collapse-binding}

Despite the provable inability to extract an adversarial committer's bit, we show that this commitment scheme \emph{does} retain meaningful security against quantum adversaries. In particular, the commitment satisfies collapse-binding, which is the standard notion of computational post-quantum binding.

Below, we'll just argue that it satisfies post-quantum ``classical-style'' binding, which stipulates that it is hard for any QPT adversary to output a commitment $(t_0,t_1)$ along with valid openings $s_0,s_1$ for each bit, meaning that $G(s_0) = t_0 \oplus H(t_1)$ and $G(s_1) = t_1 \oplus H(t_0)$. Standard techniques (e.g. \cite{dall2023necessity}) can be used to show that, since there is only one valid opening per bit,\footnote{One could also generalize the construction to use any $G$ that is one-way and \emph{collapsing}, but we stick with the case of an injective PRG, which trivially satisfies both properties, for the purpose of this overview.} classical-style binding of this commitment suffices to establish collapse-binding, which is equivalent to the following notion of sum-binding: After the commit stage, the challenger samples a $b^* \gets \{0,1\}$, and we require that no QPT adversary can output a valid opening $s_{b^*}$ except with probability $1/2 + \mathsf{negl}(\lambda)$.

\paragraph{Measure-and-reprogram.} We show classical-style binding using an appropriate application of the measure-and-reprogram technique \cite{C:DFMS19,C:DonFehMaj20}, which will allow us to \emph{program} in an external challenge $G(s)$ for an unknown $s \gets \{0,1\}^\lambda$. If the adversary then manages to output $s$ as one of their pair $(s_0,s_1)$, they immediately contradict the pseudorandomness of the PRG $G$.

In more detail, measure-and-reprogram considers the following scenario. An adversary queries a random oracle $H$ and then outputs a set of $k$ oracle inputs $\{x_i\}_{i \in [k]}$ and some auxiliary information $z$. Consider any predicate $V$ over $\{x_i\}_{i}, \{H(x_i)\}_{i}$, and $z$, and suppose that $\Pr[V(\{x_i\}_i,\{H(x_i)\}_i,z) = 1] = \epsilon$. Then, consider a \emph{simulated} experiment where the adversary is run as in the real experiment except that the input register to $k$ randomly-chosen queries are fully measured in the standard basis to yield $\{x_i^*\}_{i}$, and the $H(x_i^*)$ are re-programmed to fresh uniformly random outputs $\{y^*_i\}_{i}$. Essentially (eliding certain details for simplicity of presentation), it holds that in this simulated experiment, $\Pr[V(\{x^*_i\}_i,\{y_i^*\}_i,z) = 1] \approx \frac{\epsilon}{q^{2k}}$, where $q$ is the number of queries made by the adversary.

For arguing the classical-style binding of our commitment, we consider the following predicate: $V$ takes as input the commitment $(t_0,t_1)$ and the openings $(s_0,s_1)$ and outputs 1 iff $G(s_0) = t_0 \oplus H(t_1)$ and $G(s_1) = t_1 \oplus H(t_0)$. Thus, $k = 2$ in this example. After moving to the simulated experiment, we can consider the following re-programming strategy: Upon measuring the first input $t^*_1$, program $H(t^*_1) = y^*$ for a uniformly random $y^*$, and upon measuring the second input $t^*_2$, program $H(t^*_2) = t^*_1 \oplus G(s)$, where $G(s)$ is sampled externally and given to the simulator. The measure-and-reprogram lemma establishes that $V$ still passes with $1/\mathsf{poly}(\lambda)$ probability, which means that $s \in \{s_0,s_1\}$, thus contradicting the security of $G$.

\subsection{Applications}

Next, we argue how the techniques described above can be applied to both Masny-Rindal OT and OEKE to establish their simulation-based insecurity and game-based security.

\paragraph{Masny-Rindal OT: Simulation-based insecurity.} The first message of the  \cite{masnyrindal} protocol can be seen as taking the above commitment scheme and replacing the PRG $G$ with the key generation function $\mathsf{KeyGen}(1^\lambda;r) \to (\pk,\sk)$ of a KEM. That is, the receiver samples random coins $r$, uses them to (deterministically) generate a public key $\pk$, and applies the rest of the commitment algorithm on choice bit $b$ and string $\pk$ to obtain $(t_0,t_1)$. Given $(t_0,t_1)$, the OT sender derives the two possible public keys $\pk_0 = t_0 \oplus H(t_1), \pk_1 = t_1 \oplus H(t_0)$ and then generates and returns ciphertexts $(c_0,c_1)$ encrypting output keys $K_0,K_1$.

The impossibility of extracting a  malicious receiver's choice bit $b^*$ follows essentially the same argument as outlined above for the commitment scheme, with the function $\mathsf{KeyGen}: r \to (\pk,*)$ in place of the PRG $G: s \to G(s)$. Arguing that no simulator can distinguish between the receiver's real and dummy branches still requires assuming that $\pk$ is computationally indistinguishable from random. However, this assumption comes for free: If $\pk$ was distinguishable from uniform, then the OT protocol would be insecure against a malicious \emph{sender}. Indeed, the sender can compute the two candidate public keys $\pk_0,\pk_1$, where $\pk_b$ is generated by $\mathsf{KeyGen}$, and $\pk_{1-b}$ is sampled uniformly at random. Distinguishing $\pk_b$ from random with any non-negligible advantage thus gives the same advantage in guessing the honest receiver's choice bit $b$.

\paragraph{Masny-Rindal OT: Game-based security.} Our argument for the game-based security of the protocol broadly follows the measure-and-reprogram argument introduced above. Against a malicious receiver, we ask that the following game cannot be won with probability better than $1/2 + \mathsf{negl}(\lambda)$: After the honest sender returns $c_0,c_1$, the challenger samples a bit $b^* \gets \{0,1\}$ and the receiver succeeds if they manage to output a correct guess for $K_{b^*}$. Note that this can be seen as sum-binding for a commitment where $K_{b^*}$ is regarded as the valid opening. In a similar manner as above, we can use measure-and-reprogram to reduce to the security of the KEM, though the game is not simply inverting the function $\mathsf{KeyGen}$. In particular, the reduction will ultimately need to take $\pk_{b^*}$ and a ciphertext $c_{b^*}$ from its external challenger, where $c_{b^*}$ must be transmitted to the adversary as a part of the sender's OT message. This actually introduces a slight subtlety in the argument: The reprogramming of the random oracle $H$ must occur \emph{before} this message is sent. Since the adversary may continue to make random oracle queries after the message, we appeal to a version of measure-and-reprogram in which the predicate $V$ may also make oracle queries to $H$, and we refer the reader to \zcref[S]{subsec:MR-game-based} for further details.

\paragraph{Encrypted key exchange: Simulation-based insecurity.} To generalize further to the setting of (O)EKE, we first abstract out an invertible ``cipher'' underlying our commitment and the \cite{masnyrindal} OT protocol. Given a bit $b$ and the public key $\pk$, the receiver applies the (randomized) function \[\cE: (b,\pk) \to (t_0,t_1) \qquad \text{where } t_{1-b} \gets \{0,1\}^n, t_b = \pk \oplus H(t_{1-b}).\] Although $\cE$ is randomized, there exists a deterministic inversion operation $\cD$ that given any choice of $b \in \{0,1\}$, returns the corresponding input $\pk$:  \[\cD : (b,(t_0,t_1)) \to \pk \qquad \text{where } \pk = t_b \oplus H(t_{1-b}).\]

In OEKE, a KEM public key $\pk$ is enciphered not under a single bit $b$, but under a (multi-bit) \emph{password} $\pw$. While the above choice of $(\cE,\cD)$ does not efficiently generalize to multi-bit passwords, researchers have considered other methods that do. Perhaps most simply, one can model $(\cE,\cD)$ as an ideal cipher, and have the PAKE initiator send $\phi = \cE(\pw,\pk)$ and the respondent recover $\pk = \cD(\pw,\phi)$. A simple and highly-efficient alternative that has been a focus of recent study instantiates $(\cE,\cD)$ as a 2-Feistel, which directly uses a random oracle $H$ as follows.
\[\cE(\pw,\pk) \to (T, r \oplus H(\pw,T)) \qquad \text{where } r \gets \{0,1\}^n, T = H(\pw,r) \oplus \pk.\] Notice that, given $\pw$, $\cD$ can recover the randomness $r$ used by $\cE$, which then allows it to recover $\pk$. And as mentioned in the introduction, there are other tweaks to $(\cE,\cD)$ that have been proposed in the literature, for instance the Tempo protocol \cite{arriaga2026tempo}.

(Un)fortunately, our attack on extractability is essentially agnostic to the choice of $(\cE,\cD)$ utilized by the protocol (as long as $\cD$ can recover any randomness sampled by $\cE$), and we show that for all known variants, there does not exist a simulator that can extract the choice of $\pw$ from a malicious initiator, invalidating claims of post-quantum UC-security. Similar to our negative results for Masny-Rindal OT, our impossibility here holds for \emph{any} choice of KEM: While the impossibility of extraction requires the KEM to have pseudorandom public keys, this assumption is already required to obtain security against offline password guessing attacks.

\paragraph{Encrypted key exchange: Game-based security.} We next study the post-quantum \emph{game-based} security of OEKE, which has thus far remained an open question, even when instantiated with a post-quantum KEM. In this work, we focus on the 2-Feistel variant, and consider security in the quantum random oracle model. We separately model security against malicious initiators and malicious respondents and in this overview, we'll focus on the more challenging case of malicious initiators. The security definition we target is as follows. Suppose the honest respondent samples their password $\pw \gets D$ from some distribution $D$ with min-entropy $\delta$. Then the best that we can hope for is that any malicious initiator, after interacting with the respondent, can distinguish the respondent's key from a uniformly random string with advantage at most $2^{-\delta} + \mathsf{negl}(\lambda)$. This captures the intuition that the best an adversary can do is make a single password guess $\pw'$ and, if $\pw' \neq \pw$, they will end up with $\mathsf{negl}(\lambda)$ information about the honest party's key. 

We begin by applying a similar measure-and-reprogram technique as above (though this requires extra care due to the structure of the 2-Feistel) to establish the following claim: Suppose a malicious initiator after sending its first message $\phi$ is challenged with two different respondent messages $c_0,c_1$ encrypted using two different passwords $\pw_0,\pw_1$ and yielding respondent keys $K_0,K_1$. Then, the probability that the initiator can output \emph{both} $K_0,K_1$ is $\mathsf{negl}(\lambda)$.

In order to translate this $\mathsf{negl}(\lambda)$ search security into a $2^{-\delta} + \mathsf{negl}(\lambda)$ distinguishing bound, we apply two quantum information techniques:

\begin{itemize}
    \item We derive a generalization of the mapping-distinguishing bound from \cite{dall2023necessity} that allows us to reason about the following scenario. Suppose that after $\phi$ is sent by the malicious initiator, a password $\pw \gets D$ is sampled and used to produce respondent message $c$ and output $K$. By sampling \emph{two} independent passwords $\pw_0,\pw_1$ from $D$ and then \emph{rewinding} the initiator, we can turn any initiator that predicts $K$ with better than $2^{-\delta} + \mathsf{negl}(\lambda)$ probability into one that can find both $K_0$ and $K_1$ with better than $\mathsf{negl}(\lambda)$ probability. 
    \item We next have to translate the search bound of $2^{-\delta} + \mathsf{negl}(\lambda)$ into (ideally) the \emph{same} distinguishing bound $2^{-\delta} + \mathsf{negl}(\lambda)$. In order to do so, we apply our advantage-tight one-way to hiding theorem (\zcref[S]{thm:search-to-decision}), which builds on the ``measure-rewind-measure'' technique of \cite{EC:KSSSS20} and quantum rewinding \cite{GSLV}. We defer a more in-depth overview of this theorem to \zcref[S]{sec:oth}.
    
\end{itemize}
\subsection{Discussion}

Broadly speaking, our work adds to the growing list of methods that are known to separate the quantum and classical adversarial models, even under post-quantum assumptions, following e.g. \cite{EC:YamZha21,FOCS:YamZha22,LombardiMQW22}. A major contribution of this work is that we did not design a (contrived) protocol, but showed that the security of practically-relevant PAKE and OT proposals fails to translate to the post-quantum setting.

For example, one could likely construct a contrived commitment scheme that is extractable against classical adversaries but not against quantum adversaries by incorporating a proof of quantumness protocol (e.g. \cite{BrakerskiCMVV18,FOCS:YamZha22}) into the design of the commitment. However, we show that there is \emph{already} a natural and widely used commitment scheme admitting such a separation. 

We believe that this adds even more weight to the argument that proofs of security in the \emph{classical} random oracle model (even from post-quantum assumptions) have limited meaning against quantum adversaries, and that the quantum adversarial model must be taken seriously when designing protocols based on post-quantum assumptions. In particular, it is crucial to gain a better understanding of the post-quantum security of (O)EKE as the community continues with standardization efforts.

\subsection{Future work}\label{sec:future}

Our work raises several directions for further research, a few of which we describe here.

\begin{itemize}
    \item \textbf{Consequence of our separation.} While we exhibit a formal breakdown of simulation-based security for (O)EKE and Masny-Rindal OT, our attacks don't have immediately ``obvious'' consequences. Indeed, we show that the protocols do retain some notions of game-based security. It would be interesting to find examples of broader contexts in which using (O)EKE or Masny-Rindal OT as a building block results in clear security issues stemming from the inability to extract the PAKE or OT input from the adversary. 
    \item \textbf{Alternative constructions.} Is there a practically-efficient and truly post-quantum alternative to the ``query-order'' extraction technique underlying Masny-Rindal OT and (O)EKE? In fact, can we come up with \emph{any} practically-efficient compiler from KEM to (two-message) OT or PAKE in the quantum random oracle model that achieves simulation-based security?
    \item \textbf{Post-quantum BPR security.} While we establish stand-alone game-based security for OEKE, the gold standard for game-based PAKE security considers a man-in-the-middle attacker and the fully \emph{concurrent} setting, following the definitions of \cite{EC:BelPoiRog00}. We leave the full BPR security of (2-Feistel-based) OEKE as an important direction for future work. 
    \item \textbf{Security in the quantum ideal cipher model.} One can also ask about the post-quantum game-based security of (O)EKE when instantiated with an \emph{ideal cipher}, especially given recent techniques for analyzing quantum-accessible random permutations (e.g. \cite{CojocaruHLYY25,carolan2026compressed,carolan2026compressedpermutationoraclesrevisited}). 
\end{itemize}

\section{Preliminaries}
\subsection{Quantum information}

We denote quantum registers $\Ar$ with sans-serif font. A quantum state $\rho_\Ar$ on register $\Ar$ is a positive semi-definite operator with trace 1. The reduced state $\rho_\Ar = \Tr_\Br(\rho_{\Ar\Br})$, where $\Tr(\cdot)$ is the partial trace operator. We denote pure states as $\ket{\psi}$ with corresponding density operator $\ketbra{\psi}$. We let $I$ denote the identity matrix. The trace distance between two state $\rho,\sigma$ is defined as \[\mathsf{TD}(\rho,\sigma) = \frac{1}{2}\| \rho - \sigma\|_1.\]

\paragraph{Quantum information lemmas.} We first state the Gentle Measurement lemma.

\begin{lemma}\label{lemma:gentle-measurement}
Let $\rho$ be a quantum state and let $(\Pi,I-\Pi)$ be a projective measurement such that $\Tr(\Pi\rho) \geq 1-\delta$. Let \[\rho' = \frac{\Pi\rho\Pi}{\Tr(\Pi\rho)}\] be the state after applying $(\Pi,I-\Pi)$ to $\rho$ and post-selecting on obtaining the first outcome. Then, $\mathsf{TD}(\rho,\rho') \leq 2\sqrt{\delta}$.
\end{lemma}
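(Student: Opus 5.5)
The plan is to reduce to the pure-state case and then compute the trace distance between a pure state and its normalized projection directly. Let $p = \Tr(\Pi\rho) \geq 1-\delta$.

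\textbf{Purification.} Purify $\rho$ as $\ket{\psi}$ on the original register together with an auxiliary register $\Ar$. Applying $\Pi \otimes I$ to $\ket{\psi}$ and renormalizing gives a purification $\ket{\psi'}$ of $\rho'$, because $\Tr_\Ar\bigl[(\Pi\otimes I)\ketbra{\psi}(\Pi\otimes I)\bigr] = \Pi\rho\Pi$. The partial trace is a channel, and trace distance is contractive under channels, so
\[\mathsf{TD}(\rho,\rho') \leq \mathsf{TD}\bigl(\ketbra{\psi},\ketbra{\psi'}\bigr).\]

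\textbf{Pure-state computation.} For pure states, $\mathsf{TD} = \sqrt{1 - |\braket{\psi}{\psi'}|^2}$. Here
\[\braket{\psi}{\psi'} = \frac{\bra{\psi}(\Pi\otimes I)\ket{\psi}}{\sqrt{p}} = \frac{p}{\sqrt{p}} = \sqrt{p}.\]
Hence $\mathsf{TD} = \sqrt{1-p} \leq \sqrt{\delta}$, which is stronger than the claimed $2\sqrt{\delta}$.

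\textbf{Alternative without purification.} If one prefers to avoid purification, write
\[\rho - \rho' = \bigl(\rho - \Pi\rho\Pi\bigr) + \Pi\rho\Pi\,(1 - 1/p).\]
The second term has trace norm $1-p \leq \delta$. For the first term, expand
\[\rho - \Pi\rho\Pi = \Pi\rho(I-\Pi) + (I-\Pi)\rho\Pi + (I-\Pi)\rho(I-\Pi).\]
Apply Hölder / Cauchy--Schwarz to each cross term: $\|\Pi\rho(I-\Pi)\|_1 \leq \|\sqrt{\rho}\,\Pi\|_2\,\|\sqrt{\rho}\,(I-\Pi)\|_2 \leq \sqrt{\delta}$. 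This yields a total of order $2\sqrt{\delta}+2\delta$. That exceeds $2\sqrt\delta$ by a lower-order term, which is why the purification route is cleaner.

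\textbf{Main obstacle.} The only delicate step is justifying that $\ket{\psi'}$ really purifies $\rho'$ and that monotonicity of trace distance applies. Both are standard, and I expect nothing harder beyond bookkeeping.
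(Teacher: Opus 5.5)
Your purification argument is correct and complete. $(\Pi\otimes I)\ket{\psi}/\sqrt{p}$ does purify $\rho'$, and trace distance contracts under the partial trace. The overlap computation then gives $\mathsf{TD}(\rho,\rho')\le\sqrt{1-p}\le\sqrt{\delta}$, which is a factor of $2$ better than the stated bound. The only implicit requirement is $p>0$, so that $\rho'$ is defined, and the statement already presupposes this.

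The paper does not prove this lemma; it quotes it as a standard preliminary, so there is no proof to compare against. The paper only applies it with negligible $\delta$ (e.g.\ in $\Hyb_3$ of both attacks), so the constant does not matter there, and your argument covers those uses.

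There is one slip in your ``alternative'' paragraph. The bound $2\sqrt{\delta}+2\delta$ you derive is for $\|\rho-\rho'\|_1$, but $\mathsf{TD}$ carries a factor $\tfrac12$. That route therefore gives $\mathsf{TD}(\rho,\rho')\le\sqrt{\delta}+\delta$, which is at most $2\sqrt{\delta}$ whenever $\delta\le 1$. For $\delta>1$ the claim is trivial, since $\mathsf{TD}\le 1$. So the direct route also proves the lemma as stated; it just does not reach the sharper $\sqrt{\delta}$ that purification gives.
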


\noindent We will need the Cauchy-Schwarz inequality with respect to the Hilbert-Schmidt inner product $\langle A,B \rangle = \Tr(A^\dagger B)$.

\begin{lemma}
If $A,B$ are complex matrices s.t $A^\dagger B$ is defined,
     \[
     |\Tr(A^\dagger B)|^2 \leq \Tr(A^\dagger A)\Tr(B^\dagger B).
     \] \label{cshs}
\end{lemma}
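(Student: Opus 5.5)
The plan is to reduce the inequality to the finite-dimensional Cauchy--Schwarz inequality in $\mathbb{C}^N$, using that $\langle A,B\rangle = \Tr(A^\dagger B)$ is an inner product on $m\times n$ complex matrices. The hypothesis that $A^\dagger B$ is defined means $A$ and $B$ have the same number of rows $m$. The statement also uses $\Tr(B^\dagger B)$, and for $\Tr(A^\dagger B)$ to make sense $A^\dagger B$ must be square. So $A$ and $B$ have the same shape $m\times n$, and I will work in the space $\mathbb{C}^{m\times n}$.

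First, I will compute the trace entrywise:
\[\Tr(A^\dagger B)=\sum_{j=1}^n (A^\dagger B)_{jj}=\sum_{j=1}^n\sum_{i=1}^m \overline{A_{ij}}\,B_{ij}.\]
This identifies $\Tr(A^\dagger B)$ with the standard inner product $\langle \mathrm{vec}(A),\mathrm{vec}(B)\rangle$ on $\mathbb{C}^{mn}$, where $\mathrm{vec}$ stacks the entries into a vector. In the same way, $\Tr(A^\dagger A)=\sum_{i,j}|A_{ij}|^2=\|\mathrm{vec}(A)\|^2$, and likewise $\Tr(B^\dagger B)=\|\mathrm{vec}(B)\|^2$.

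Second, I will apply the standard Cauchy--Schwarz inequality $|\langle u,v\rangle|^2\le \|u\|^2\|v\|^2$ on $\mathbb{C}^{mn}$ with $u=\mathrm{vec}(A)$ and $v=\mathrm{vec}(B)$. This gives exactly the claimed inequality.

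If a self-contained argument is preferred, I will prove the vector inequality directly. When $v=0$ the inequality is trivial. Otherwise, set $t=\langle v,u\rangle/\|v\|^2$ and expand $0\le\|u-tv\|^2=\|u\|^2-|\langle u,v\rangle|^2/\|v\|^2$, then rearrange.

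There is no real obstacle here. The only point needing care is the shape bookkeeping, namely that $A$ and $B$ must have the same dimensions for every trace in the statement to be defined, and the index computation identifying the trace with the entrywise inner product.
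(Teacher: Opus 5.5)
Your argument is correct: identifying $\Tr(A^\dagger B)=\sum_{i,j}\overline{A_{ij}}B_{ij}$ with the standard inner product on $\mathbb{C}^{mn}$ and applying the vector Cauchy--Schwarz inequality is exactly the content of the lemma. The paper gives no proof of its own and simply invokes this as the Cauchy--Schwarz inequality for the Hilbert--Schmidt inner product, so your route matches its implicit justification; your shape bookkeeping and your choice $t=\langle v,u\rangle/\|v\|^2$ (with the inner product conjugate-linear in the first slot) in the self-contained version are both right.
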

\noindent We use the following quantum rewinding statement, which says if two different computations (on the same state) give specific outputs with sufficiently high probability, performing the computations sequentially obtains both outputs with noticeable probability.
\begin{lemma}[{\cite[Lemma 36]{dall2023necessity}}]\label{lemma:binrewind}
    Let $P,Q$ be projectors and $\rho$ any quantum state, with $\epsilon = \Tr(P\rho) + \Tr(Q\rho) - 1 \geq 0$. Then \[
    \frac{\epsilon^2}{4} \leq \Tr(PQP\rho).
    \]
\end{lemma}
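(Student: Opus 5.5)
We prove $\frac{\epsilon^2}{4} \le \Tr(PQP\rho)$ by reducing to pure states and then applying Cauchy--Schwarz (\zcref[S]{cshs}) to an overlap between the $P$- and $Q$-components of the state.

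\textbf{Reduction to pure states.} Write $\rho=\sum_i p_i \ketbra{\psi_i}$ and set $\epsilon_i = \bra{\psi_i}P\ket{\psi_i}+\bra{\psi_i}Q\ket{\psi_i}-1$, so that $\epsilon=\sum_i p_i\epsilon_i$. Define $f(x)=\max(x,0)^2/4$. This function is convex, so by Jensen
\[\epsilon^2/4 = f(\epsilon)\le \sum_i p_i f(\epsilon_i).\]
It therefore suffices to show, for every unit vector $\ket\psi$,
\[\|QP\ket\psi\|^2 = \bra\psi PQP\ket\psi \ge \max(\epsilon_\psi,0)^2/4.\]
Summing this bound over the mixture then gives the claim.

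\textbf{The pure-state bound.} Let $a=\|P\psi\|^2$ and $b=\|Q\psi\|^2$, so that $\epsilon_\psi=a+b-1$, and assume $\epsilon_\psi\ge 0$. Decompose
\[\ket\psi = P\ket\psi + (I-P)\ket\psi.\]
Then
\[\bra\psi Q\ket\psi = \bra\psi QP\ket\psi + \bra\psi Q(I-P)\ket\psi.\]
For the first term, Cauchy--Schwarz gives $|\bra\psi QP\ket\psi| \le \|QP\psi\|$. For the second term, Cauchy--Schwarz gives
\[|\bra\psi Q(I-P)\ket\psi| \le \|Q\psi\|\,\|(I-P)\psi\| = \sqrt{b}\sqrt{1-a}.\]
Combining these,
\[\|QP\psi\| \ge b-\sqrt{b(1-a)} = \sqrt b\,\bigl(\sqrt b-\sqrt{1-a}\bigr).\]

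\textbf{Turning this into a bound in $\epsilon_\psi$.} Since $b\ge 1-a$, the factor $\sqrt b-\sqrt{1-a}$ is nonnegative, and
\[\sqrt b-\sqrt{1-a} = \frac{b-(1-a)}{\sqrt b+\sqrt{1-a}} \ge \frac{\epsilon_\psi}{2}.\]
Also $\sqrt b\ge \sqrt{b-(1-a)}\ge \epsilon_\psi$, using $\epsilon_\psi\le 1$. Multiplying the two factors gives $\|QP\psi\|\ge \epsilon_\psi^2/2$. Squaring yields only $\epsilon_\psi^4/4$, which is too weak.

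\textbf{Fixing the loss.} We redo the decomposition symmetrically, splitting $\ket\psi$ along both projectors. Using $\bra\psi Q\ket\psi=\|Q\psi\|^2$, write
\[\bra\psi QP\ket\psi = \bra\psi Q\ket\psi - \bra\psi Q(I-P)\ket\psi.\]
Split the subtracted term further as
\[\bra\psi Q(I-P)\ket\psi = \bra\psi (I-P)\ket\psi - \bra\psi(I-Q)(I-P)\ket\psi.\]
Substituting,
\[\bra\psi QP\ket\psi = b-(1-a)+\bra\psi(I-Q)(I-P)\ket\psi = \epsilon_\psi+\bra\psi(I-Q)(I-P)\ket\psi.\]
We then bound $|\bra\psi(I-Q)(I-P)\ket\psi|\le\sqrt{(1-a)(1-b)}$ by Cauchy--Schwarz. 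Taking real parts gives
\[\|QP\psi\|\ge |\bra\psi QP\ket\psi| \ge \epsilon_\psi - \sqrt{(1-a)(1-b)}.\]
This bound is still not clean. As a cleaner alternative, view $QP\ket\psi$ as a vector and use
\[\|QP\psi\|\ge |\langle Q\psi, P\psi\rangle| = |\bra\psi QP\ket\psi|.\]
Then use the identity $\bra\psi(P+Q)\ket\psi = 1+\epsilon_\psi$. Since $\|P+Q\|\le 2$, the Cauchy--Schwarz bound
\[1+\epsilon_\psi \le \|(P+Q)\psi\|\]
together with
\[\|(P+Q)\psi\|^2 = a+b+2\,\mathrm{Re}\bra\psi PQ\ket\psi = 1+\epsilon_\psi+2\,\mathrm{Re}\bra\psi PQ\ket\psi\]
gives
\[2\,\mathrm{Re}\bra\psi PQ\ket\psi \ge (1+\epsilon_\psi)^2-(1+\epsilon_\psi) = \epsilon_\psi(1+\epsilon_\psi)\ge \epsilon_\psi.\]
Hence $|\bra\psi QP\ket\psi|\ge \epsilon_\psi/2$, and therefore $\|QP\psi\|^2\ge\epsilon_\psi^2/4$.

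\textbf{Main obstacle.} This last step is the delicate point, since naive decompositions lose a square. The $\|(P+Q)\psi\|$ trick is the route that gives exactly the $1/4$ constant. It needs no bound beyond $\|\psi\|=1$ and the Cauchy--Schwarz bound $\bra\psi(P+Q)\ket\psi\le\|(P+Q)\psi\|$.
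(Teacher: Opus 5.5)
Your final argument is correct and is essentially the cited Dall'Agnol--Spooner proof, which the paper follows in proving \zcref[S]{multirewind}: you apply Cauchy--Schwarz to $\bra{\psi}(P+Q)\ket{\psi}=1+\epsilon_\psi$, expand $\|(P+Q)\ket{\psi}\|^2$ to get $2\,\mathrm{Re}\bra{\psi}PQ\ket{\psi}\ge\epsilon_\psi(1+\epsilon_\psi)$, and conclude with $\bra{\psi}PQP\ket{\psi}=\|QP\ket{\psi}\|^2\ge|\bra{\psi}QP\ket{\psi}|^2$. The one difference is your Jensen reduction to pure states, which is valid since $x\mapsto\max(x,0)^2/4$ is convex but unnecessary: the cited argument applies the Hilbert--Schmidt Cauchy--Schwarz inequality (\zcref[S]{cshs}) with $B=\sqrt{\rho}$ directly to mixed $\rho$, and your two abandoned attempts before the final step can simply be deleted.
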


\noindent We also need our own generalization of the above to $n$ different computations chosen according to a distribution $D$.

\begin{lemma}\label{multirewind}
    Let $n \geq 2$, let $D$ be a (non-point-mass) distribution on $[n]$ with min-entropy $\gamma$, define $D^2$ to be the distribution on $[n] \times [n]$ defined by sampling $i \gets D$ and then $j \gets D$ conditioned on $j \neq i$, let $\rho$ be any state, and let $P_1,\dots,P_n$ be projectors with \[\E_{i \gets D}[\Tr(P_i\rho)] = 2^{-\gamma} + \epsilon\] for $\epsilon \geq 0$. Then \[\E_{(i,j) \gets D^2}[\Tr(P_iP_jP_i\rho)] \geq \epsilon^3.\]
\end{lemma}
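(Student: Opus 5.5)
The plan is to purify $\rho$ and then run a second-moment (Cauchy--Schwarz) argument on the averaged projector $A = \sum_i p_i P_i$, where $p_i = \Pr_D[i]$. This extends the two-projector argument behind \zcref[S]{lemma:binrewind} to a weighted family.

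\textbf{Step 1 (reduction to a pure state).} Purify $\rho$ to $\ket{\psi}$ on an enlarged space and replace each $P_i$ by $P_i\otimes I$. Both sides of the claim are unchanged, and $\Tr(P_iP_jP_i\rho) = \|P_jP_i\ket{\psi}\|^2$. Write $a := \E_{i\gets D}\Tr(P_i\rho) = 2^{-\gamma}+\epsilon$, and note that $\max_i p_i = 2^{-\gamma}$.

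\textbf{Step 2 (second moment of $A$).} Since $A$ is Hermitian and $\ket{\psi}$ is a unit vector, Cauchy--Schwarz gives
\[
\bra{\psi}A^2\ket{\psi} = \|A\ket\psi\|^2 \geq \bra\psi A\ket\psi^2 = a^2 .
\]
Expand $A^2 = \sum_i p_i^2 P_i + \sum_{i\neq j} p_ip_j P_iP_j$. The diagonal part is at most
\[
2^{-\gamma}\sum_i p_i \bra\psi P_i\ket\psi = 2^{-\gamma}a .
\]
Hence
\[
\sum_{i\neq j} p_ip_j\,\mathrm{Re}\bra\psi P_iP_j\ket\psi \;\geq\; a^2 - 2^{-\gamma}a \;=\; a\epsilon .
\]

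\textbf{Step 3 (converting overlaps into sequential success).} Rewrite each cross term as
\[
\bra\psi P_iP_j\ket\psi = \bra\psi P_iP_jP_j\ket\psi = \langle P_jP_i\psi \,|\, P_j\psi\rangle ,
\]
whose modulus is at most $\|P_jP_i\psi\|\cdot\|P_j\psi\|$. Apply Cauchy--Schwarz over the weights $w_{ij}=p_ip_j$ with $i\neq j$:
\[
a\epsilon \leq \Big(\sum_{i\neq j} w_{ij}\|P_jP_i\psi\|^2\Big)^{1/2}\Big(\sum_{i\neq j} w_{ij}\|P_j\psi\|^2\Big)^{1/2}.
\]
The second factor is at most $\big(\sum_j p_j\|P_j\psi\|^2\big)^{1/2} = a^{1/2}$. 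Therefore
\[
\sum_{i\neq j} p_ip_j\|P_jP_i\psi\|^2 \;\geq\; a\epsilon^2 \;\geq\; \epsilon^3,
\]
where the last inequality uses $a \ge \epsilon$.

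\textbf{Step 4 (passing to $D^2$).} Under $D^2$ the pair $(i,j)$ has probability $p_ip_j/(1-p_i) \geq p_ip_j$. The assumption that $D$ is not a point mass gives $p_i<1$, so this expression is well defined. Since all summands are nonnegative,
\[
\E_{(i,j)\gets D^2}\Tr(P_iP_jP_i\rho) \;\geq\; \sum_{i\neq j}p_ip_j\|P_jP_i\psi\|^2 \;\geq\; \epsilon^3 .
\]

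\textbf{Main obstacle.} The delicate point is Step 3. A naive bound $|\langle P_i\psi|P_j\psi\rangle|\le\|P_jP_i\psi\|$ followed by Jensen only yields $a^2\epsilon^2$, which is roughly $\epsilon^4$. Inserting the extra $P_j$, so that Cauchy--Schwarz pairs $\|P_jP_i\psi\|$ with $\|P_j\psi\|$, is what saves a factor and gives the $\epsilon^3$ bound. Everything else is bookkeeping with the weights.
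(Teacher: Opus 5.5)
Your proof is correct and matches the paper's argument step for step. The paper also applies Cauchy--Schwarz to $\sum_i p_iP_i$ and bounds the diagonal part by $2^{-\gamma}s$. It then pairs $\Tr(P_iP_jP_i\rho)$ with $\Tr(P_j\rho)$ through the Hilbert--Schmidt inner product with $A^\dagger = P_j\sqrt{\rho}$ and $B = \sqrt{\rho}P_iP_j$, which is exactly your insertion of the extra $P_j$ after purifying. It finishes with a second Cauchy--Schwarz over the weights and the bound $p_ip_j/(1-p_i) \geq p_ip_j$. Purifying instead of working with $\sqrt{\rho}$, and tracking real parts explicitly, are only cosmetic differences.
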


\begin{proof}
    We follow the outline of the proof of \zcref[S]{lemma:binrewind} in \cite{dall2023necessity} and adapt it to our setting. Define $p_i \coloneqq \Pr_{j \gets D}[j = i]$ and \[s \coloneqq \E_{i \gets D}[\Tr(P_i\rho)] = \sum_i p_i\Tr(P_i\rho) = 2^{-\gamma} + \epsilon.\]
    By applying \zcref[S]{cshs} with $A^\dagger = \sum_i p_i P_i\sqrt{\rho}$, $B = \sqrt{\rho}$ and using  $\Tr(\sqrt{\rho}^\dagger \sqrt{\rho}) = \Tr(\rho) = 1$, we have that \begin{align*}
s^2 &\leq \Tr\left(\left(\sum_i p_iP_i\right)\rho\left(\sum_i p_iP_i\right)\right) \\
&= \sum_{i,j}p_i p_j \Tr\left(P_i\rho P_j\right) \\
&= \sum_i p_i^2 \Tr(P_i\rho P_i) + \sum_{i \neq j}p_i p_j\Tr(P_i \rho P_j).
\end{align*} Since $P_i$ is a projector, $\Tr$ is cyclic, and $p_i \leq 2^{-\gamma}$ for every $i$, \[\sum_i p_i^2 \Tr\left(P_i\rho P_i\right) = \sum_i p_i^2\Tr\left(P_i\rho\right) \leq 2^{-\gamma}s,\] so rearranging gives
    \[s\epsilon = s^2 - 2^{-\gamma}s \leq \sum_{i \neq j} p_i p_j\Tr(P_i\rho P_j) \leq \sum_{i \neq j} p_i p_j\sqrt{\Tr(P_i P_jP_i\rho)\Tr(P_j\rho)},\] where the second inequality follows by taking $A^\dagger = P_j \sqrt{\rho}$ and $B = \sqrt{\rho} P_i P_j$ in \zcref[S]{cshs}. Applying Cauchy-Schwarz again, we obtain
    \begin{align*}
        s^2 \epsilon^2 &\leq \left(\sum_{i \neq j}p_i p_j\Tr(P_iP_jP_i\rho)\right)\left(
        \sum_{i \neq j}p_ip_j\Tr(P_j\rho)\right) \\ &\leq s\left(\sum_{i \neq j}p_i p_j\Tr(P_iP_jP_i\rho)\right),
    \end{align*} 
    so we can conclude that \[\sum_{i \neq j}p_i p_j\Tr(P_iP_jP_i\rho) \geq s\epsilon^2 \geq \epsilon^3.\] Finally, by definition of $D^2$,
    \begin{align*}
        \E_{(i,j) \gets D^2}[\Tr(P_i P_j P_i \rho)] &= \sum_{i \neq j} \frac{p_i p_j}{1-p_i}\Tr(P_i P_j P_i\rho) \\ &\geq \sum_{i \neq j}p_i p_j \Tr(P_i P_j P_i \rho) \\ &\geq \epsilon^3.
    \end{align*}
\end{proof}

\paragraph{Measure-and-reprogram.} Recall the well-known measure-and-reprogram technique introduced in \cite{C:DFMS19} and expanded in \cite{C:DonFehMaj20}. The idea is as follows: suppose $\adv^H$ makes quantum queries to a random oracle $H$ and outputs $x$ and (possibly quantum) $z$ such that $V(x,H(x),z) = 1$ where $V$ is some predicate. Then there is a (black box) simulator $\Sim$ such that, given access to $\adv$, and a random $\Theta$, measures a random $H$ query of $\adv$ to get $x$ and reprograms $H(x) = \Theta$. The guarantee is that with noticeable probability, $\Sim^\adv$ will output $x,z$ such that $V(x,\Theta,z) = 1$. For reprogramming $t$ queries, the simulator $\Sim$ works in multiple stages: it first outputs a permutation $\pi$ of $\{0, \dots, t-1\}$ and $x_{\pi(0)}$, and takes an input $\Theta_{\pi(0)}$. It then runs, and outputs $x_{\pi(1)}$ and takes as input $\Theta_{\pi(1)}$, etc., until it outputs $(\pi, \pi(\mathbf{x}), z)$. Formally:
\begin{theorem}\label{thm:mandp}
    There exists a black-box polynomial time $(t+1)$-stage quantum algorithm $\Sim$ with syntax as already outlined, with the following property. Let $\adv$ be a quantum oracle algorithm that makes $q$ queries to a random oracle $H: \mathcal{X} \to \mathcal{Y}$ and that outputs $(\mathbf{x} = (x_0, \dots, x_{t-1}), z)$. Then for any fixed $\mathbf{x}^*$ with distinct entries and (possibly quantum/randomized) predicate $V$,
    \begin{align*}
        \Pr_{\Theta}[\mathbf{x} = \mathbf{x}^* \land V^{H(\mathbf{x} \ast \mathbf{\Theta})}(\mathbf{x},\mathbf{\Theta},z) = 1 : (\pi, \pi(\mathbf{x}), z) \gets \langle \Sim^\adv, \pi(\mathbf{\Theta})\rangle] \\
        \geq \frac{1}{(2q+1)^{2t}}\Pr_{H}[\mathbf{x} = \mathbf{x}^* \land V^H(\mathbf{x},H(\mathbf{x}),z) = 1 : (\mathbf{x}, z) \gets \adv^H]
    \end{align*}
    Here $\mathbf{\Theta} = (\Theta_0, \dots, \Theta_{t-1})$ are chosen uniformly at random from $\mathcal{Y}$.
\end{theorem}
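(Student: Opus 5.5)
The plan is to reduce the statement to the multi-query measure-and-reprogram theorem of \cite{C:DonFehMaj20}. The only new feature is that the predicate $V$ may itself query the (reprogrammed) oracle. The key observation is that the DFM proof never uses that $V$ is oracle-free in any essential way. It proves an operator inequality between two states of the joint system (adversary output, oracle). The oracle is held in purified form, $\sum_H \ket{H}$, on an oracle register. A verification $V^{H}$ that queries the oracle is just a binary measurement $\{M_{\mathbf{x},\mathbf{\Theta}}, I - M_{\mathbf{x},\mathbf{\Theta}}\}$ acting on the output register together with the oracle register. So I would first recast both sides of the statement in this purified picture. In the real experiment, we run $\adv$ with a purified oracle and then measure with $V^H$ on the final (output, oracle) state. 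In the simulated experiment, we run $\Sim^\adv$, which leaves the oracle in the state $H \ast \mathbf{\Theta}$ (reprogrammed at the measured points), and then apply $V^{H(\mathbf{x}\ast\mathbf{\Theta})}$ to that same state.

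For $t=1$, I follow \cite{C:DFMS19,C:DonFehMaj20}. The simulator picks $(i,\beta)$ uniformly from $2q+1$ options: before or after query $i \in [q]$, or at the very end. At the chosen point it measures the query input register to obtain $x$, outputs $x$, receives $\Theta$, and from then on answers queries with $H$ reprogrammed at $x$ to $\Theta$. Writing $\ket{\psi}$ for the final state of $\adv$ with a purified oracle, the reprogrammed oracle state decomposes as a telescoping sum over the $2q+1$ positions. The DFM argument then expresses the real-world accepting amplitude $\Pi_{x^*}\ket{\psi}$ as a sum of $2q+1$ terms, each equal to the simulator's state for one choice of $(i,\beta)$. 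Cauchy--Schwarz gives $\|\Pi \ket{\psi}\|^2 \le (2q+1)\sum_{i,\beta}\|\Pi\ket{\psi_{i,\beta}}\|^2$, hence the factor $(2q+1)^{-2}$. Here $\Pi$ is the projector "$\mathbf{x}=\mathbf{x}^*$ and the (purified) $V$ accepts", obtained by coherently running $V$, projecting, and uncomputing. The inequality holds for an arbitrary projector on (output, oracle), so $V$'s queries cost nothing extra and do not enter $q$. This is the step where I must check carefully that the oracle register after simulation is exactly in the reprogrammed state on which $V$ is evaluated, including queries made after the reprogram point.

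For general $t$, I induct as in \cite{C:DonFehMaj20}. I apply the $t=1$ argument to the first reprogrammed point $x_{\pi(0)}$, treating the rest of the execution as a new adversary with at most $q$ queries. That execution includes the subsequent measure-and-reprogram steps and the final $V$, which is again an oracle-querying projector. Since the $x^*_j$ are distinct, reprogramming at one point commutes with reprogramming at the others. The factors multiply to $(2q+1)^{-2t}$, and the random permutation $\pi$ records the order in which points are extracted. The main obstacle is the bookkeeping in this induction. I need to ensure that the "remaining adversary" at each stage has its verification predicate, now depending on the already-fixed $\Theta$'s and querying a partially reprogrammed oracle, still of the form covered by the single-point lemma. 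I would handle this by stating the one-point lemma for an arbitrary oracle-dependent projective measurement together with an arbitrary prior partial reprogramming. Then the inductive step is literally an instance of it.
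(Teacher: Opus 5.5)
Your proposal is correct and follows essentially the same route as the paper. DFM's one-point and multi-point inequalities hold for every fixed $H,\mathbf{x},\mathbf{\Theta}$ and an arbitrary projector, so one may take that projector to be (the purified) $V$ run on the reprogrammed oracle $H(\mathbf{x}\ast\mathbf{\Theta})$. The final averaging over $H,\mathbf{\Theta}$ then turns it into $V^H$ in the real experiment, which is exactly the check you flag.

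The difference is only presentational. The paper argues per fixed $H$ and invokes DFM's multi-point lemma directly, instead of purifying the oracle and re-running the induction. So the generalized one-point lemma with prior partial reprogramming that you anticipate needing is unnecessary.
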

\noindent\zcref[S]{thm:mandp} is identical to \cite[Theorem~6]{C:DonFehMaj20} except that in our version $V$ has access to $H$ in the original game and the reprogrammed oracle $H(\mathbf{x} \ast \mathbf{\Theta})$ in the simulated game. The proof is very similar to \cite[Theorem~6]{C:DonFehMaj20} and is sketched in \zcref[S]{section:newmandrp}.\footnote{The coherent measure-and-reprogram theorem proved in \cite{AC:CGLS24} already allows $V$ to access $H$, but we opt not to use it, as the structure of \zcref[S]{thm:mandp} is more suited to our needs.}
\\
\\
We also use a result of \cite{EC:DFMS22}: Given an algorithm $\cA$ with access to a random oracle $H: \mathcal{X} \to \mathcal{Y}$ that eventually outputs a value $t = H(x)$ for some unknown $x$, there is a simulator $\Sim$ that simulates access to $H$ using an interface $\Sim.RO$, and allows extraction of $x$ given $t$ via an interface $x = \Sim.E(t)$.
\begin{theorem}[Quantum random oracle simulation with extraction {\cite{EC:DFMS22}}]\label{thm:simextract}
    There is an efficient quantum algorithm $\Sim$ with two interfaces $\Sim.RO, \Sim.E$ such that for any algorithm $\cA^H$ with quantum access to a random oracle $H: \mathcal{X} \to \mathcal{Y}$, we have the following guarantees;
    \begin{enumerate}
        \item \cite[Theorem~4.3~1]{EC:DFMS22} If no $\Sim.E$ query is made, $\Sim.RO$ is perfectly indistinguishable from $H$. \label{ext-indist}
        \item \cite[Theorem~4.3~2(c)]{EC:DFMS22} Any two independent queries to $\Sim.RO, \Sim.E$ $8\sqrt{2/|\mathcal{Y}|}$-almost commute (independent means they could be made in either order). \label{ext-commute}
        \item \cite[Proposition~4.5]{EC:DFMS22} Suppose $\cA$ makes $q$ queries to $H$. Then \[
        \Pr_{\substack{x,t \gets \cA^{\Sim.RO} \\ h \gets \Sim.RO(x) \\ x' \gets \Sim.E(t) }}[x' \neq x \land h = t] \leq O\left(\frac{q^3}{|\mathcal{Y}|}\right).
        \] \label{ext-correct}
    \end{enumerate}
\end{theorem}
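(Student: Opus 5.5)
This statement is imported from \cite{EC:DFMS22}, so the plan is to invoke their construction and match its guarantees to the three items, not to reprove it. Concretely, $\Sim$ maintains Zhandry's compressed-oracle database register, a superposition over partial functions $D : \mathcal{X} \rightharpoonup \mathcal{Y}$, initialized to the empty database. $\Sim.RO$ applies the standard compressed-oracle unitary: decompress at the queried point, XOR $D(x)$ into the output register, then recompress. $\Sim.E(t)$ coherently computes the smallest $x$ with $D(x) = t$ (or $\bot$ if there is none) into a fresh register, measures that register, and returns the outcome.

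Item \ref{ext-indist} is Zhandry's result that the compressed oracle is a perfect purification of a uniformly random $H$. If $\Sim.E$ is never called, the adversary's reduced state is therefore identical to the one it has with a uniformly random $H$.

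For item \ref{ext-commute}, I would reduce to a bound on the operator norm of the commutator $[\Sim.RO, \Sim.E]$.
\begin{itemize}
\item Since $\Sim.E$ is a measurement that is diagonal in the computational basis of the database, the commutator is nonzero only because $\Sim.RO$ may change the database entry at the queried $x$.
\item Away from the all-$\bot$/uniform Fourier component, that change is a rank-one correction of amplitude $O(1/\sqrt{|\mathcal{Y}|})$.
\item Summing the contributions gives the constant $8\sqrt{2/|\mathcal{Y}|}$. This is \cite[Theorem~4.3]{EC:DFMS22}, and I would follow their case analysis on whether $D(x) = t$ before and after the query.
\end{itemize}

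Item \ref{ext-correct} is the substantive part, and I expect it to be the main obstacle. The plan is to show that after $q$ queries the database is, up to small error, consistent with a random function on its support. The event $x' \neq x \land h = t$ then requires one of two things:
\begin{itemize}
\item the database containing two distinct preimages of $t$, or
\item a later $\Sim.RO(x)$ query returning $t$ at a point not recorded with value $t$.
\end{itemize}
Standard compressed-oracle collision and preimage bounds control both. The probability that some query creates an entry equal to an existing value is $O(q^2/|\mathcal{Y}|)$ per step. Accumulating this over the $q$ queries, together with the almost-commutation error from item \ref{ext-commute} needed to move the final $\Sim.E$ past the last $\Sim.RO$, yields the $O(q^3/|\mathcal{Y}|)$ bound of \cite[Proposition~4.5]{EC:DFMS22}. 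In the paper itself I would simply cite these three results rather than redo the bookkeeping.
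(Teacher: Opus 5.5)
Your plan matches the paper, which gives no argument of its own and simply imports items 1 and 2 from \cite[Theorem~4.3]{EC:DFMS22} (parts 1 and 2(c)) and item 3 from \cite[Proposition~4.5]{EC:DFMS22}, exactly as you propose. The only quibble is your heuristic for item 3: a single query collides with one of the at most $q$ recorded values with probability $O(q/|\mathcal{Y}|)$, not $O(q^2/|\mathcal{Y}|)$, and the cubic bound arises because the compressed-oracle analysis adds up the $O(\sqrt{q/|\mathcal{Y}|})$ per-query amplitudes before squaring; since you defer to the citation, this does not affect the proof.
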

\subsection{KEM}\label{subsec:KEM}
A key encapsulation mechanism \[\KEM = (\KG, \Enc, \Dec)\] consists of three polynomial time algorithms:
\begin{itemize}
    \item $\KG(1^\lambda;r) \to \pk,\sk$ is a randomized algorithm that generates a public key $\pk \in \PK_\lambda$ and a secret key $\sk \in \SK_\lambda$. It takes an input the security parameter $1^\lambda$ and we will sometimes be explicit about the random coins $r \in \R_\lambda$ that it uses. We will sometimes denote its output by $(\pk(r), 
    \sk(r)) = \KG(r)$, leaving the security parameter implicit.
    \item $\Enc(\pk) \to c,K$ is a randomized algorithm that, given a public key $\pk$, generates a ciphertext $c$ and output key $K$.
    \item $\Dec(\sk,c) = K$ is a deterministic algorithm that, given a ciphertext $c$ and secret key $\sk$, generates an output key $K$.
\end{itemize}

In what follows, we will often leave the security parameter $\lambda$ implicit, for example when writing the sets $\PK,\SK$, and $\R$. $P[i]$ denotes the (zero-indexed) $i$-th element of the tuple $P$.


\begin{definition}[KEM: Correctness]\label{def:KEM-correctness}
    A KEM $\KEM$ has correctness error $\delta(\lambda)$ if
\[
    \Pr\left[\Dec(\sk,c) \neq K : \begin{array}{r}(\pk, \sk) \gets \KG(1^\lambda) \\ (c,K) \gets \Enc(\pk) \end{array}\right] \leq \delta(\lambda).
\] We say that the KEM is \emph{correct} if $\delta(\lambda) = \mathsf{negl}(\lambda)$.
\end{definition}

We next define several notions of security for KEM that will be used in this work.

\paragraph{Public-key uniformity.} The post-quantum computational public key uniformity experiment for $\KEM$ consists of the following security game played against a quantum polynomial time adversary $\adv$.\\

\noindent\underline{$\text{PKU}_{\KEM, \lambda}^\adv$}

\begin{itemize}
    \item $\pk_0 \gets \PK$
    \item $(\pk_1,*) = \KG(1^\lambda, r); r \gets \R$
    \item $c \gets \{0,1\}$
    \item $c' \gets \adv(1^\lambda, \pk_c)$
\end{itemize}

Define the advantage of $\adv$ in the above game as \[
\mathsf{Adv}(\text{PKU}_{\KEM, \lambda}^\adv) = |\Pr[c' = 1 | c = 0] - \Pr[c' = 1 | c = 1]|.
\]
\begin{definition}[KEM: Public key uniformity]\label{def:KEM-unikeys}
    A KEM $\KEM$ has post-quantum computationally uniform public keys if for any QPT adversary $\adv$,
    \[ \mathsf{Adv}(\emph{PKU}_{\KEM, \lambda}^\adv) \leq \mathsf{negl}(\lambda).
\]
\end{definition}
\noindent By purifying the randomness of the challenger, we arrive at an equivalent definition of the security game that will be more convenient for us to work with.\\

\noindent\underline{$\text{PKU}_{\KEM, \lambda}^\adv$}
\begin{itemize}
    \item $\ket{\mu_0} = \frac{1}{\sqrt{|\PK|}}\sum_{\pk \in \PK} \ket{\pk}_{\Xr_0}\ket{\pk}_{\Yr_0}$
    \item $\ket{\mu_1} = \frac{1}{\sqrt{|\R|}}\sum_{r \in \R} \ket{r}_{\Xr_1}\ket{\pk(r)}_{\Yr_1}$
    \item $c \gets \{0,1\}$
    \item $c' \gets \adv(1^\lambda,\Yr_c)$
\end{itemize}

\paragraph{Key unpredictability.} Consider the following experiment.\\

\noindent\underline{$\text{UNP}_{\KEM, \lambda}^\adv$}
\begin{itemize}
    \item $(\pk, \sk) \gets \KG(1^\lambda)$
    \item $(c,K) \gets \Enc(\pk)$
    \item $K' \gets \adv(1^\lambda, \pk,c)$
\end{itemize}
Define the advantage of $\adv$ in the above game as \[
\mathsf{Adv}(\text{UNP}_{\KEM, \lambda}^\adv) = \Pr[K' = K] .
\]
\begin{definition}[KEM: Unpredictability Security]\label{def:KEM-unp} A KEM $\KEM$ has post-quantum key unpredictability if for any QPT adversary $\adv$,
    \[ \mathsf{Adv}(\emph{UNP}_{\KEM, \lambda}^\adv) \leq \mathsf{negl}(\lambda).
\]
\end{definition}

\paragraph{Key indistinguishability.} Consider the following experiment.\\

\noindent\underline{$\text{IND}_{\KEM, \lambda}^\adv$}
\begin{itemize}
    \item $(\sk, \pk) \gets \KG(1^\lambda)$
    \item $(c,K_0) \gets \Enc(\pk)$
    \item $K_1 \gets \{0,1\}^\lambda$
    \item $b \gets \{0,1\}$
    \item $b' \gets \adv(1^\lambda, \pk,c,K_b)$
\end{itemize}
Define the advantage of $\adv$ in the above game as \[
\mathsf{Adv}(\text{IND}_{\KEM, \lambda}^\adv) = |\Pr[b' = 1 | b = 0] - \Pr[b' = 1 | b = 1]|.
\]
\begin{definition}[KEM: Indistinguishability Security]\label{def:KEM-ind}
    A KEM $\KEM$ has post-quantum key indistinguishability if for any QPT adversary $\adv$,
    \[ \mathsf{Adv}(\emph{IND}_{\KEM, \lambda}^\adv) \leq \mathsf{negl}(\lambda).
\]
\end{definition}

\subsection{Masny-Rindal OT}

\subsubsection{Syntax}\label{subsubsec:MR-syntax}

Masny-Rindal (MR) OT \cite{masnyrindal} is a compiler from KEM to oblivious transfer in the random oracle model. Let $\KEM = (\KG, \Enc, \Dec)$, and let $H : \PK \to \PK$, where $\PK$ is the public key space of $\KEM$. Assume for simplicity that elements of $\PK$ are naturally specified as bitstrings so that $\oplus$ is an additive operation over the group $\PK$. The OT protocol is specified as follows.

\begin{itemize}
    \item $\Rec_1(1^\lambda,b)$: 
    \begin{itemize}
        \item Sample $t_{1-b} \gets \PK$, and $(\pk_b,\sk_b) \gets \KG(1^\lambda)$.
        \item Set $t_b \coloneqq \pk_b \oplus H(t_{1-b})$.
        \item Send $\msg_R \coloneqq (t_0,t_1)$ to the sender and keep $\sk \coloneqq (b,\sk_b)$.
    \end{itemize}
    \item $\Sen(\msg_R)$:
    \begin{itemize}
        \item Set $\pk_0 \coloneqq t_0 \oplus H(t_1)$ and $\pk_1 \coloneqq t_1 \oplus H(t_0)$.
        \item Sample $c_0,K_0 \gets \Enc(\pk_0)$ and $c_1,K_1 \gets \Enc(\pk_1)$.
        \item Send $\msg_S \coloneqq (c_0,c_1)$ to the receiver and output $m_0 \coloneqq K_0, m_1 \coloneqq K_1$.
    \end{itemize}
    \item $\Rec_2(\sk,\msg_S)$:
    \begin{itemize}
        \item Output $m = \Dec(c_b,\sk_b)$.
    \end{itemize}
\end{itemize}

\subsubsection{Game-based security}

We formulate a notion of game-based security for two-message oblivious transfer.

\begin{definition}[Two-message OT: Correctness]\label{def:OT-correctness}A two-message OT protocol $(\Rec_1,\Sen,\Rec_2)$ satisfies \emph{correctness} if for any $b \in \{0,1\}$,
    \[\Pr\left[m = m_b : \begin{array}{r} \sk,\msg_R \gets \Rec_1(1^\lambda,b) \\ (m_0,m_1),\msg_S \gets \Sen(\msg_R) \\ m \gets \Rec_2(\sk,\msg_S)\end{array}\right] = 1-\mathsf{negl}(\lambda).\]
\end{definition}

\begin{definition}[Two-message OT: Security against malicious receiver]\label{def:OT-sender}
    A two-message OT protocol $(\Rec_1,\Sen,\Rec_2)$ satisfies \emph{security against malicious receiver} if for any QPT $\cA$, 
    \[\Pr\left[m^* = m_b : \begin{array}{r} \msg_R,\sigma \gets \cA(1^\lambda) \\ (m_0,m_1),\msg_S \gets \Sen(\msg_R) \\ b \gets \{0,1\} \\
    m^* \gets \cA(\msg_S,b,\sigma)\end{array}\right] \leq  \frac{1}{2} + \mathsf{negl}(\lambda).\]
\end{definition}

\begin{definition}[Two-message OT: Security against malicious sender]\label{def:OT-receiver}
    A two-message OT protocol $(\Rec_1,\Sen,\Rec_2)$ satisfies \emph{security against malicious sender} if for any QPT $\cA$,
    \[\Bigg|\Pr\left[b^* = b : \begin{array}{r}b \gets \{0,1\} \\ \sk,\msg_R \gets \Rec_1(1^\lambda,b) \\ b^* \gets \cA(\msg_R)\end{array}\right] - \frac{1}{2}\Bigg| = \mathsf{negl}(\lambda).\]
\end{definition}

\subsubsection{Simulation-based security}

\cite{masnyrindal} define the following ideal functionality $\cF_\OT$, which they call \emph{endemic} security.\\

\noindent \underline{$\cF_\OT$:}
\begin{itemize}
    \item Take an input $b \in \{0,1\}$ from the receiver.
    \item If the sender is corrupt, take an input $m_0,m_1$ from the sender. If the receiver is corrupt, take an additional input $m_b$ from the receiver.
    \item All undefined $m$ are sampled uniformly from $\{0,1\}^\lambda$.
    \item Output $m_b$ to the receiver and $(m_0,m_1)$ to the sender.
\end{itemize}

\begin{definition}[Two-message OT: Simulation security]\label{OT:simulation}
    A two-message OT protocol $(\Rec_1,\Sen,\Rec_2)$ is simulation-secure against a malicious receiver if for any family of states $\ket{\psi}_{D,A} = \{\ket{\psi_\lambda}\}_{\lambda}$ on two registers $D,A$ and any QPT adversary $\cA$,  there exists a QPT simulator $\Sim$ such that for any QPT distinguisher $\cD$,
    \[\left|\Pr[\cD(D,\langle \cA(A),\Sen \rangle) = 1] - \Pr[\cD(D,\langle \Sim(A),\cF_\OT\rangle) = 1]\right| = \mathsf{negl}(\lambda),\] where 
    \begin{itemize}
        \item $\langle \cA(A),\Sen\rangle$ denotes the interaction between the adversary on input register $A$ and the honest sender, and outputs $\cA$'s output along with $\Sen$'s output $m_0,m_1$.
        \item $\langle \Sim(A),\cF_\OT\rangle$ denotes the interaction between the simulator on input register $A$ and the ideal functionality $\cF_\OT$. The output includes $\cA(A)$'s output along with the sender's part $(m_0,m_1)$ of $\cF_\OT$'s output.
    \end{itemize}

    The protocol is simulation-secure against a malicious sender if the analogous claim holds when the adversary corrupts the sender.
\end{definition}

\paragraph{Simulation of ideal functionalities.} One clarification is in order about the above definition. Since the protocol we will be analyzing involves an ideal functionality, namely a random oracle, we will allow the simulator to maintain the interface of the random oracle. That is, whenever the adversary or the distinguisher queries the random oracle, the simulator may process the query and respond. In the quantum setting, an ideal functionality is specified by a channel applied to input and output registers $X,Y$ supplied by the adversary. In this case, the simulator may take the registers $X,Y$, apply some operation (potentially also on its private register), and return $X,Y$.

\paragraph{Classical UC security.} In Section 4 and Appendix E, \cite{masnyrindal} show that the Masny-Rindal OT protocol is UC-secure against \emph{classical} adversaries when instantiated with any KEM that has negligible correctness error (\zcref[S]{def:KEM-correctness}), computationally uniform public keys (\zcref[S]{def:KEM-unikeys}), and key indistinguishability (\zcref[S]{def:KEM-ind}). As a consequence, this establishes that the protocol is stand-alone simulation-secure according to our above definition restricted to classical adversaries and simulators.

\subsection{OEKE}

\subsubsection{Syntax}\label{subsubsec:OEKE-syntax}

Once-encrypted key exchange (OEKE) is a template for building two-message PAKE specified by the following ingredients.
\begin{itemize}
    \item A KEM $\KEM = (\KG, \Enc, \Dec)$.
    \item (Potentially oracle-aided) encryption and decryption algorithms $\cE^{O_{\mathsf{Enc}}}, \cD^{O_{\mathsf{Enc}}}$ with the following syntax:
    \begin{itemize}
        \item $\phi \coloneqq \cE^{O_{\mathsf{Enc}}}(\pw,\pk;r)$ takes a password $\pw$, a KEM public key $\pk$, and random coins $r$, and outputs a message $\phi$.
        \item $\pk \coloneqq \cD^{O_{\mathsf{Enc}}}(\pw,\phi)$ takes a password $\pw$ and a message $\phi$ and outputs $\pk$.
    \end{itemize}
    \item A (potentially oracle-aided) key derivation function $\cK^{O_{\mathsf{Key}}}(K) = \pkey$.
    \item A (potentially oracle-aided) tag function $\cT^{O_{\mathsf{Tag}}}(\pw,\pk,\phi,c,K) = \tau$. 
\end{itemize}


\noindent Given the above ingredients, the OEKE protocol is specified as follows.
\begin{itemize}
    \item $\PAKE_1(\pw) \to (\st,\phi)$:
    \begin{itemize}
        \item Sample $(\pk,\sk) \gets \KG$.
        \item Sample $r$ and compute $\phi = \cE^{O_{\mathsf{Enc}}}(\pw,\pk;r)$.
        \item Set $\st = (\sk,\pk,\phi)$
    \end{itemize}
    \item $\PAKE_2(\pw,\phi) \to (c,\tau,\pkey)$:
    \begin{itemize}
        \item Compute $\pk = \cD^{O_{\mathsf{Enc}}}(\pw,\phi)$.
        \item Sample $(c,K) \gets \Enc(\pk)$.
        \item Set $\pkey = \cK^{O_{\mathsf{Key}}}(K)$.
        \item Set $\tau = \cT^{O_{\mathsf{Tag}}}(\pw,\pk,\phi,c,K)$.
    \end{itemize}
    \item $\PAKE_3(\pw,\st,c,\tau) \to \pkey$:
    \begin{itemize}
        \item Compute $K = \Dec(\sk,c)$.
        \item Set $\pkey = \cK^{O_{\mathsf{Key}}}(K)$.
        \item If $\tau = \cT^{O_{\mathsf{Tag}}}(\pw,\pk,\phi,c,K)$, output $\pkey$, otherwise output $\bot$.
    \end{itemize}
\end{itemize}
Besides correctness and security of $\KEM$, we also require public key uniformity (\zcref[S]{def:KEM-unikeys}) for post-quantum security; otherwise a quantum adversary can mount an offline password attack by obtaining $\phi = \ICEnc(\pw, \pk)$, computing $\pk_j = \ICDec(\pw_j, \phi)$ for many candidate passwords $\pw_j$, and checking which $\pk_j$ follows the public key distribution. 

\subsubsection{Game-based security}

Next, we formulate a game-based stand-alone notion of security for two-message PAKE.

\begin{definition}[Two-message PAKE: Correctness]\label{def:PAKE-correctness} A two-message PAKE scheme $(\PAKE_1,\PAKE_2,\PAKE_3)$ satisfies \emph{correctness} if for any $\pw$,
\[\Pr\left[\pkey_A = \pkey_B : \begin{array}{r} (\st,\phi) \gets \PAKE_1(\pw) \\ (c,\tau,\pkey_B) \gets \PAKE_2(\pw,\phi) \\ \pkey_A \gets \PAKE_3(\pw,\st,c,\tau)\end{array}\right] = 1-\mathsf{negl}(\lambda).\]
    
\end{definition}

\begin{definition}[Two-message PAKE: Passive security]\label{def:PAKE-passive} A two-message PAKE scheme $(\PAKE_1,\PAKE_2,\PAKE_3)$ satisfies \emph{passive security} if for any $\pw$ and QPT adversary $\cA$, 
\[\Pr\left[\cA(\pw,\phi,c,\tau,\chall) = b : \begin{array}{r} (\st,\phi) \gets \PAKE_1(\pw) \\ (c,\tau,\pkey) \gets \PAKE_2(\pw,\phi) \\ b \gets \{0,1\} \\ \text{If } b = 0, \chall \coloneqq \pkey \\ \text{If } b = 1, \chall \gets \{0,1\}^\lambda\end{array}\right] \leq \frac{1}{2} + \mathsf{negl}(\lambda). \]
    
\end{definition}
\begin{definition}[Two-message PAKE: Security against malicious initiator]\label{def:PAKE-initiator} A two-message PAKE scheme $(\PAKE_1,\PAKE_2,\PAKE_3)$ satisfies \emph{security against malicious initiator} if for any distribution $D$ with min-entropy $\gamma$ over passwords $\mathcal{PW}$ and any QPT adversary $(\cA_1, \cA_2)$, \[\Pr\left[\cA_2(\pw,c,\tau,\chall,\sigma) = b : \begin{array}{r} \pw \gets D \\  (\phi^*,\sigma) \gets \cA_1(1^\lambda) \\ (c,\tau,\pkey) \gets \PAKE_2(\pw,\phi^*) \\ b \gets\{0,1\} \\ \text{If } b= 0, \chall \coloneqq \pkey \\ \text{If } b = 1, \chall \gets \{0,1\}^\lambda\end{array}\right] \leq \frac{1}{2} + \frac{2^{-\gamma}}{2} + \mathsf{negl}(\lambda).\]
\end{definition}

\begin{definition}[Two-message PAKE: Security against malicious respondent]\label{def:PAKE-respondent} A two-message PAKE scheme $(\PAKE_1,\PAKE_2,\PAKE_3)$ satisfies \emph{security against malicious respondent} if for any distribution $D$ with min-entropy $\gamma$ over passwords $\mathcal{PW}$ and any QPT adversary $(\cA_1,\cA_2)$, \begin{align*}\Pr\left[\cA_2(\pw,\chall,\sigma) = b  : \begin{array}{r} \pw \gets D \\  (\st,\phi) \gets \PAKE_1(\pw) \\ (c^*,\tau^*,\sigma) \gets \cA_1(\phi) \\ \pkey \gets \PAKE_3(\pw,\st,c^*,\tau^*) \\ b \gets\{0,1\} \\ \text{If } \pkey = \bot, \chall \coloneqq \bot \\ \text{If } b= 0, \chall \coloneqq \pkey \\ \text{If } b = 1, \chall \gets \{0,1\}^\lambda\end{array}\right] \leq \frac{1}{2} + \frac{2^{-\gamma}}{2} + \mathsf{negl}(\lambda).\end{align*}
\end{definition}

\subsubsection{UC security}
Universally Composable security was introduced by Canetti \cite{FOCS:Canetti01} as a version of simulation-based security that provides security when the protocol is concurrently composed with other protocols. UC is similar to standalone security (as in \zcref[S]{OT:simulation}) except the distinguisher is now \emph{interactive} (the interactive distinguisher is referred to as the environment $\Env$).
\cite{EC:Unruh10} formally defined UC-security for quantum protocols and adversaries and proved that secure protocol composition still holds. We adopt the same framework, so the simulator, adversary and ideal functionalities can be quantum. However, we study post-quantum protocols, so all honest parties are classical and send/receive only classical messages, and the ideal functionality being UC-realized is classical.

\paragraph{A word on quantum ideal functionalities.} In order to meaningfully talk about the post-quantum security of protocols in the UC framework that use a quantum accessible random oracle/ideal cipher, one must define a suitable ideal functionality. To the best of  our knowledge, a formal definition for either has not appeared in the literature (\cite{EC:LyuLiuHan24} provides a UC proof in the QROM but makes no reference to an ideal functionality). A priori, it is not obvious that such ideal functionalities exist, since in the UC framework ideal functionalities must be \emph{efficient}, and defining efficient procedures for  simulating quantum access to ideal primitives with negligible error has proven quite challenging, with a long line of recent work \cite{C:Zhandry19,AC:Unruh23,carolan2026compressed,grinko2025quantum,foxman2026quantum}.

By \cite[Lemma~4]{C:Zhandry19} Zhandry's compressed standard oracle is an efficient and perfectly correct simulation of quantum access to a random oracle by the unitary $\ket{x,y} \mapsto \ket{x,y \oplus H(x)}$,  so it can be used to define a corresponding UC functionality for the QROM\footnote{One could also define an ideal functionality where access is given via the phase query interface $\ket{x,y} \mapsto (-1)^{y \cdot H(x)}\ket{x,y}$ but this is not useful in the post-quantum setting, as we assume classical algorithms querying the random oracle instantly measure the result in the computational basis.}. 

Ideal ciphers are more challenging: for example, there is no known quantum analogue of \cite{C:DaiSte16} enabling efficient simulation of an ideal cipher using a random oracle. It seems plausible that the techniques of \cite{carolan2026compressed} can be applied to an ideal cipher, but this would incur substantial simulation error (perhaps mitigated by very recent work \cite{carolan2026compressedpermutationoraclesrevisited}). Two other recent works \cite{grinko2025quantum,foxman2026quantum} define a procedure to perfectly simulate a broad class of oracles that includes ideal ciphers,\footnote{Technically, this is the class of oracles sampled from a compact Lie group according to the Haar measure.} but it is not yet known if this ideal cipher simulation is efficient\footnote{\cite{grinko2025quantum,foxman2026quantum} both give simulations of random permutations whose efficiency reduces to efficiency of the Clebsch-Gordan transform for permutations, which is currently an open problem.}. However, this is not an issue for us, as we aim to show UC-\emph{insecurity}. In other words, we show that even if such an ideal functionality existed, (O)EKE would be UC-insecure relative to it.

\begin{figure}[t]
    \framebox{\begin{minipage}{0.98\linewidth}
    \begin{itemize}
        \item On input $(\NewS, \sid, P_1, P_2, \pw, \role)$ from $P_1$, send $(\NewS, \sid, P_1, P_2, \role)$ to $\Sim$. Furthermore, if this is the first $\NewS$ message for $\sid$, or this is the second $\NewS$ message for $\sid$ and there is a record $\left<P_2, P_1, \cdot, \cdot \right>$, then record $\left<P_1, P_2, \pw, \role\right>$ and mark it $\fresh$.
        \item On $(\TestPwd, \sid, P_1, \pw^*)$ from $\Sim$, if there is a record $\left<P_1, P_2, \pw, \cdot\right>$ marked $\fresh$, then do:
            \begin{itemize}
              \item If $\pw^* = \pw$, then mark the record $\compromised$ and send ``correct guess'' to $\Sim$.
              \item If $\pw^* \neq \pw$, then mark the record $\interrupted$ and send ``wrong guess'' to $\Sim$.
            \end{itemize}
        \item On $(\NewKey, \sid, P_1, K^* \in \{0,1\}^\lambda)$ from $\Sim$, if there is a record $\left<P_1, P_2, \pw, \role \right>$, and this is the first $\NewKey$ message for $\sid$ and $P_1$, then output $(\sid, K)$ to $P_1$, where $K$ is defined as follows:
            \begin{itemize}
              \item If the record is $\compromised$, then set $K := K^*$.
              \item If the record is $\fresh$, a key $(\sid, K')$ has been output to $P_2$, at which time there was a record $\left<P_2, P_1, \pw, \cdot \right>$ marked $\fresh$, and $K' \neq \bot$, then set $K := K'$.
              \item Otherwise if $\role$ = ``respondent'' sample $K \gets \{0,1\}^\lambda$; if $\role$ = ``initiator'', set $K := \bot$.
            \end{itemize}
            Finally, mark the record $\completed$.
    \end{itemize}
    \end{minipage}}
        \caption{UC PAKE functionality $\FPake$}
    \label{fig:pake-functionality}
\end{figure}

\paragraph{The PAKE functionality.} Our UC PAKE functionality $\FPake$ is the same that was used in \cite{cryptoeprint:2026/1331} to analyze OEKE, and is given in \zcref[S]{fig:pake-functionality}: it is a minor modification of the standard PAKE functionality that incorporates one-sided explicit authentication. We now give a brief overview of how it works.

The functionality involves two parties, $P_1$ with password $\pw$ and $P_2$ with password $\pw'$. Each party is also given a $\role$: the party with $\role =$ ``initiator'' sends the first protocol message. The functionality captures explicit authentication for the ``initiator'' party. Each execution of the protocol involves a session for $P_1$ and a session for $P_2$, both of which have a corresponding record marked with a state:

\begin{itemize}
  \item When a session is established (by a $\NewS$ command), it is marked $\fresh$, and will remain $\fresh$ until the simulator (ideal adversary) attacks the session.
  \item The simulator may attack a $\fresh$ $P$ session by sending a $\TestPwd$ command for $P$, on a \emph{password guess} $\pw^*$. This models an \emph{online guessing attack} in the real world, where the simulator runs the algorithm of $P'$ on a password guess $\pw^*$ and communicates with $P$. If $\pw^* = \pw$, this is a successful attack, and the session is marked $\compromised$; otherwise the session is marked $\interrupted$. Note that once a session becomes $\compromised$ or $\interrupted$, it can never return to $\fresh$; this in particular means that $\TestPwd$ can be run only once on any specific session.
  \item The only way a party receives a session key is by the simulator sending a $\NewKey$ command, after which the session will output a key (or possibly the $\bot$ symbol for the initiator) to the party depending on the states of this session and its counter session. Subsequently the session is marked $\completed$ (so that $\TestPwd$ cannot be sent after the session ends).
\end{itemize}
We now emphasize a few points about $\FPake$ that will be important later.
\begin{enumerate}
    \item Notice that a simulator can only guess the password of a party once: after that the record is no longer $\fresh$. We define $\pw^*$ to be the password guess for a party $P$ in session $\sid$ if $\Sim$ sent the command $(\TestPwd, \sid, P, \pw^*)$ when there was a record $\langle P, P', \cdot, \cdot \rangle$ marked $\fresh$. If no such command has been sent, the password guess is $\bot$. Once $\NewKey$ has been sent to $P$ the password guess is fixed, as $\Sim$ can no longer send $\TestPwd$ to $P$.
    \item If $\Sim$ is to be a successful simulator, whenever a party $P$ outputs a key in the real world, $\Sim$ must send $\NewKey$ to $P$ in the ideal world, otherwise the real and ideal world are trivially distinguishable.
    \item When $\Sim$ sends $\NewKey$ to the respondent $P$, unless the record for $P$ has already been marked $\compromised$ (which requires $\Sim$ to have made a correct password guess), the key of $P$ is chosen uniformly and independently of $\Sim$'s view, and will remain independent of $\Sim$'s view for the rest of the security experiment. 
\end{enumerate}

\section{Attacks on simulation-based security}\label{sec:attack}

\subsection{Masny-Rindal OT}

\begin{theorem}\label{thm:MR-attack}
    Let $\Pi = (\Rec_1,\Sen,\Rec_2)$ be the Masny-Rindal OT protocol (\zcref[S]{subsubsec:MR-syntax}) instantiated with any KEM with negligible correctness error (\zcref[S]{def:KEM-correctness}). Then $\Pi$ is not post-quantum simulation-secure (\zcref[S]{OT:simulation}).
\end{theorem}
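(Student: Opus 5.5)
The plan is a case split on whether the KEM has computationally uniform public keys (\zcref[S]{def:KEM-unikeys}). The theorem assumes only correctness, so both cases must be handled.

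\textbf{Case 1: public-key uniformity fails.} Suppose some QPT $\cB$ distinguishes $\pk(r)$ from uniform with non-negligible advantage. Then I attack simulation security against a malicious sender. The malicious sender receives $(t_0,t_1)$ and computes $\pk_0 = t_0\oplus H(t_1)$ and $\pk_1 = t_1 \oplus H(t_0)$. Here $\pk_b$ is a real public key, and $\pk_{1-b}$ is (close to) uniform because $t_{1-b}$ is fresh and $H$ is random. Running $\cB$ on both keys yields a guess $b^*$ with non-negligible advantage, and the sender outputs it. The environment chose the uniform bit $b$ and compares. In the ideal world, $\cF_\OT$ reveals nothing about $b$ to the simulator, so $\Pr[b^*=b]=1/2$ exactly. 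This distinguishes the two worlds.

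\textbf{Case 2: public-key uniformity holds.} Here I attack the malicious receiver, following the dummy-branch strategy of the technical overview.
\begin{itemize}
\item \emph{Initial state.} The state $\ket{\psi}_{D,A}$ picks a classical branch $c\gets\{0,1\}$, recorded in $D$. If $c=0$, it prepares $\sum_{b,u}\ket{u}_D\ket{b}_\Br\ket{u}_{\mathsf{P}}$ with $u\gets\PK$. If $c=1$, it prepares $\sum_{b,r}\ket{r}_D\ket{b}_\Br\ket{\pk(r)}_{\mathsf{P}}$.
\item \emph{Adversary.} $\cA$ coherently samples $t_{1-b}$, queries $H$ to set $t_b=\mathsf{P}\oplus H(t_{1-b})$, and measures $(t_0,t_1)$. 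It then uncomputes everything except $\Br$ and $\mathsf{P}$ (using a second $H$ query), receives $(c_0,c_1)$, and outputs its residual register together with the transcript.
\item \emph{Distinguisher.} If $c=0$, it projects onto the state the real execution produces, namely $\frac{1}{\sqrt2}(\ket{0,t_0\oplus H(t_1)}+\ket{1,t_1\oplus H(t_0)})$ on $\Br,\mathsf{P}$ jointly with $D$. In the real world this accepts with probability $1$. If $c=1$, it measures $\Br$ to get $b$ and the entangled $D$ to get $r$. It then computes $\sk(r)$ and checks $\Dec(\sk(r),c_b)=m_b$. By correctness this accepts with probability $1-\mathsf{negl}$ in the real world.
\end{itemize}

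In the ideal world, the simulator must at some point send a bit $b^*$ to $\cF_\OT$. The output $m_{1-b^*}$ is then uniform and independent of everything the simulator sees, so the $c=1$ check fails whenever the measured $b$ equals $1-b^*$. It therefore suffices to show $\Pr[b=1-b^*\mid c=1]\ge 1/2-\mathsf{negl}$. I do this in two steps.
\begin{enumerate}
\item \emph{Branch $c=0$.} Ideal-world success requires the $c=0$ projection to accept with probability $1-\mathsf{negl}$. By gentle measurement (\zcref[S]{lemma:gentle-measurement}), the post-simulation state is then close to the real symmetric superposition. That superposition is fixed independently of $b^*$, so measuring $\Br$ gives $b=1-b^*$ with probability $1/2\pm\mathsf{negl}$.
\item \emph{Transfer to $c=1$.} The quantity $\Pr[b=1-b^*]$ is efficiently computable from the simulator's registers alone: run the simulator, record $b^*$, measure $\Br$. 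These registers never include $D$. So the quantity must agree across branches up to negligible error, by a reduction to the purified PKU game. The reduction embeds the $\Yr$ registers as $\mathsf{P}$ for $b=0$ and $b=1$ through two hybrid steps, discards the $\Xr$ registers (which play the role of $D$), and simulates the random oracle through the simulator itself.
\end{enumerate}

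I expect step 2 to be the main obstacle. The difficulty is making sure the reduction can run the entire ideal execution, including the simulator's handling of the quantum random oracle, without needing $D$. It must also handle the coherent superposition over $b$, which requires embedding two PKU challenges; a hybrid over the two $b$-branches of the purified game should handle this. A secondary point is that the $c=0$ check only certifies the state up to trace distance $O(\sqrt{\mathsf{negl}})$, and this error must be propagated into the final bound. Combining the two steps, the ideal-world acceptance probability on $c=1$ is at most roughly $1/2+\mathsf{negl}$, versus $1-\mathsf{negl}$ in the real world. This gives non-negligible distinguishing advantage for every simulator.
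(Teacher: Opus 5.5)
Your proposal is correct and follows essentially the same route as the paper. The shared ingredients are the case split on public-key uniformity (with the malicious-sender attack when it fails), the two-branch dummy/honest initial state, the projection check on branch $c=0$ with the decryption check on branch $c=1$, and finally public-key uniformity plus Gentle Measurement to show that the measured $b$ is independent of the simulator's $b^*$; the paper merely orders these as hybrids, first swapping the $c=1$ state for the $c=0$ state and then post-selecting. The step you flag as the main obstacle is easier than you expect: your initial state factors as $\bigl(\frac{1}{\sqrt2}\sum_b\ket{b}_\Br\bigr)\otimes\ket{\mu_c}$, with $\Dr,\mathsf{P}$ in the roles of $\Xr_c,\Yr_c$, so a single purified PKU challenge suffices and no hybrid over the two $b$-branches is needed.
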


\begin{proof}

    First, we note that security against a malicious sender fails if the KEM does not have computationally uniform public keys (\zcref[S]{def:KEM-unikeys}). Indeed, given $t_0,t_1$ produced by the honest receiver on input bit $b$, the sender can recover $\pk_0 = t_0 \oplus H(t_1), \pk_1 = t_1 \oplus H(t_0)$. Since $t_{1-b}$ was sampled uniformly at random by the honest receiver, $\pk_{1-b}$ is distributed uniformly at random. On the other hand, if KEM does not have computationally uniform keys,  $\pk_b$ is \emph{distinguishable} from uniformly random, which gives the sender a non-negligible advantage in distinguishing $\pk_b$ from $\pk_{1-b}$ and thus in predicting $b$.

    Therefore, we assume that KEM has computationally uniform public keys, and to complete the proof we show that $\Pi$ is not simulation-secure against a malicious receiver. In particular, we define a (mixed) state\footnote{It is straightforward to obtain a counterexample using a pure state $\ket{\psi}_{\Dr,\Ar}$ by \emph{coherently} recording the choice of the branch $c \in \{0,1\}$ in register $\Dr$.} on registers $\Dr,\Ar$ and an adversary distinguisher pair $\cA,\cD$ for which no simulator $\Sim$ can possibly exist. 

    The state will be a uniform mixture over two pure states $\ket{\psi_0}_{\Dr,\Ar}$ and $\ket{\psi_1}_{\Dr,\Ar}$ specified by a choice of branch $c \in \{0,1\}$. In a slight abuse of notation, we consider the choice of the bit $c$ to be stored in register $\Dr$, so it is known to the distinguisher (but not necessarily to the adversary). These states are defined as



    \begin{align*}\ket{\psi_0} &= \frac{1}{\sqrt{2|\PK|^2}}\sum_{b \in \{0,1\},t_{1-b} \in \PK,\pk_b \in \PK}\ket{\pk_b}_{\Dr}\ket{b}_\Br\ket{t_{1-b}}_{\Xr}\ket{\pk_b}_\Kr, \\ \ket{\psi_1} &= \frac{1}{\sqrt{2|\PK||\R|}}\sum_{b \in \{0,1\},t_{1-b} \in \PK,r_b \in \R}\ket{r_b}_{\Dr}\ket{b}_\Br\ket{t_{1-b}}_{\Xr}\ket{\pk(r_b)}_\Kr,
    \end{align*}

    where $\Ar = (\Br,\Xr,\Kr)$. Now, $\cA$ takes as input register $\Ar$ and does the following.


    \begin{enumerate}
        \item Initialize a fresh register $\Yr$ and query $H$ on registers $\Xr,\Yr$ to obtain the following state:
        \begin{align*}
            &\text{if } c = 0: \quad \frac{1}{\sqrt{2|\PK|^2}}\sum_{b \in \{0,1\},t_{1-b} \in \PK,\pk_b \in \PK}\ket{\pk_b}_{\Dr}\ket{b}_\Br\ket{t_{1-b}}_\Xr\ket{\pk_b}_\Kr\ket{H(t_{1-b})}_{\Yr} \\
            &\text{if } c = 1: \quad \frac{1}{\sqrt{2|\PK||\R|}}\sum_{b \in \{0,1\},t_{1-b} \in \PK,r_b \in \R}\ket{r_b}_{\Dr}\ket{b}_\Br\ket{t_{1-b}}_\Xr\ket{\pk(r_b)}_\Kr\ket{H(t_{1-b})}_{\Yr}.
        \end{align*}
        \item Initialize fresh registers $\Treg_0,\Treg_1$ and apply the computation to registers $(\Br,\Xr,\Kr,\Yr,\Treg_0,\Treg_1)$ that produces the following state:
        \begin{align*}
            &\text{if } c = 0: \quad \frac{1}{\sqrt{2|\PK|^2}}\sum_{b \in \{0,1\},t_{1-b} \in \PK,\pk_b \in \PK}\ket{\pk_b}_{\Dr}\ket{b}_\Br\ket{t_{1-b}}_\Xr\ket{\pk_b}_\Kr\ket{H(t_{1-b})}_{\Yr}\ket{t_0}_{\Treg_0}\ket{t_1}_{\Treg_1} \\
            &\text{if } c = 1: \quad \frac{1}{\sqrt{2|\PK||\R|}}\sum_{b \in \{0,1\},t_{1-b} \in \PK,r_b \in \R}\ket{r_b}_{\Dr}\ket{b}_\Br\ket{t_{1-b}}_\Xr\ket{\pk(r_b)}_\Kr\ket{H(t_{1-b})}_{\Yr}\ket{t_0}_{\Treg_0}\ket{t_1}_{\Treg_1},
        \end{align*}
        where $t_b = \pk_b \oplus H(t_{1-b})$ in branch $c = 0$ and $t_b = \pk(r_b) \oplus H(t_{1-b})$ in branch $c=1$.
        \item Query $H$ on registers $\Xr,\Yr$, discard $\Yr$, and measure $\Treg_0,\Treg_1$ to obtain $(t_0,t_1)$. On branch $c=0$, this yields the state
        \[\frac{1}{\sqrt{2}}\sum_{b \in \{0,1\}}\ket{t_b \oplus H(t_{1-b})}_{\Dr}\ket{b}_\Br\ket{t_{1-b}}_\Xr\ket{t_b \oplus H(t_{1-b})}_\Kr.\] We leave the precise description of the resulting state on branch $c=1$ implicit, as it will not be important for the argument.
        
        \item Query $H$ (classically) on $t_0$ and $t_1$ to obtain $y_0$ and $y_1$.
        \item Output $(t_0,t_1)$ to the sender, receive $(c_0,c_1)$, and send $(\Br,\Xr,\Kr,(t_0, t_1, y_0, y_1, c_0, c_1))$ to the distinguisher $\cD$, which is a mix of quantum registers and classical information. 
    \end{enumerate}

    Finally, we describe the behavior of the distinguisher $\cD$, who knows the value of $c$, and additionally receives $(m_0,m_1)$ (from the honest sender in the real experiment, and from $\cF_\OT$ in the ideal experiment).
    \begin{itemize}
        \item If $c = 0$, define \[\ket{\psi_\chck} \coloneqq \frac{1}{\sqrt{2}}\sum_{b \in \{0,1\}}\ket{t_b \oplus y_{1-b}}_{\Dr}\ket{b}_\Br\ket{t_{1-b}}_\Xr\ket{t_b \oplus y_{1-b}}_\Kr\] and apply the measurement $\{\ketbra{\psi_\chck}, I - \ketbra{\psi_\chck}\}$ to registers $(\Dr,\Br,\Xr,\Kr)$. Output 1 if the first outcome and 0 if the second. 
        \item If $c = 1$, measure $\Dr,\Br$ to obtain $(r_b,b)$, compute $m = \Dec(\sk(r_b),c_b)$, and output 1 if $m = m_b$ and 0 otherwise.
    \end{itemize}

    Now, suppose towards contradiction that there exists a simulator $\Sim$ for which  \[\left|\Pr[\cD(\Dr,\langle \cA(A),\Sen \rangle) = 1] - \Pr[\cD(\Dr,\langle \Sim(\Ar),\cF_\OT\rangle) = 1]\right| = \mathsf{negl}(\lambda).\] Let \[p_\Real = \Pr[\cD(\Dr,\langle \cA(A),\Sen \rangle) = 1], \quad p_\Ideal = \Pr[\cD(\Dr,\langle \Sim(\Ar),\cF_\OT\rangle) = 1].\] Under this assumption, we can establish the following two claims, which yield a contradiction.
    \end{proof}


    \begin{claim}
        $p_\Real = 1-\mathsf{negl}(\lambda)$.
    \end{claim}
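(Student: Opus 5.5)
We show that in the real execution $\cD$ accepts with overwhelming probability on each branch $c \in \{0,1\}$; averaging over $c$ then gives $p_\Real = 1-\mathsf{negl}(\lambda)$.

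\emph{Branch $c=0$.} We first check that after step 3 the residual state is exactly $\ket{\psi_\chck}$. Step 2 is a reversible computation writing $(t_0,t_1)$, and step 3 uncomputes $\Yr$ by querying $H$ on $\Xr$ again, so $\Yr$ returns to $\ket{0}$ and is discarded without disturbing anything. We then measure $\Treg_0,\Treg_1$ and obtain $(t_0,t_1)$. Fix the outcome and consider which terms $(b,t_{1-b},\pk_b)$ survive. For $b=0$ we need $\Xr = t_1$ and $\pk_0 = t_0 \oplus H(t_1)$. For $b=1$ we need $\Xr = t_0$ and $\pk_1 = t_1 \oplus H(t_0)$. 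So for every pair $(t_0,t_1)$ exactly one term survives per value of $b$, and both have equal amplitude. This holds because on branch $c=0$ the value $\pk_b$ ranges uniformly over all of $\PK$, so the map $(b,t_{1-b},\pk_b)\mapsto(t_0,t_1)$ is a bijection for each fixed $b$. The post-measurement state is therefore the normalized equal superposition given in step 3. Since $y_0 = H(t_0)$ and $y_1 = H(t_1)$ are obtained honestly in step 4, this state coincides with $\ket{\psi_\chck}$. The projection then accepts with probability exactly $1$. The sender and its messages act only on classical copies and never touch $(\Dr,\Br,\Xr,\Kr)$.

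\emph{Branch $c=1$.} Here $\cD$ measures $\Dr,\Br$ to obtain $(r_b,b)$. Because $\Dr$ is entangled with $\Kr$ and $\Xr$, the measured values are consistent with the measured commitment: $t_b = \pk(r_b)\oplus H(t_{1-b})$. Every surviving term satisfies this relation, so any outcome of the measurement does. Consequently the honest sender computes $\pk_b = t_b \oplus H(t_{1-b}) = \pk(r_b)$ and encapsulates $(c_b,m_b) \gets \Enc(\pk(r_b))$. The only remaining task is to bound $\Pr[\Dec(\sk(r_b),c_b) \neq m_b]$ using correctness.

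\emph{Main obstacle.} The distribution of $r_b$ after the measurement is not obviously uniform, because conditioning on $(t_0,t_1)$ could bias it toward $r$ whose public keys collide with others. We handle this by noting that, marginalized over the measurement of $(t_0,t_1)$ and then $\Dr$, the pair $(b,r_b)$ is distributed exactly as in the initial state. The operations applied to $\Dr$ and $\Br$ commute with the computational-basis measurements of $\Treg$, and the content of $\Treg$ is a deterministic function of $(b,r_b,t_{1-b})$ given $H$. Hence $r_b$ is uniform over $\R$, and $\pk_b = \pk(r_b)$ is an honest key. Moreover, the encapsulation randomness is independent of everything else. Therefore the failure probability equals the correctness error of \zcref[S]{def:KEM-correctness}, which is $\mathsf{negl}(\lambda)$.

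Averaging the two branches gives $p_\Real \geq 1 - \tfrac{1}{2}\mathsf{negl}(\lambda)$.
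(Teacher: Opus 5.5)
Your proposal is correct and follows the same two-branch argument as the paper. On $c=0$ the post-measurement state is exactly $\ket{\psi_\chck}$, so the projection accepts with probability $1$; on $c=1$ the measured $(b,r_b)$ is consistent with $(t_0,t_1)$, so KEM correctness gives acceptance with probability $1-\mathsf{negl}(\lambda)$. Your added observation that $r_b$ is marginally uniform (the computational-basis measurements act on distinct basis-state terms, and steps 1--3 leave the distribution of $(\Dr,\Br)$ unchanged) fills in a detail the paper leaves implicit in ``correctness directly implies.''
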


    \begin{proof}
        In branch $c=0$, the distinguisher outputs 1 with probability 1 due to the definition of the state on registers $(\Dr,\Br,\Xr,\Kr)$. In branch $c=1$, the correctness of the KEM directly implies that $\Pr[m = m_b] = 1-\mathsf{negl}(\lambda)$, and so the distinguisher outputs 1 with probability $1-\mathsf{negl}(\lambda)$.

    \end{proof}

    \begin{claim}
        $p_\Ideal \leq 3/4 + \mathsf{negl}(\lambda)$.
    \end{claim}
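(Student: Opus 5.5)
We bound $p_\Ideal$ by splitting on the branch $c$ and showing that the simulator cannot satisfy both branches at once. Write $p_\Ideal = \frac{1}{2}(p^0_\Ideal + p^1_\Ideal)$, where $p^c_\Ideal$ is the acceptance probability of $\cD$ conditioned on branch $c$. In the ideal world, $\Sim$ must at some point send a choice bit $b^*$ to $\cF_\OT$, and $\Sim$ only ever touches register $\Ar$ (plus the oracle interface it maintains). Once $b^*$ is sent, $m_{1-b^*}$ is sampled uniformly by $\cF_\OT$ and is independent of everything $\Sim$ and the adversary see.

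\textbf{Branch $c=1$.} The distinguisher measures $\Dr,\Br$ to get $(r_b,b)$ and checks whether $\Dec(\sk(r_b),c_b) = m_b$. Whenever the measured $b \neq b^*$, the value $m_b = m_{1-b^*}$ is a uniform $\lambda$-bit string independent of $\Dec(\sk(r_b),c_b)$, so the check passes with probability at most $2^{-\lambda}$. Hence $p^1_\Ideal \leq \Pr[b = b^*] + \mathsf{negl}(\lambda)$, where $b$ is the outcome of measuring $\Br$ in the final ideal-world state on branch $1$.

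\textbf{Branch $c=0$.} Since the check projector accepts with probability $1$ in the real world, indistinguishability forces $p^0_\Ideal \geq 1 - \mathsf{negl}(\lambda)$ if $\Sim$ is to succeed. We show this pins down the $\Br$-register distribution. The ideal final state on $(\Dr,\Br,\Xr,\Kr)$ must be $\mathsf{negl}$-close to $\ket{\psi_\chck}$, by the gentle measurement lemma (\zcref[S]{lemma:gentle-measurement}) applied to the projector $\ketbra{\psi_\chck}$. Measuring $\Br$ on $\ket{\psi_\chck}$ gives a uniform bit. Therefore on branch $0$, $\Pr[b = b^*] \leq 1/2 + \mathsf{negl}(\lambda)$ for the bit $b^*$ the simulator committed to. This holds even conditioned on $b^*$, because $b^*$ is classical and fixed before the final measurement.

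\textbf{Transfer from branch $0$ to branch $1$ (main obstacle).} The two branches differ only in that $\Kr$ holds a uniform $\pk_b$ entangled with $\Dr$ versus $\pk(r_b)$ entangled with $\Dr$. Consider the experiment in which $\Sim$, $\cF_\OT$, the sender and the oracle run, and at the end we measure $\Br$ and compare it with $b^*$. This is an efficient procedure acting on $\Ar$ only; $\Dr$ is never touched, since measuring $\Br$ commutes with anything on $\Dr$. Tracing out $\Dr$, the reduced state on $\Ar$ is, up to the mixing over $b$ and $t_{1-b}$, exactly the challenge $\Yr_0$ versus $\Yr_1$ in the purified PKU game. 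A hybrid argument (embed the PKU challenge as $\Kr$, sample $b$ and $t_{1-b}$ oneself) therefore turns any gap between $\Pr[b=b^*]$ on the two branches into a PKU advantage. By \zcref[S]{def:KEM-unikeys}, which we assumed at the start of the theorem's proof, $\Pr_{c=1}[b=b^*] \leq 1/2 + \mathsf{negl}(\lambda)$.

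The subtlety here is that the simulator simulates the random oracle itself and may be non-black-box. The reduction must run $\Sim$ together with its oracle simulation and also $\cF_\OT$'s sampling. All of these are efficient and independent of $\Dr$, so the reduction is a QPT PKU adversary. We must also confirm that the final $\Br$ measurement on branch $0$ really reflects the $\ket{\psi_\chck}$ closeness; this is fine because the check and the $\Br$ measurement are applied to the same output state, in separate runs.

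\textbf{Combining.} We get $p^1_\Ideal \leq 1/2 + \mathsf{negl}(\lambda)$ and $p^0_\Ideal \leq 1$. Hence $p_\Ideal \leq \frac{1}{2}\left(1 + \frac{1}{2}\right) + \mathsf{negl}(\lambda) = 3/4 + \mathsf{negl}(\lambda)$. If instead $p^0_\Ideal$ is bounded away from $1$, the bound only improves, since the step where we used gentle measurement then becomes unnecessary. In that case one can write $\Pr_{c=0}[b=b^*] \leq 1/2 + O(\sqrt{1 - p^0_\Ideal})$ and get the same conclusion up to constants. The cleanest route is to note that $p^0_\Ideal \geq 1 - \mathsf{negl}(\lambda)$ is forced by the assumed negligible real/ideal gap together with the preceding claim.
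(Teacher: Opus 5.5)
Your proposal is correct and follows essentially the same route as the paper. You replace the branch-$1$ test by the event $b=b^*$ (using that $m_{1-b^*}$ is uniform and hidden from $\Sim$), use public-key uniformity with $\Dr$ held by the challenger to transfer between branches, and use the assumed negligible real/ideal gap plus gentle measurement onto $\ket{\psi_\chck}$ to make the $\Br$ outcome uniform and independent of $b^*$; the only difference is that you apply gentle measurement on branch $0$ before the PKU swap, whereas the paper swaps first and then post-selects.

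Two small points of wording. First, the PKU reduction must prepare the uniform superposition over $(b,t_{1-b})$ coherently, since it is a fixed state in tensor product with $\Dr\Kr$; it should not sample these values classically. Second, your aside that the bound ``only improves'' when $p^0_\Ideal$ is bounded away from $1$ is not accurate, because one only gets $3/4 + O(\sqrt{1-p^0_\Ideal})$, so the assumed negligible gap is genuinely needed, exactly as in the paper.
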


    \begin{proof}
        Consider the following sequence of hybrids, where each hybrid outputs a bit.

        \begin{itemize}
            \item $\Hyb_0$: This is the ideal game: $\cD(\Dr,\langle \Sim(\Ar),\cF_\OT\rangle)$.
            \item $\Hyb_1$: Change how we compute the output in branch $c=1$ as follows. Let $(b^*,m^*)$ be the input that $\Sim$ chooses to send to $\cF_\OT$, and let $b$ be the bit measured by the distinguisher. Output 1 when $b = b^*$ and 0 otherwise.

            We can argue $ \Pr[\Hyb_1 = 1] \geq  \Pr[\Hyb_0 = 1] - \mathsf{negl}(\lambda)$ as follows. First, immediately outputting 1 when $b = b^*$ instead of computing and comparing $m$ with $m_b$ only potentially increases the probability of outputting 1 when $b = b^*$. Next, when $b \neq b^*$, we have that $m_b$ is chosen by $\cF_\OT$ as a uniform string in $\{0,1\}^\lambda$, independent of the rest of the experiment, and thus $\Pr[m = m_b] = \mathsf{negl}(\lambda)$. 
            \item $\Hyb_2$: Replace the branch $c=1$ state with the branch $c=0$ state. That is, prepare the state  
            \[\frac{1}{\sqrt{2|\PK|^2}}\sum_{b \in \{0,1\},t_{1-b} \in \PK,\pk_b \in \PK}\ket{\pk_b}_{\Dr}\ket{b}_\Br\ket{t_{1-b}}_\Xr\ket{\pk_b}_\Kr\] regardless of whether $c=0$ or $c=1$. 


            $|\Pr[\Hyb_1 = 1] - \Pr[\Hyb_2 = 1]| = \mathsf{negl}(\lambda)$ follows directly from the computational public key uniformity of the KEM. Indeed, notice that neither $\Hyb_1$ nor $\Hyb_2$ need to operate on the $\Dr$ register in branch $c=1$ to compute its output bit. The distinguisher no longer uses $\Dr$, and $\Dr$ is inaccessible to $\Sim$. Thus, this register can be held by an external challenger in the computational public key uniformity game.
            \item $\Hyb_3$: Right before the distinguisher makes its branch-dependent measurement, post-select the registers $(\Dr,\Br,\Xr,\Kr)$ onto \[\ket{\psi_\chck} = \frac{1}{\sqrt{2}}\sum_{b \in \{0,1\}}\ket{t_b \oplus y_{1-b}}_{\Dr}\ket{b}_\Br\ket{t_{1-b}}_\Xr\ket{t_b \oplus y_{1-b}}_\Kr.\]

             $|\Pr[\Hyb_2 = 1] - \Pr[\Hyb_3 = 1]| = \mathsf{negl}(\lambda)$ follows from the assumption that $|p_\Real - p_\Ideal| = \mathsf{negl}(\lambda)$, the fact that $p_\Real = 1-\mathsf{negl}(\lambda)$, and Gentle Measurement applied to the measurement $\{\ketbra{\psi_\chck},I-\ketbra{\psi_\chck}\}$ performed in the $c = 0$ branch.      
        \end{itemize}

    Combining the above, we have that $ \Pr[\Hyb_3 = 1] \geq  \Pr[\Hyb_0 = 1] - \mathsf{negl}(\lambda)$. However, it is easy to see that $\Pr[\Hyb_3 = 1] = 3/4$. Indeed, with probability 1/2, we are in branch $c=1$, and in this branch, measuring register $\Br$ of the pure state $\ket{\psi_\chck}$ yields a bit that is uniform and independent of $b^*$, meaning that $\Pr[b = b^*] = 1/2$. Thus, $p_\Ideal \leq 3/4 + \mathsf{negl}(\lambda)$. 
        
    \end{proof}

\subsection{OEKE with ideal cipher}\label{subsec:oekeattack}


We first consider OEKE (\zcref[S]{subsubsec:OEKE-syntax}) where the encryption and decryption algorithms are instantiated with an ideal cipher. That is, let $(\IC,\IC^{-1})$ be an ideal cipher, and let $\cE^{(\IC,\IC^{-1})}(\pw,\pk) = \IC(\pw,\pk)$ and $\cD^{(\IC,\IC^{-1})}(\pw,\phi) = \IC^{-1}(\pw,\phi)$.

\begin{theorem}\label{thm:IC-attack}
    Let $\Pi$ be an OEKE protocol instantiated with (1) any KEM with negligible correctness error (\zcref[S]{def:KEM-correctness}), (2) the IC-based encryption and decryption algorithms $\cE,\cD$ defined above, (3) any key derivation function $\cK$, and (4) any tag function $\cT$. Then $\Pi$ does not post-quantum UC-realize $\FPake$\footnote{By \cite{PKC:RoyXu23}, UC-realizing the UC PAKE functionality does not imply correctness.} (\zcref[S]{fig:pake-functionality}).
\end{theorem}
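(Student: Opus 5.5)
We follow the template of the proof of \zcref[S]{thm:MR-attack}, replacing the two-branch Masny-Rindal "cipher" with the ideal cipher keyed by two passwords. First we dispose of KEMs without computationally uniform public keys. The offline dictionary attack described after the OEKE syntax lets an eavesdropper on $\phi = \IC(\pw,\pk)$ decrypt under two candidate passwords and learn which one is correct. This is impossible in the ideal world, where the simulator gets no $\TestPwd$ on a passive session. So we assume public-key uniformity and build an attack with a corrupted initiator $P_1$ against an honest respondent $P_2$ holding one of two passwords $\pw_0 \neq \pw_1$. The environment picks $b$ (the index of $P_2$'s password) and a branch $c$. It hands the adversary a mixed state that is uniform over $c \in \{0,1\}$, with the choice of $c$ recorded in $\Dr$. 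For $c=0$ the state is $\sum_{b,\pk}\ket{\pk}_\Dr\ket{b}_\Br\ket{\pk}_\Kr$. For $c=1$ it is $\sum_{b,r}\ket{r}_\Dr\ket{b}_\Br\ket{\pk(r)}_\Kr$.

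The adversary coherently queries $\IC$ on $(\pw_b,\pk)$ into a fresh register $\Treg$, measures $\Treg$ to get $\phi$, and sends $\phi$ to $P_2$. Before measuring, we uncompute nothing else. Because $\IC(\pw_b,\cdot)$ is a permutation, the $\Kr$ register is determined by $(b,\phi)$ via $\IC^{-1}(\pw_b,\phi)$ and can be uncomputed to leave a clean state. On branch $c=0$ the leftover state is
\[
\ket{\psi_\chck} = \tfrac{1}{\sqrt{2}}\sum_{b}\ket{\IC^{-1}(\pw_b,\phi)}_\Dr\ket{b}_\Br .
\]
The adversary then forwards everything, including $P_2$'s message $(c',\tau')$, to the environment.

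The environment's test depends on $c$. If $c=0$, it queries $\IC^{-1}(\pw_0,\phi)$ and $\IC^{-1}(\pw_1,\phi)$ classically and projects onto $\ket{\psi_\chck}$. If $c=1$, it measures $(r,b)$, decrypts $K=\Dec(\sk(r),c')$, and checks that $b$ equals $P_2$'s actual password index and that $\cK(K)$ equals $P_2$'s output key. In the real world this check passes with probability $1-\mathsf{negl}$ by KEM correctness when the password matches. The environment therefore sets $P_2$'s password index equal to the measured $b$. This suggests a cleaner design: $P_2$ gets a uniformly random password from $\{\pw_0,\pw_1\}$ at the outset, and the $c=1$ test measures $\Br$, compares it with $P_2$'s password, and, if they agree, checks that the key matches. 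The real-world acceptance probability is then $\tfrac12\cdot 1 + \tfrac12(\tfrac12 + \tfrac12\cdot(1-\mathsf{negl}))$, with a wrong index always accepted. That is not quite right, so we instead make the $c=1$ test output 1 iff ($b$ equals the password index and the key matches) or ($b$ differs from it). The real acceptance is then $1-\mathsf{negl}$.

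For the ideal-world bound we run the same hybrids as before. Consider the $c=1$ branch. The simulator has at most one $\TestPwd$ on $P_2$'s fresh session. If it did not guess exactly $\pw_b$ for the measured $b$, then whenever $b$ matches the password $P_2$'s key is uniform and independent of the simulator's view, so the key check fails except with negligible probability (fact 3 about $\FPake$). Thus the acceptance probability is bounded by replacing the check with "$b \neq$ index, or $b = \pw^*$'s index $=$ index". The password guess $\pw^*$ is fixed when $\NewKey$ is sent, which happens before the environment's final measurement. Next we swap the $c=1$ state for the $c=0$ state using public-key uniformity; $\Dr$ is never touched by the simulator or by the modified test. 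Finally, by gentle measurement together with near-perfect acceptance on $c=0$, we post-select onto $\ket{\psi_\chck}$. Measuring $\Br$ then yields a uniform $b$ independent of the index and of $\pw^*$, so the $c=1$ branch accepts with probability at most $3/4$. Overall $p_\Ideal \le 7/8+\mathsf{negl}$, which contradicts $p_\Real = 1-\mathsf{negl}$.

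The main obstacle is the model bookkeeping. The simulator controls the ideal-cipher interface, so $\IC^{-1}(\pw_b,\phi)$ as answered by the simulator must play the role of $t_b \oplus y_{1-b}$ from the OT proof. We must argue that the environment's classical queries and the adversary's coherent queries are answered consistently in the real world, and that in the ideal world the post-selection step only uses whatever answers the simulator gave. The rest of the argument goes through unchanged because it never assumes the simulated cipher is a permutation, only that $p_\Ideal \approx p_\Real$. A secondary point is ensuring that $\NewKey$ for $P_2$ is issued before the environment's measurement, which follows from fact 2: $P_2$ outputs its key in the real world right after sending its message.
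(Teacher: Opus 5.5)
Your proposal is correct and is essentially the paper's proof. It uses the same two-branch environment (uniform versus honestly generated keys, purified into an environment-held register), the same hybrids (replacing the key check by a password-guess check, swapping branches via public-key uniformity, and post-selecting onto the coherent two-password state by gentle measurement), and reaches the same $7/8$ bound; the only difference is the cosmetic one that you measure the password index at the end rather than before forwarding $\phi$ to $P_2$. Three points should be made explicit: the $3/4$ bound on the $c=1$ branch needs the simulator's single $\TestPwd$ guess to equal $P_2$'s password with probability at most $1/2$ (true because $\FPake$ reveals nothing about the uniformly chosen index beforehand); the $c=1$ branch should also issue the two classical $\IC^{-1}(\pw_u,\phi)$ queries, so both branches coincide up to the post-selection point (as Step 5 of the paper's $\Env_1$ does); and in your preliminary case split, $\FPake$ does allow $\TestPwd$ on a passive session, so the actual obstruction is that a wrong guess interrupts the session, which the environment can detect.
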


\begin{proof}
By the same argument made at the beginning of the proof of \zcref[S]{thm:MR-attack} (see also the discussion in \zcref[S]{subsubsec:OEKE-syntax}), we can assume without loss of generality that $\KEM$ has post-quantum computationally uniform public keys (\zcref[S]{def:KEM-unikeys}), as otherwise $\Pi$ would not satisfy security against a malicious respondent.

Now, suppose $\Sim$ is a simulator for the protocol in \zcref[S]{OEKEdiag} with respect to the dummy adversary. Fix $\pw_0 \neq \pw_1$ to be any two distinct passwords. We construct an environment $\Env$ that distinguishes between the real and ideal world.
\paragraph{Environment $\Env_0$}
\begin{enumerate}
    \item Sample $b \gets \{0,1\}$ and send $(\NewS, \sid, P_2, P_1, \pw_b, \mathsf{respondent})$ to $P_2$ for some nonce $\sid$.
    \item Prepare the normalized state \[\frac{1}{\sqrt{2|\PK|}}\sum_{u \in \{0,1\}, \pk \in \PK} \ket{\pw_u}_{\Ar}\ket{\pw_u}_{\Xr}\ket{\pk}_{\Br}\ket{\pk}_{\Yr}.\]
    \item Query $\ICEnc$ on registers $\Xr, \Yr$ into a new register $\Zr$ (through $\adv$).
    \item Measure $\Zr$ to get $\phi$ and leftover state $\rho^0_{\Ar \Br \Xr \Yr}$.
    \item Query $\ICDec(\pw_u, \phi) = \pk_u$ for $u \in \{0,1\}$.
    \item Let $\ket{\psi} = \frac{1}{\sqrt{2}} \sum_{u \in \{0,1\}} \ket{\pw_u}_{\Ar}\ket{\pw_u}_{\Xr}\ket{\pk_u}_{\Br}\ket{\pk_u}_{\Yr}$ and apply the measurement $M = \{\Pi_\psi, I - \Pi_\psi\}$ to $\rho^0$ (here $\Pi_\psi$ is the projector onto $\ketbra{\psi}$).
    \item Output $1$ if $M$ accepts and otherwise output $0$.
\end{enumerate}
\paragraph{Environment $\Env_1$}
\begin{enumerate}
    \item Sample $b \gets \{0,1\}$ and send $(\NewS, \sid, P_2, P_1, \pw_b,\mathsf{respondent})$ to $P_2$ for some nonce $\sid$.
    \item Prepare the normalized state $\frac{1}{\sqrt{2|\R|}} \sum_{u \in \{0,1\}, r \in \R} \ket{\pw_u}_{\Ar}\ket{\pw_u}_{\Xr}\ket{r}_{\Br}\ket{\pk(r)}_{\Yr}$ using $\KG$.
    \item Query $\ICEnc$ on registers $\Xr, \Yr$ into a new register $\Zr$ (through $\adv$).
    \item Measure $\Zr$ to get $\phi$ and leftover state $\rho^1_{\Ar \Br \Xr \Yr}$.
    \item Query $\ICDec(\pw_u, \phi) = \pk_u$ for $u \in \{0,1\}$.
    \item Measure $\Ar, \Br$ to get $\pw_i, r_i$; if $\pw_i \neq \pw_b$ output $1$.
    \item Instruct $\adv$ to send $\phi$ to $P_2$.
    \begin{itemize}
        \item In the real world, $P_2$ outputs their key $\pkey'$ and sends $c', \tau'$ to $\adv$.
        \item In the ideal world, $\Sim$ must send $(\NewKey, \sid, P_2, \cdot)$ to $\FPake$ and $c', \tau'$ to $\adv$. Let $\pwg$ be $\Sim$'s password guess for $(\sid, P_2)$.
    \end{itemize}
    \item Compute $\sk_i = \sk(r_i), K = \Dec(\sk_i,c')$, and $\pkey^* = \cK(K)$. 
    \item Let $\pkey'$ be the output of $P_2$ in either world. If $\pkey^* = \pkey'$ output $1$, otherwise output $0$.
\end{enumerate}
Our environment $\Env$ simply chooses $b' \gets \{0,1\}$ and runs $\Env_{b'}$.

\paragraph{Real world.} We show that $\Env$ outputs $1$ in the real world with $1-\mathsf{negl}$ probability.
\\
After $\Env_0$ queries $\ICEnc$ in Step 3 the state is
$$\frac{1}{\sqrt{2|\PK|}} \sum_{u \in \{0,1\}, \pk \in \PK} \ket{\pw_u}_{\Ar}\ket{\pw_u}_{\Xr}\ket{\pk}_{\Br}\ket{\pk}_{\Yr}\ket{\ICEnc(\pw_u, \pk)}_{\Zr}.$$ After Step 5, since $\ICEnc$ is a keyed permutation $\rho^0 = \ketbra{\psi}$, so $\Env_0$ must output $1$ with probability $1$.
\\
\\
After $\Env_1$ queries $\ICEnc$ in Step 3 the state is $$\frac{1}{\sqrt{2|\R|}} \sum_{u \in \{0,1\}, r \in \R} \ket{\pw_u}_{\Ar}\ket{\pw_u}_{\Xr}\ket{r}_{\Br}\ket{\pk(r)}_{\Yr}\ket{\ICEnc(\pw_u, \pk(r))}_{\Zr}.$$ If $\pw_i = \pw_b$ in Step 6 then $P_2$ will decrypt $\phi$ to obtain $\pk_i$ in Step 7. By correctness of $\KEM$, $P_2$ must then output $\pkey^*$ as computed by $\Env_1$ in Step 8 with probability $1- \mathsf{negl}$. Therefore $\Env_1$ outputs $1$ with probability $1 - \mathsf{negl}$.

\paragraph{Ideal world.} We now give a hybrid argument to show that in the ideal world, $\Env$ outputs $1$ with probability $\leq 1- u(\lambda)$ for some non-negligible $u(\lambda)$. Let $\Hyb_0$ be the experiment where $\Env$ runs in the ideal world against some simulator $\Sim$ for the dummy adversary. In general let $\Env_j^i$ be $\Env_j$ running in experiment $\Hyb_i$.
\\
\\
$\Hyb_1$: same as $\Hyb_0$, but at the end of Step 7 of $\Env_1$, output $1$ if $\pwg = \pw_i$; otherwise output $0$.
\\
\\
Analysis of the change: We can argue $ \Pr[\Hyb_1 = 1] \geq  \Pr[\Hyb_0 = 1] - \mathsf{negl}(\lambda)$ as follows. First, immediately outputting 1 when $\pwg = \pw_i$ only potentially increases the probability of outputting 1 when $\pwg = \pw_i$. Next, when $\pwg \neq \pw_i$, note in Step 7 $\pw_i = \pw_b$ so at this point $\Sim$ has not compromised $P_2$ with a correct password guess. Recall that in this case $\FPake$ samples $P_2$'s key $\pkey'$ uniformly and independently from $\{0,1\}^\lambda$, and it remains independent of $\Sim$'s view for the rest of the experiment. In total, we have $\Pr[\Hyb_1 = 1] \geq  \Pr[\Hyb_0 = 1] - 2^{-\lambda}$.
\\
\\
$\Hyb_2$: same as $\Hyb_1$, but change Step 2 of $\Env_1$ to be the same as $\Env_0$ and adjust the rest of $\Env_1$ accordingly.
\\
\\
Analysis of the change: In the previous hybrid, we made Steps 8 and 9 of $\Env_1$ unnecessary, which is the only place where $r_i$ is used. Therefore we no longer need the randomness on register $\Br$ and can use the same state as in $\Env_0$. We now have $\Env_1^2$ as follows:
\paragraph{Environment $\Env_1^2$}
\begin{enumerate}
    \item Sample $b \gets \{0,1\}$ and send $(\NewS, \sid, P_2, P_1, \pw_b,\mathsf{respondent})$ to $P_2$ for some nonce $\sid$.
    \item Prepare the normalized state \[\frac{1}{\sqrt{2|\PK|}} \sum_{u \in \{0,1\}, \pk \in \PK} \ket{\pw_u}_{\Ar}\ket{\pw_u}_{\Xr}\ket{\pk}_{\Br}\ket{\pk}_{\Yr}.\]
    \item Query $\ICEnc$ on registers $\Xr, \Yr$ into a new register $\Zr$ (through $\adv$).
    \item Measure $\Zr$ to get $\phi$ and leftover state $\rho^1_{\Ar \Br \Xr \Yr}$.
    \item Query $\ICDec(\pw_u, \phi) = \pk_u$ for $u \in \{0,1\}$.
    \item Measure $\Ar$ to get $\pw_i$; if $\pw_i \neq \pw_b$ output $1$.
    \item Instruct $\adv$ to send $\phi$ to $P_2$.
    \begin{itemize}
        \item In the real world, $P_2$ outputs their key $\pkey'$ and sends $c', \tau'$ to $\adv$.
        \item In the ideal world, $\Sim$ must send $(\NewKey, \sid, P_2, \cdot)$ to $\FPake$ and $c', \tau'$ to $\adv$. Let $\pwg$ be $\Sim$'s password guess for $(\sid, P_2)$.
    \end{itemize}
    \item If $\pwg = \pw_i$ output 1, otherwise output 0.
\end{enumerate}
The only functional change in this hybrid is that $\pk$ is now sampled uniformly, so we construct a quantum polynomial time reduction $\B$ that breaks computational public key uniformity given access to an $\Sim$ that distinguishes $\Hyb_1$ and $\Hyb_2$. Given the public key state $\eta_{\Yr}$, $\B$ runs $\Hyb_2$ with $\Sim$, playing the part of both $\FPake$ and the environment. $\FPake$ is simulated by running the functionality, and the environment is simulated by running $\Env^2$, with the following changes: 
\begin{itemize}
    \item If $b' = 0$ or $\pw_i \neq \pw_b$ then output a random bit.
    \item In Step 2 of $\Env_1$ instead prepare the state $$\frac{1}{\sqrt{2}} \sum_{u \in \{0,1\}} \ket{\pw_u}_{\Ar}\ket{\pw_u}_{\Xr}\eta_{\Yr}.$$
\end{itemize}
Note that \begin{itemize}
    \item $\B$ must use its access to $\FPake$ in Step 8 of $\Env_1$ to check if $\pwg = \pw_i$;
    \item Changing Step 2 of $\Env_1$ does not affect its execution since $\Env_1$ no longer uses the $\Br$ register.
\end{itemize}
Let $c$ be the choice bit in the uniformity experiment. When $c = 0$ the view of $\Sim$ in $\B$ is as in $\Hyb_2$, and when $c = 1$ the view of $\Sim$ in $\B$ is as in $\Hyb_1$. Note that the events $b' = 0$ and $\pw_i = \pw_b$ happen independently in $\Hyb_1, \Hyb_2$ and the execution of $\B$ (regardless of the value of $c$). Furthermore, when $b' = 0 \lor \pw_i \neq \pw_b$ then both $\Env^1$ and $\Env^2$ behave identically. Similarly, in this case $\B(c=0)$ and $\B(c=1)$ behave identically. Concluding,
\begin{align*}
    &|\Pr[\Env^2 \text{ outputs } 1] - \Pr[\Env^1 \text{ outputs } 1]| \\ & = |\Pr[\B(c=0) \text{ outputs } 1] - \Pr[\B(c=1) \text{ outputs } 1]| \\ &= \mathsf{Adv}(\text{PKU}_\KEM^\B) \leq \mathsf{negl}(\lambda).
\end{align*}
$\Hyb_3$: Let $\Wr$ the additional state in the experiment besides $\Ar \Br \Xr \Yr$: this includes $\Sim$'s workspace register, the state of $\FPake$ and other classical public values. Let $\delta^j$ be the state over $\Ar \Br \Xr \Yr \Wr$ at the end of Step 5 of $\Env_j^2$. $\Hyb_3$ is the same as $\Hyb_2$ but $\delta^j$ is replaced with  $\frac{(\Pi_\psi \otimes I_W)\delta^j(\Pi_\psi \otimes I_W)}{1-p}$ at the end of Step 5 of $\Env_j^2$ for $j \in \{0,1\}$ (here $p = \Pr[M \text{ rejects in Step 6 of } \Env_0]$).
\\
\\
Analysis of the change: Since $\Env_0^2, \Env_1^2$ are the same through the end of Step 5, $\delta^0 = \delta^1 = \delta$. Since $\rho^j = \Tr_{\Wr}(\delta^j)$, $\rho^0 = \rho^1 = \rho$. If $p$ is non-negligible then $\Env_0$ outputs 1 with probability $\leq 1- p$ and we are done, so we may assume that $p$ is negligible. By the Gentle Measurement Lemma, $$|\Pr[\Env^3 \text{ outputs } 1] - \Pr[\Env^2 \text{ outputs } 1]| \leq O(\sqrt{p}).$$
\\
We now show $\Pr[\Env^3 \text{ outputs 1}] \leq 7/8$. This completes the proof since by our hybrid argument 
$$\Pr[\Env^0 \text{ outputs 1}] \leq \Pr[\Env^3 \text{ outputs 1}] + \mathsf{negl} \leq 7/8 + \mathsf{negl}.$$
Let $\gamma$ be the state over $\Ar \Br \Xr \Yr \Wr$ at the end of Step 5 of $\Env_j^3$ (by previous reasoning $\gamma$ does not depend on $j$). By the change in $\Hyb_3$, there is a state $\sigma_{\Wr}$ such that \begin{align}
    \gamma = \ketbra{\psi}_{\Ar \Br \Xr \Yr} \otimes \sigma_{\Wr}.\label{pwindep}
\end{align}
We claim that the instant before $\Sim$ sends $\TestPwd$ (if it does), $b$ is uniform from it's point of view. If $\TestPwd$ is sent before the end of Step 5, this holds since $\Sim$'s view is independent of $b$ at this point. If it is sent after, by \zcref[S]{pwindep} the value of $i$ is independent of $\Sim$'s state, and Step 7 reveals no information about $i$ before $\TestPwd$ is sent. In Step 7 we have $\pw_b = \pw_i$, so either way $\Pr[\pwg = \pw_i] = 1/2$. Since $i,b$ are sampled independently and $b$ is uniform, $\Env^3$ reaches Step 7 of $\Env_1^3$ with probability $1/4$, so it must output $0$ with probability $\geq 1/8$.
\end{proof}

\section{Provable game-based security}

\subsection{Masny-Rindal OT}\label{subsec:MR-game-based}

\begin{theorem}
    Let $\Pi = (\Rec_1,\Sen,\Rec_2)$ be the Masny-Rindal OT protocol (\zcref[S]{subsubsec:MR-syntax}) instantiated with any KEM with negligible correctness error (\zcref[S]{def:KEM-correctness}), (post-quantum) computationally uniform public keys (\zcref[S]{def:KEM-unikeys}), and (post-quantum) key unpredictability (\zcref[S]{def:KEM-unp}). Then $\Pi$ satisfies correctness (\zcref[S]{def:OT-correctness}), (post-quantum) game-based security against malicious receiver (\zcref[S]{def:OT-sender}), and (post-quantum) game-based security against malicious sender (\zcref[S]{def:OT-receiver}).\label{thm:MRgbsec}
\end{theorem}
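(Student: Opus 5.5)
Correctness and security against a malicious sender are the easy parts. For correctness, the honest sender recomputes $\pk_b = t_b \oplus H(t_{1-b})$, which equals the receiver's generated key, so $\Dec(\sk_b,c_b)=K_b$ except with the KEM correctness error.

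For security against a malicious sender, I would use a two-step hybrid. First replace $\pk_b$ by a uniform element of $\PK$; this is indistinguishable by public-key uniformity (\zcref[S]{def:KEM-unikeys}), since the reduction simulates $H$ itself and embeds its challenge as $\pk_b$. After this change, $t_b=\pk_b\oplus H(t_{1-b})$ is uniform and independent of $t_{1-b}$. Hence $(t_0,t_1)$ is a pair of independent uniform strings regardless of $b$. The sender is then left with only random-oracle access, which carries no information about $b$, so its guessing advantage is negligible.

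Security against a malicious receiver (\zcref[S]{def:OT-sender}) is the main part, and I would argue it in two steps.

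\emph{Step 1 (search-to-both).} Let $P_0,P_1$ be the projectors ``$\cA(\msg_S,\beta,\sigma)$ outputs $m_\beta$'' for $\beta=0,1$. These act on the joint state after $\msg_S$ is sent, with the sender's $m_0,m_1$ held as classical values. Suppose the success probability is $\tfrac12+\epsilon$, so $\Tr(P_0\rho)+\Tr(P_1\rho)-1=2\epsilon$. Then \zcref[S]{lemma:binrewind} shows that running $\cA$ for $\beta=0$, uncomputing, and rerunning for $\beta=1$ yields both $K_0$ and $K_1$ with probability at least $\epsilon^2$ (the run/uncompute/rerun means we treat $P_0P_1P_0$). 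This gives a QPT algorithm $\cB^H$ that outputs $(t_0,t_1)$, receives $(c_0,c_1)$ computed honestly from $\pk_j=t_j\oplus H(t_{1-j})$, and outputs both $K_0,K_1$ with non-negligible probability. Oracle queries made during the rewinding are fine, because $H$ is a fixed oracle throughout.

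\emph{Step 2 (measure-and-reprogram to KEM unpredictability).} Apply \zcref[S]{thm:mandp} with $t=2$ to $\cB$. Here $\mathbf{x}=(t_0,t_1)$, and the predicate $V^H$ runs the sender on $\pk_j = t_j\oplus \Theta_{1-j}$, continues $\cB$ (which may keep querying $H$; this is why our variant of the theorem lets $V$ access the oracle), and checks both keys. The reduction $\Sim$ first outputs the first measured input $x_{\pi(0)}$ and receives a uniform $\Theta_{\pi(0)}$. At the second point, after learning $x_{\pi(1)}$, it sets $\Theta_{\pi(1)}:=x_{\pi(0)}\oplus \pk^*$, where $(\pk^*,c^*)$ comes from the external UNP challenger. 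Since $\pk^*$ is computationally uniform (public-key uniformity), this reprogrammed value is indistinguishable from uniform, so the measure-and-reprogram guarantee still holds up to negligible loss. With this programming, the key $\pk_{\pi(0)}=x_{\pi(0)}\oplus H(x_{\pi(1)})$ equals $\pk^*$, and the reduction uses $c^*$ as $c_{\pi(0)}$. Whenever $V$ succeeds, the adversary outputs $K_{\pi(0)}$, the key encapsulated in $c^*$, which breaks key unpredictability with probability at least $\epsilon^2/(2q+1)^4-\mathsf{negl}$. The requirement of distinct entries is handled separately: if $t_0=t_1$ then $\pk_0=\pk_1$, and the bound $\tfrac12$ must be handled directly, for example by a direct UNP reduction embedding the challenge as the single key.

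The main obstacle I anticipate is the ordering in Step 2. The challenge ciphertext must be placed in $\msg_S$, which comes after both reprogrammings, and the programmed value $x_{\pi(0)}\oplus\pk^*$ must be fixed at the second measured query. I also need the reduction to not know $\sk$ for the embedded key, and this holds because only one key is embedded.
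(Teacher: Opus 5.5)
Your proposal is correct and follows essentially the same route as the paper. For the sender side you reduce to public-key uniformity. For the receiver side you use the two-projector rewinding lemma to extract both keys, then two-point measure-and-reprogram with an oracle-accessing predicate: you program $\Theta_{\pi(1)} = x_{\pi(0)} \oplus \pk^*$ (justified by public-key uniformity) and embed $c^*$ as $c_{\pi(0)}$ to break key unpredictability. The only minor divergence is the $t_0 = t_1$ case. The paper sidesteps it by having the sender abort, while you treat it separately; that needs a one-point measure-and-reprogram setting $H(t) = t \oplus \pk^*$ before embedding $c^*$, but it goes through.
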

\noindent Correctness follows immediately from correctness of $\KEM$.
\begin{proof}[Proof of malicious receiver security]
     As an intermediate step, we first prove a weaker version where the malicious receiver has to guess both strings. A QPT adversary $(\cA_1, \cA_2)$ should have negligible success in the following experiment:\footnote{We assume that the sender aborts if $t_0 = t_1$, which only occurs with $\mathsf{negl}(\lambda)$ probability in an honest execution.}
    \begin{itemize}
        \item On $1^\lambda$, $\cA^H_1(1^\lambda)$ runs and outputs a message $(t_0, t_1)$ and auxiliary state $\sigma$.
        \item For $j \in \{0,1\}$ sample random coins $r_j \gets \R$ and compute $\pk_j = t_j \oplus H(t_{1-j}),\allowbreak (c_j, K_j) \coloneqq \Enc(\pk_j;r_j)$.
        \item Run $\cA_2^H(1^\lambda,c_0,c_1,\sigma)$ to get $(K_0^*, K_1^*)$.
        \item Adversary succeeds if $K_j = K_j^*$ for $j \in \{0,1\}$.
    \end{itemize}
    We now define a reduction $\B^{\cA_1,\cA_2}$ that uses measure-and-reprogram to break key unpredictability. $\B$ will reprogram $H$ such that the public key $\pk_{\pi(0)}$ equals the challenge public key $\pk^*$. $\B$ then uses the challenge ciphertext $c^*$ in place of $c_{\pi(0)}$ to complete the protocol, and uses $\cA_2$'s guess for $K^*_{\pi(0)}$ as its own.
    \\
    \\
    Formally, suppose $\cA_1$ makes $q_1$ queries to $H$ and let $\delta$ be the success probability of $(\cA_1, \cA_2)$ in the weaker malicious receiver security game described above. Let $C^H(t_0, t_1, H(t_0), H(t_1), \sigma)$ be the following quantum algorithm:
    \begin{itemize}
        \item Compute $\pk_j \coloneqq t_j \oplus H(t_{1-j})$ for $j \in \{0,1\}$;
        \item Sample $r_0, r_1 \gets \R$;
        \item Compute $(c_j, \ast) \coloneqq \Enc(\pk_j;r_j)$;
        \item Write $K_0^*, K_1^* \gets \cA_2^H(c_0, c_1, \sigma)$ to an output register.
    \end{itemize}
    Let $\Pi_{\mathsf{win}}$ be the projector onto states where $K_j^* = \Enc(\pk_j;r_j)[1]$ for $j \in \{0,1\}$. Let $V^H(t_0,\allowbreak t_1,\allowbreak H(t_0),\allowbreak H(t_1),\allowbreak \sigma)$ output $1$ when the projector $(C^H)^\dagger \Pi_{\mathsf{win}}C^H$ accepts.
    \\
    $\B$ is given an unpredictability challenge $\pk^*,c^*$ and will act as the measure-and-reprogram simulator for $\cA_1^H$, with predicate $V^H$. In the first stage $\B$ receives $\pi, t_{\pi(0)}$ and sets $\Theta_{\pi(0)} \gets \PK$; in the second stage $\B$ receives $t_{\pi(1)}$ and sets $\Theta_{\pi(1)} = \pk^* \oplus t_{\pi(0)}$. On output $t_0, t_1, \sigma$, $\B$ runs $C^{H(\mathbf{t} \ast \mathbf{\Theta})}$ except that it does not generate $c_{\pi(0)}$ itself but instead uses $c^*$. $\B$ then measures and outputs $K_{\pi(0)}^*$. To analyze the success of $\B$, first note that $V^H$ accepts on $\cA_1$'s output exactly when $(\cA_1, \cA_2)$ wins the weaker malicious receiver security game. Additionally, $\B$ sets $\pk^* = \pk_{\pi(0)} = t_{\pi(0)} \oplus H(t_{\pi(1)}), c^* = c_{\pi(0)}$, so $K_{\pi(0)}^*$ will equal the desired challenge key if $(\cA_1, \cA_2)$ succeeds. The only difference in $(\cA_1, \cA_2)$'s view between $\B$ and the usual measure-and-reprogram simulator is that $\Theta_{\pi(1)}$ is not uniformly random. However, this cannot affect $\B$'s success probability too much without breaking public key uniformity: a reduction $\B'$, given a uniformity challenge $\pk^*$, encapsulates $(c^*, K^*) \gets \Enc(\pk^*)$, and runs $\B$ as above, outputting $0$ if $K_{\pi(0)}^* = K^*$ (and $1$ otherwise). All together, summing over all choices of $\mathbf{t}^*$ and applying \zcref[S]{thm:mandp} with $t=2$ gives \begin{align*}
        \mathsf{Adv}(\text{UNP}_{\KEM, \lambda}^{\B}) \geq \frac{\delta}{(2q_1+1)^4} - \mathsf{Adv}(\text{PKU}_{\KEM, \lambda}^{\B'}),
    \end{align*} completing the proof. We obtain full security against a malicious receiver by using \zcref[S]{lemma:binrewind} to reduce to the weaker version. Suppose $(\cA_1, \cA_2)$ is a QPT adversary for malicious receiver security of Masny-Rindal OT with success probability $1/2 + \epsilon$:
    \begin{itemize}
        \item On $1^\lambda$, $\cA^H_1(1^\lambda)$ runs and outputs a message $(t_0, t_1)$ and auxiliary state $\sigma$.
        \item For $j \in \{0,1\}$ sample random coins $r_j \gets \R$ and compute $\pk_j = t_j \oplus H(t_{1-j}),\allowbreak (c_j, K_j) \coloneqq \Enc(\pk_j;r_j)$. Sample $b \gets \{0,1\}$.
        \item Run $\cA_2^H(1^\lambda,c_0,c_1,b,\sigma)$ to get $K_b^*$.
        \item Adversary succeeds if $K_b = K_b^*$.
    \end{itemize}
    Without loss of generality, on input $1^\lambda,c_0,c_1,b,\sigma$, $\cA_2^H$ runs an oracle-aided unitary $U_b^H$ (with $1^\lambda,c_0, c_1$ hardwired in) on $\sigma$ (plus some ancilla qubits), writing $K_b^*$ into an extra register $\Xr$. Let $P^*_b$ project onto states with \[\Xr = \Enc(\pk_b;r_b)[1]\] and define $P_b = (U_b^H)^\dagger P^*_b U_b^H$. We now define an adversary $(\cA'_1, \cA'_2)$ for the weaker security:
    \begin{itemize}
        \item $\cA'_1$: Run $\cA_1^H$ as normal, outputting $t_0,t_1,\sigma$.
        \item $\cA'_2(c_0,c_1,\sigma)$: \begin{enumerate}
            \item Apply $U_0^H$ to $\sigma$, measure $\Xr$ to obtain $K_0^*$, and copy $K_0^*$ to a fresh register.
            \item Apply $U_1^H (U_0^H)^\dagger$ and measure $\Xr$ to get $K_1^*$. Output $(K_0^*,K_1^*)$.
        \end{enumerate}
    \end{itemize}
    For any fixed choice of $r_0, r_1$, applying \zcref[S]{lemma:binrewind} to the projectors $P_i$ and (in expectation over) the state $\sigma$, we have $\Tr(P_0\sigma) + \Tr(P_1\sigma) = 1 + 2\epsilon$ by assumption and the unitarity of $U_i^H$, so $\Tr(P_0P_1P_0\sigma) \geq \epsilon^2$. Since $P_i$ projects $\Xr$ to a single value, the measurement of $\Xr$ does not disturb the state conditioned on success, and the probability that $K^*_j = K_j$ for $j \in \{0,1\}$ is exactly $\Tr(P_1P_0\sigma P_0P_1) = \Tr(P_0P_1P_0\sigma) \geq \epsilon^2$. By averaging over the sampling of $r_0, r_1$, $(\cA_1,\cA_2)$ must have negligible advantage in the malicious receiver security game.
\end{proof}
\begin{proof}[Proof of malicious sender security] The malicious sender security game for Masny-Rindal OT against QPT adversary $\cA$ is the following:
    \begin{itemize}
        \item On $1^\lambda$, sample $b \gets \{0,1\}$, $t_{1-b} \gets \PK$, and $(\pk_b,\sk_b) \gets \KG(1^\lambda)$.
        \item Set $t_b \coloneqq \pk_b \oplus H(t_{1-b})$.
        \item Send $\msg_R \coloneqq (t_0,t_1)$ to $\cA$ and keep $\sk \coloneqq \sk_b$.
        \item $\cA$ returns a guess $b^*$: $\cA$ wins if $b^* = b$.
    \end{itemize}
    Suppose $\cA$ is an adversary for malicious sender security with success probability $\frac{1}{2} + \epsilon, \epsilon \geq 0$. We give a reduction $\B^{\cA}$ that breaks uniformity of public keys. Given a challenge $\pk^*$ with choice bit $c$, $\B$ plays the part of $\Rec_1$ but uses $\pk^*$ instead of generating its own $\pk_b$. If $b^* = b$ then $\B$ outputs $1$; otherwise it outputs a random bit. Note that when $c = 0$ ($\pk^* \gets \PK$) the bit $b$ is independent of $t_0, t_1$ so \[\Pr[b^* = b | c = 0] = 1/2.\] When $c = 1$ ($\pk^*$ honestly generated) $\B$ simulates the malicious sender security experiment to $\cA$, so $\Pr[b^* = b | c = 1] = 1/2 + \epsilon$, which implies $\Pr[b^* = b] = 1/2 + \epsilon/2$. We then have
    \begin{align*}
        \Pr[\mathsf{win}] = \Pr[\mathsf{win} | b^* = b]\left(\frac{1}{2} + \frac{\epsilon}{2}\right) + \Pr[\mathsf{win} | b^* \neq b]\left(\frac{1}{2} - \frac{\epsilon}{2}\right) \\
        = \Pr[\mathsf{win} | b^* = b]\left(\frac{1}{2} + \frac{\epsilon}{2}\right) + \frac{1}{2}\left(\frac{1}{2} - \frac{\epsilon}{2}\right).
    \end{align*}
    By Bayes Theorem,
    \[
    \Pr[\mathsf{win} | b^* = b] = \Pr[c = 1 | b^* = b] = \frac{\frac{1}{2}\left(\frac{1}{2} + \epsilon\right)}{\left(\frac{1}{2} + \frac{\epsilon}{2}\right)}.
    \]
    All together,\[
    \Pr[\mathsf{win}] = \frac{1}{2}\left(\frac{1}{2} + \epsilon\right) + \frac{1}{2}\left(\frac{1}{2} - \frac{\epsilon}{2}\right) = \frac{1}{2} + \frac{\epsilon}{4}.
    \]
\end{proof}

\subsection{OEKE with 2-Feistel}

\begin{theorem}
    Let $\Pi$ be any OEKE protocol instantiated with (1) any KEM with negligible correctness error, UNP security, and computationally uniform public keys (see \zcref[S]{subsec:KEM}) (2) the 2-Feistel-based encryption and decryption algorithms $\cE_\TF,\cD_\TF$ (\zcref[S]{subsec:attack-general}), (3) key derivation function $\cK$ modeled as a \emph{salted} random oracle $\Hkey$ with $\lambda$-length output,\footnote{Salting the random oracle means that the respondent will sample a random salt $w \gets \{0,1\}^\lambda$, set $\pkey = \Hkey(w,K)$ as the output of the key derivation function, and send $w$ as part of its message, meaning that the key derivation function is effectively $\Hkey(w,\cdot)$.} and (4) tag function $\cT$ modeled as a random oracle $\Htag$ with $\lambda$-length output. 
    
    Then $\Pi$ satisfies Correctness (\zcref[S]{def:PAKE-correctness}), Passive security (\zcref[S]{def:PAKE-passive}), Security against malicious initiator (\zcref[S]{def:PAKE-initiator}), and Security against malicious respondent (\zcref[S]{def:PAKE-respondent}).
\end{theorem}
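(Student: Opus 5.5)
We prove the four properties separately, handling the two malicious-party cases with the tools already set up: measure-and-reprogram (\zcref[S]{thm:mandp}), the rewinding lemmas (\zcref[S]{lemma:binrewind}, \zcref[S]{multirewind}), and the advantage-tight one-way to hiding theorem (\zcref[S]{thm:search-to-decision}). Recall that the 2-Feistel is $\cE_\TF(\pw,\pk;r) = (T, s)$ with $T = H(\pw,r)\oplus\pk$ and $s = r\oplus H(\pw,T)$, and that $\cD_\TF$ recovers $r$ and then $\pk$.

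\textbf{Correctness and passive security.} Correctness is immediate: $\cD_\TF$ exactly inverts $\cE_\TF$, and the KEM has negligible correctness error. For passive security, the key is $\pkey = \Hkey(w,K)$, so $\chall$ can be distinguished from uniform only if the adversary queries $\Hkey$ at $(w,K)$. By the one-way to hiding theorem it therefore suffices to bound the probability of finding $K$. The set $\{(w,K)\}$ is efficiently checkable given $\sk$, which the reduction can sample itself. Finding $K$ from $(\pw,\phi,c,\tau)$ breaks UNP: the reduction embeds $(\pk^*,c^*)$ by programming $H(\pw,r)$ so that $T \oplus H(\pw,r) = \pk^*$. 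It then answers $\Htag$ by a lazily simulated oracle, using \zcref[S]{thm:simextract} to extract any query that contains $K$.

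\textbf{Malicious initiator.} First, a two-password search lemma. After $\cA_1$ outputs $\phi^* = (T,s)$, sample $\pw_0 \neq \pw_1$ and produce respondent messages $(c_j,\tau_j,w_j)$ under $\pk_j = \cD_\TF(\pw_j,\phi^*)$. We claim no QPT adversary outputs both $K_0$ and $K_1$ except with negligible probability. Decrypting $\phi^*$ under $\pw_j$ involves the two oracle points $(\pw_j,T)$ and $(\pw_j, s\oplus H(\pw_j,T))$. Using \zcref[S]{thm:mandp} with $t=2$, we measure and reprogram one of these points for one of the passwords so that $\pk_{j}$ becomes the UNP challenge key $\pk^*$. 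As in the Masny-Rindal proof, a PKU hybrid accounts for the fact that the reprogrammed value is not uniform. The verification predicate needs oracle access to $H,\Htag,\Hkey$ after the reprogramming, which is exactly what our version of \zcref[S]{thm:mandp} allows.

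The Feistel structure makes the choice of reprogramming point delicate. If $(\pw_j,T)$ is queried before $r_j$ is determined, we reprogram $H(\pw_j, r_j)$. Otherwise we reprogram $H(\pw_j,T)$, which fixes $r_j$ and thereby leaves $H(\pw_j,r_j)$ fresh. We need a case analysis over query orders showing that, for at least one of the two passwords, one of the measured points can be programmed to embed $\pk^*$. If a query order defeats this for both passwords, one can extract both $r_0$ and $r_1$ with $H(\pw_0,r_0)\oplus H(\pw_1,r_1)$ both equal to $T$-offsets, which should be a negligible collision-type event. The tag $\tau_j = \Htag(\pw_j,\pk_j,\phi,c_j,K_j)$ is handled by the same extraction-simulation trick as in the passive case, so it leaks nothing about $K_j$.

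Next, we apply \zcref[S]{multirewind} with $P_{\pw}$ the projector for ``the adversary outputs $K$ for password $\pw$''. It converts any predictor that finds $K$ with probability $2^{-\gamma}+\epsilon$ into a two-password solver with advantage $\epsilon^3$, so the search probability is at most $2^{-\gamma}+\mathsf{negl}$. Finally, \zcref[S]{thm:search-to-decision} applies with $G,H$ being $\Hkey$ with and without $(w,K)$ reprogrammed to the challenge. Since $K$ is efficiently checkable from $\pw$ and the transcript, the distinguishing advantage is at most $2^{-\gamma}+\mathsf{negl}$, which gives the bound $\frac12+\frac{2^{-\gamma}}{2}+\mathsf{negl}$.

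\textbf{Malicious respondent.} Here the initiator's $\phi$ is an honest encryption. Under PKU and oracle programming, $\phi$ is uniform and independent of $\pw$ unless the adversary queries $H$ at $(\pw,\cdot)$. A non-$\bot$ key requires a valid tag $\tau^* = \Htag(\pw,\pk,\phi,c^*,K)$. Extracting from $\tau^*$ via \zcref[S]{thm:simextract} yields a single password guess, correct with probability at most $2^{-\gamma}$. When the guess is wrong, the tag verification fails except with negligible probability. When $\pw$ is guessed correctly, we instead use UNP and one-way to hiding to argue that $K = \Dec(\sk,c^*)$ stays hidden unless $c^*$ is simply replayed. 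That case is harmless because the adversary then knows the key anyway, and it is covered by the $2^{-\gamma}$ term.

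\textbf{Main obstacle.} The hardest step is the two-password search lemma, specifically making measure-and-reprogram compatible with the 2-Feistel query order. I would attempt it first with $t=2$ plus a guess of which password and which of the two Feistel points gets programmed, losing only a polynomial factor.
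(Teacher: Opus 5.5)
Your overall architecture matches the paper's: a two-password search lemma via measure-and-reprogram plus a PKU hybrid, then the multi-password rewinding lemma, then advantage-tight one-way to hiding, and password extraction from $\tau^*$ on the respondent side. The step you flag as the main obstacle, however, is where the proposal breaks. Your ``otherwise'' branch does not work. If $(\pw_j,r_j)$ is queried before $(\pw_j,T)$, which is just the honest encryption order, then $H(\pw_j,r_j)$ is fixed before $T$ is chosen. So $\pk_j = T\oplus H(\pw_j,r_j)$ can be a key the adversary generated itself. Reprogramming $H(\pw_j,T)$ only changes the $s$ the adversary computes afterwards and leaves $r_j$ and $H(\pw_j,r_j)$ untouched, so nothing can be embedded for such a password. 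The event you propose for excluding the bad case is also the wrong one: $H(\pw_0,r_0)\oplus\pk_0 = T = H(\pw_1,r_1)\oplus\pk_1$ costs the adversary nothing, since it merely defines $\pk_0,\pk_1$. Finally, $t=2$ with a guessed password is not enough. An adversary may use the honest order for exactly one password, and the argument excluding the bad order needs both $T$-points to be reprogrammed to fresh values in the same simulated run as both $r$-points.

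The missing idea is to measure and reprogram all four points $(\pw_0,T),(\pw_1,T),(\pw_0,r_0),(\pw_1,r_1)$, i.e.\ $t=4$ with loss $(2q_1+1)^{-8}$, and to use the \emph{shared} $s$: acceptance requires $r_0\oplus H(\pw_0,T) = s = r_1\oplus H(\pw_1,T)$. Suppose that for both $j$ the point $(\pw_j,r_j)$ is measured before $(\pw_j,T)$. Then when the later $T$-point is reprogrammed to a fresh uniform value, $r_0$, $r_1$ and the other $T$-value are already fixed, so the constraint holds with probability $2^{-\ell}$. Hence in accepting runs, up to $2^{-\ell}$, some $j^*$ has $(\pw_{j^*},T)$ measured before $(\pw_{j^*},r_{j^*})$. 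Then $T$ is known when the latter is measured, and the reduction programs $H(\pw_{j^*},r_{j^*}) := T\oplus\pk^*$, forcing $\pk_{j^*}=\pk^*$; PKU covers the non-uniform programmed value. Separately, drop the UNP step on the respondent side. Its claim is false, since an adversary holding the correct $\pw$ decrypts $\phi$ and encapsulates to $\pk$ itself, and it is not needed: $\Pr[\mathsf{win}] \le \frac12 + \frac12\Pr[\mathsf{Acc}]$, and $\Pr[\mathsf{Acc}]$ is bounded by password guessing. In that bound, once $\pk$ is uniform, $\phi$ is exactly independent of $\pw$ because $\cE_\TF(\pw,\cdot\,;\cdot)$ is a bijection, so no caveat about queries at $(\pw,\cdot)$ is needed.
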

\noindent Correctness is immediate. Passive security follows from the security of the KEM; although the adversary can decrypt $\phi$ using $\pw$ to get $\pk$, it still only sees $\pk,c,\tau,w$, and has to distinguish between $\Hkey(w,K)$ and a uniformly random string. Key unpredictability of KEM plus standard one-way-to-hiding shows that the adversary has negligible advantage in this game.
\begin{proof}[Proof of security against a malicious initiator]
We first establish a weaker form of security that we call the \emph{two-output} game: the malicious initiator should not be able to guess two independently encapsulated keys. Formally, a QPT adversary $(\cA_1, \cA_2)$ should have negligible probability of success in the following experiment:
\begin{itemize}
    \item Sample $\pw_1 \gets D$, followed by $\pw_2 \gets D$ conditioned on $\pw_2 \neq \pw_1$.
    \item $(\phi^*,\sigma) \gets \cA_1^{H_1,H_2}(1^\lambda)$ outputs $\phi^*$ and auxiliary state $\sigma$.
    \item Sample $r_1, r_2 \gets \R$.
    \item Compute $\pk_i = \cD_{\TF}(\pw_i,\phi^*)$ and $(c_i,K_i) \coloneqq \Enc(\pk_i;r_i)$ for $i \in \{1,2\}$.
    \item Run $(K_1^*, K_2^*) \gets \cA_2^{H_1,H_2}(\pw_1,\pw_2,c_1,c_2,\sigma)$; the adversary succeeds if $K_i^* = K_i$ for $i \in \{1,2\}$.
\end{itemize}

To show this, we will require a variant of \zcref[S]{thm:mandp} that works for multiple random oracles. We model this as follows: suppose we have random oracles $H_1: A_1 \to B_1, H_2: A_2 \to B_2$ and $\cA_1^{H_1, H_2}$ is a quantum oracle machine given access to the $H_i$ through the unitaries $U_i: \ket{x,y} \mapsto \ket{x,H_i(x) \oplus y}$. This adversary can be emulated by an $\cA_2$ making the same number of queries given access to the random oracle $H: (\{1\} \times A_1) \sqcup (\{2\} \times A_2) \to B_1 \times B_2$ by querying $\ket{i,a_i}$ and only using the relevant part of the output. We denote a query $H_i(x)$ as $H(i,x)$ or $(i,x)$. Let $H(i,x)$ denote the intended random oracle output $b_i$; we also occasionally abuse notation and denote the full output $(b_1, b_2) \in B_1 \times B_2$ by $H(i,x)$ as well. We may write an oracle algorithm $\cA^{H_1, H_2}$ as $\cA^H$ under this identification.
\\
\\
Suppose that the two-output game has non-negligible success probability. By averaging, fix distinct passwords $\pw_1,\pw_2$ for which the conditional success probability, denoted $\delta$, is non-negligible. Define an algorithm $\cA_1'$ as follows:
\begin{itemize}
    \item Run $(\phi^* = (s,T), \sigma) \gets \cA_1^H(1^\lambda)$;
    \item Output \begin{align*}
        A_1 = (2,\pw_1,T),A_2 = (2,\pw_2,T),\phi^*, \sigma, \\ B_1 = (1,\pw_1,u_1 = s \oplus H(2,\pw_1,T)),B_2 = (1,\pw_2,u_2 = s \oplus H(2,\pw_2,T)).
    \end{align*}
\end{itemize}
\noindent Suppose $\cA_1'$ makes $q_1$ queries to $H$. Let $C[r_1,r_2]^H$ be a quantum algorithm that has hard-coded random coins $r_1,r_2$, additionally takes inputs 
\begin{align}
        A_1 = (2,\pw_1,T),A_2 = (2,\pw_2,T),\phi^* = (s,T), \sigma,\label{eq:input-form1} \\ B_1 = (1,\pw_1,u_1),B_2 = (1,\pw_2,u_2),\Theta_{A_1}^{(2)},\Theta_{A_2}^{(2)},\Theta_{B_1}^{(1)},\Theta_{B_2}^{(1)},\label{eq:input-form2}
    \end{align}
and does the following:
\begin{itemize}
    \item Write $1$ to an ancilla $\flag$ unless $u_i = s \oplus \Theta_{A_i}^{(2)}, H(A_i) = \Theta_{A_i}^{(2)}, H(B_i) = \Theta_{B_i}^{(1)}$ for all $i \in \{1,2\}$.
    \item Continue execution if $\flag = 0$: 
    \item Compute $\pk_i = (\Theta_{B_i}^{(1)})^{-1} \odot T$.
    \item Compute $(c_i,K_i) \coloneqq \Enc(\pk_i;r_i)$ for $i \in \{1,2\}$.
    \item Run $(K_1^*, K_2^*) \gets \cA_2^{H_1,H_2}(\pw_1,\pw_2,c_1,c_2,\sigma)$ to an output register.
    \item Write $1$ to $\flag$ unless $K_j^* = \Enc(\pk_j;r_j)[1]$ for $j \in \{1,2\}$.
\end{itemize}
Let $\Pi_{\mathsf{win}}$ be the projector onto output states where the value of $\flag$ equals $0$. Using a purified implementation of $C[r_1,r_2]^H$, let $V[r_1,r_2]^H$ output $1$ when the projector $(C[r_1,r_2]^H)^\dagger \Pi_{\mathsf{win}}C[r_1,r_2]^H$ accepts, and $0$ otherwise. We will show that \[\Pr_{r_1,r_2 \gets \R}\left[V[r_1,r_2]^H(A_1,A_2,\phi^*,\sigma,B_1,B_2) = 1 : (A_1,A_2,\phi^*,\sigma,B_1,B_2) \gets \cA_1'\right] = \mathsf{negl}(\lambda).\]

Fix some choice of $A_1, A_2, B_1, B_2$. Recall that the measure-and-reprogram simulator for $\cA_1'$ will measure the points $A_1, A_2, B_1, B_2$ to reprogram in some order $\prec$ based on a random permutation $\pi$.
\begin{claim}\label{2fprog}
    Let $E_1$ be the event that $B_1 \prec A_1, B_2 \prec A_2$ and $E_2$ be the event that $V^H$ outputs $1$. Then $\Pr[E_1 \land E_2] \leq 2^{-\ell}$.
\end{claim}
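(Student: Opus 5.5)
We work in the simulated measure-and-reprogram experiment of \zcref[S]{thm:mandp} for $\cA_1'$ with $t=4$: the four inputs $A_1,A_2,B_1,B_2$ are measured in the order $\prec$ given by $\pi$, and at each one the corresponding fresh value $\Theta$ is programmed. The plan is to show that when $E_1$ holds, $E_2$ forces an equation that a fresh uniform $\Theta$ must hit, which is impossible except with probability $2^{-\ell}$. We take $\ell$ to be the bit-length of the outputs of $H_2$, i.e.\ of the $s$-component of $\phi^*$.

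First we unpack $E_2$. By construction of $C[r_1,r_2]^H$, the projector $(C^H)^\dagger\Pi_{\mathsf{win}}C^H$ can accept only if the first flag check passes. In particular this requires
\[u_i = s\oplus\Theta^{(2)}_{A_i}\quad\text{for } i\in\{1,2\},\]
where $u_i$ is the third coordinate of the measured point $B_i=(1,\pw_i,u_i)$. It also requires $A_i=(2,\pw_i,T)$, with $s,T$ taken from the output $\phi^*$. Since $\flag$ is written as a classical function of these classical inputs and is never uncomputed before $\Pi_{\mathsf{win}}$ is applied, acceptance implies that the equalities hold on the measured values. Hence $\Pr[E_1\wedge E_2]$ is at most the probability that $E_1$ holds and $u_1=s\oplus\Theta^{(2)}_{A_1}$ and $u_2=s\oplus\Theta^{(2)}_{A_2}$.

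Next we use the ordering. Under $E_1$, the point $B_1$ is measured and fixed before $A_1$ is measured, and $\Theta^{(2)}_{A_1}$ is sampled uniformly and independently only at the later stage when $A_1$ is extracted. Both $u_1$ and $A_1=(2,\pw_1,T)$ are committed before that time. The measured $A_1$ already determines $T$. The value $s$ is not yet fixed, so we pass to the relation between the two equations. Subtracting them gives
\[u_1\oplus u_2=\Theta^{(2)}_{A_1}\oplus\Theta^{(2)}_{A_2}.\]
Here $u_1,u_2$ are fixed once both $B$'s are measured, and under $E_1$ each $A_i$ is measured after its own $B_i$. Consider whichever of $A_1,A_2$ comes last in $\prec$; say it is $A_j$. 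At the moment $\Theta^{(2)}_{A_j}$ is sampled, both $B_1$ and $B_2$ have already been measured: $B_j\prec A_j$ by $E_1$, and $B_{3-j}\prec A_{3-j}\prec A_j$. The other value $\Theta^{(2)}_{A_{3-j}}$ has also already been sampled. Therefore the displayed identity asks a fresh uniform $\ell$-bit string to equal a value already determined, which has probability exactly $2^{-\ell}$. The points $A_1\neq A_2$ are distinct because $\pw_1\neq\pw_2$, so the two $\Theta$'s are independent. We fix the choice of $(A_1,A_2,B_1,B_2)$ as in the claim, sum over permutations satisfying $E_1$, and condition on the history before the last $A_j$ is measured; the bound $2^{-\ell}$ then holds uniformly.

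The main subtlety is justifying that the acceptance event of the non-classical predicate $V^H$ implies the classical equalities on the measured $\Theta$'s and $u_i$'s. Nothing later in the simulation can correlate with the not-yet-sampled $\Theta^{(2)}_{A_j}$ before its programming stage. We would handle this by noting that $\Pi_{\mathsf{win}}$ is dominated by the projector onto $\flag=0$ after the first check. That check depends only on classical registers produced by $\Sim$'s stagewise outputs, so it commutes with the unitary $C^H$ on that part and factors out. If instead one must use the subtracted identity to eliminate $s$, as above, no control of $s$ is needed at all.
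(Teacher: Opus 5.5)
Your proof is correct and follows the paper's argument. Under $E_1$, at the moment the later of $A_1,A_2$ is reprogrammed, the values $u_1,u_2$ and the other $\Theta^{(2)}$ are already fixed. $E_2$ forces $u_1\oplus\Theta^{(2)}_{A_1}=u_2\oplus\Theta^{(2)}_{A_2}$ (both equal $s$), which the fresh uniform $\ell$-bit value satisfies with probability $2^{-\ell}$; your check that acceptance of $V^H$ implies the classical flag conditions is a detail the paper leaves implicit.
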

\begin{proof}
    Suppose without loss of generality that $A_1 \prec A_2$. At the point when $H(A_2)$ is reprogrammed to $\Theta_{A_2}^{(2)}$, if $E_1$ occurs then the values of $u_1, u_2,\Theta_{A_1}^{(2)}$ are known to the simulator, since the $B_1, B_2,A_1$ queries have already been made. If $E_2$ occurs later on we must have $u_2 \oplus \Theta_{A_2}^{(2)} = u_1 \oplus \Theta_{A_1}^{(2)}$, since these values will equal $s$ when $s$ is defined later on. Since $\Theta_{A_2}^{(2)}$ is a uniformly random value in $\{0,1\}^\ell$ the claim follows.
\end{proof}
\zcref[S]{2fprog} will allow us to program a public key $\pk_{i^*}$. Formally, we give a reduction $\B$ to key unpredictability of $\KEM$ as follows. Given a challenge $\pk^*, c^*$
\begin{itemize}
    \item $\B$ will act as the measure-and-reprogram simulator for $\cA_1'$.
    \item $\B$ reprograms queries to random outputs $\Theta_i = \Theta_i^{(1)} \lVert \Theta_i^{(2)}$ until it measures $(1,\pw_{i^*}, u_{i^*})$ after previously measuring $(2,\pw_{i^*}, T)$ for some $i^*,T$. In this case, it reprograms the $H_1$ part of the query to be $(\pk^*)^{-1} \odot T$, and the $H_2$ part to be a random $\Theta_{i^*}^{(2)}$. Otherwise it aborts the measure-and-reprogram and just guesses a random key $K \gets \{0,1\}^\lambda$.
    \item On outputs $\mathbf{x} = (A_1, A_2, B_1, B_2)$, $\B$ samples $r_{3-i^*} \gets \R$ and runs $C[\{r_{3-i^*},r_{i^*}\}]^{H(\mathbf{x} \ast \mathbf{\Theta})}$ except that it uses $c^*$ in place of $c_{i^*}$ (and hence does not actually need to know the value of $r_{i^*}$, which was sampled by its challenger). $\B$ then measures and outputs $K_{i^*}^*$.
\end{itemize}
Note that $V[\{r_{3-i^*},r_{i^*}\}]^H$ outputs $1$ exactly when $(\cA_1', \cA_2)$ wins the security game (with probability $\delta$). We now argue that $V[\{r_{3-i^*},r_{i^*}\}]^H$ still outputs $1$ in $\B$ with noticeable probability. Assuming that a suitable $i^*$ always exists incurs a negligible loss; by \zcref[S]{2fprog}, \[ \Pr[V[\{r_{3-i^*},r_{i^*}\}]^H \text{ outputs } 1 \land  \exists \ \text{such an } i^*] \geq \Pr[V[\{r_{3-i^*},r_{i^*}\}]^H \text{ outputs } 1] - 2^{-\ell}. \]
The reprogramming of $H_1$ to $(\pk^*)^{-1} \odot T$ is indistinguishable to $(\cA_1', \cA_2)$ from reprogramming to a uniform value: a reduction $\B'$ to public key uniformity can, on receiving a challenge $\pk^*$, compute an encapsulation $(c^*,K^*) \gets \Enc(\pk^*)$, and run $\B$ using $\pk^*,c^*$, returning $0$ if $K^* = K^*_{i^*}$ and $1$ otherwise. If $\pk^*$ is random $\B'$ simulates the usual measure-and-reprogram simulation, and if $\pk^*$ is real $\B'$ simulates the reduction $\B$. Therefore, summing over all points $A_1, A_2, B_1, B_2$ of the form specified in \ref{eq:input-form1} \& \ref{eq:input-form2} and applying \zcref[S]{thm:mandp} when $t = 4$ we have \begin{align*}
        \mathsf{Adv}(\text{UNP}_{\KEM, \lambda}^{\B}) \geq \frac{\delta}{(2q_1+1)^8} - \mathsf{Adv}(\text{PKU}_{\KEM, \lambda}^{\B'}) - 2^{-\ell},
    \end{align*} completing the proof. In fact, the above argument can be extended to the case where $\cA_2$ is additionally given $w_i,y_i = \Hkey(w_i,K_i), \tau_i = \Htag(\pw_i,\pk_i,\phi^*,c_i,K_i)$ for $i \in \{1,2\}$, and $(\cA_1,\cA_2)$ have quantum access to $\Hkey, \Htag$. To see this, we apply standard one-way-to-hiding \cite{EC:Unruh14} within the KEM reduction, replacing the embedded challenge’s tag and derived-key value by uniform strings while computing the other response honestly. 

    
    Using this, we now show a stronger version of search security: any QPT adversary $(\cA_1, \cA_2)$ has at most $2^{-\gamma} + \mathsf{negl}$ chance of success in the following experiment:
    \begin{itemize}
    \item $(\phi^*,\sigma) \gets \cA_1^{H_1,H_2,\Hkey,\Htag}(1^\lambda)$ outputs $\phi^*$ and auxiliary state $\sigma$.
    \item Sample $\pw \gets D, r \gets \R$, $w \gets \{0,1\}^\lambda$.
    \item Compute $\pk = \cD_{\TF}(\pw,\phi^*)$, $(c,K) \coloneqq \Enc(\pk;r)$, $y = \Hkey(w,K)$ and $\tau = \Htag(\pw,\pk,\phi^*,c,K)$.
    \item Run $K^* \gets \cA_2^{H_1,H_2,\Hkey,\Htag}(\pw,c,\tau,w,y,\sigma)$; the adversary succeeds if $K^* = K$.
\end{itemize}

We use \zcref[S]{multirewind} to reduce to the weaker version shown previously. Sample and fix a choice of $r_\pw \gets \R$ and $w_\pw \gets \{0,1\}^\lambda$ for each password $\pw$. We will define a family of unitaries $\{U_\pw\}_\pw$, where each $U_\pw$ has a tuple $(c,\tau,w,y)$ hard-coded, which are deterministically derived from $\pw,r_\pw,w_\pw$, the random oracles, and $\phi^*$. $U_\pw$ applies a coherent implementation of $\cA_2^{H_1,H_2,\Hkey,\Htag}$ to $\pw,c,\tau,w,y,\sigma$ (plus some ancilla qubits), writing $K^*$ into an extra register $\Xr$. Let $P^*_{\pw}$ project onto states with \[\Xr = \Enc(\pk_\pw;r_\pw)[1],\] where $\pk_{\pw} = \cD_{\TF}(\pw,\phi^*)$, and define \[P_{\pw} = (U_{\pw}^H)^\dagger P^*_{\pw} U_{\pw}^H.\] Given $(\cA_1, \cA_2)$ with success probability $2^{-\gamma} + \epsilon$ we now define an adversary $(\cA'_1, \cA'_2)$, where $\pw_1,\pw_2$ are sampled as in the two-output game:
    \begin{itemize}
        \item $\cA'_1(1^\lambda)$: output $(\phi^*,\sigma) \gets \cA_1^{H_1,H_2,\Hkey,\Htag}(1^\lambda)$.
        \item $\cA'_2(\pw_1,\pw_2,\sigma)$: \begin{enumerate}
            \item Apply $U_{\pw_1}^H$ to $\sigma$, measure $\Xr$ to obtain $K_1^*$, and copy $K_1^*$ to a fresh register.
            \item Apply $U_{\pw_2}^H (U_{\pw_1}^H)^\dagger$ and measure $\Xr$ to get $K_2^*$.
            \item Output $(K_1^*,K_2^*)$.
        \end{enumerate}
    \end{itemize}
    As in \zcref[S]{thm:MRgbsec}, since each $P_\pw^*$ projects $\Xr$ to a single value, the measurement of $\Xr$ does not disturb the state conditioned on success. By applying \zcref[S]{multirewind} to the projectors $\{P_\pw\}_\pw$ and the state $\sigma$, averaged over $H_1,H_2,\Hkey,\Htag$,$\phi^*,\sigma$,$\{r_\pw,w_\pw\}_\pw$, we get that the probability $(\cA'_1, \cA'_2)$ outputs $K_1^* = K_1, K_2^* = K_2$ is at least $\epsilon^3$. This implies the existence of an adversary $(\cA'_1, \cA'_2)$ with the same probability of success in the two-output game above, where $\cA'_2$ now additionally takes $(c_1,\tau_1,w_1,y_1)$, $(c_2,\tau_2,w_2,y_2)$ as input and hard-codes them into the descriptions of $U_{\pw_1},U_{\pw_2}$. But above we showed that this probability must be negligible, establishing that $\epsilon = \mathsf{negl}$.

    
    Finally we are ready to show full indistinguishability security against a malicious initiator. Here is where we use the assumption that $\Hkey$ is salted: a standard re-programming argument allows us to move to a hybrid (with $\mathsf{negl}$ distinguishing advantage) where the oracle $\Hkey(w,*)$ is sampled \emph{independently} of $K$ and the adversary's state after it sends its first message. We can apply \zcref[S]{cor:search-to-decision} with $k = K, H = \Hkey(w, \ast)$ and $\rho$ being the transcript/residual state. Therefore $|\Pr[\cA_2 \text{ outputs } 1 | b = 0] - \Pr[\cA_2 \text{ outputs } 1 | b = 1]| \leq 2^{-\gamma} + \mathsf{negl}$, and hence the probability of winning in the full game is at most $1/2 + (2^{-\gamma} + \mathsf{negl})/2 = 1/2 + 2^{-\gamma}/2 + \mathsf{negl}$.
\end{proof}

\begin{proof}[Proof of security against a malicious respondent]
    As an intermediate step, we first prove a weaker version of security where the malicious receiver has to guess $\pw$ given $\phi$. A QPT adversary $\cA$ should have at most $2^{-\gamma} + \mathsf{negl}$ probability of success in the following experiment, where $\pw \gets D, (*,\phi) \gets \PAKE_1(\pw); \pw^* \gets \cA^{H_1,H_2}(\phi);$ and  $\cA$ wins if $\pw = \pw^*$. Given $\cA$ that succeeds with $2^{-\gamma} + \epsilon$, we give a reduction $\B^{\cA}$ that breaks uniformity of public keys with advantage at least $\frac{\epsilon}{4}$. Given a challenge $\pk^*$ with choice bit $c$, $\B$ plays the part of $\PAKE_1$ but uses $\pk^*$ instead of honestly generating its own $\pk$. If $\pw^* = \pw$ then $\B$ outputs $1$; otherwise it outputs a uniformly random bit.
        \begin{claim}
            When $c = 0$ $(\pk^* \gets \PK)$,  $(\pw,\phi)$ are distributed as independent samples $\pw \gets D, \phi \gets \{0,1\}^\ell \times \PK$. \label{claim:phindep}
        \end{claim}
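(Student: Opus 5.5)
The plan is a direct one-time-pad style computation, carried out conditionally on the random oracles so that the conclusion survives the adversary's oracle access. Recall that the 2-Feistel encryption used by $\PAKE_1$ (as it appears in the malicious-initiator proof, with $\phi^* = (s,T)$, $B_i = (1,\pw_i,u_i)$ and $\pk_i = (\Theta^{(1)}_{B_i})^{-1}\odot T$) computes $\cE_\TF(\pw,\pk;r) = (s,T)$ with $r \gets \{0,1\}^\ell$, $T = H_1(\pw,r)\odot \pk$ and $s = r \oplus H_2(\pw,T)$, where $\odot$ is the group operation on $\PK$. I will show that for every fixed password $\pw$ and every fixed pair of functions $(H_1,H_2)$, the output $(s,T)$ is uniform on $\{0,1\}^\ell\times\PK$ when $\pk = \pk^*\gets\PK$ and $r\gets\{0,1\}^\ell$ are independent and uniform. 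This immediately gives the claim, and in fact gives more: the triple $(\pw,\phi,(H_1,H_2))$ is distributed as the product of $D$, the uniform distribution on $\{0,1\}^\ell\times\PK$, and uniform random oracles.

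The first step handles $T$. Fix $\pw$, $H_1$, $H_2$ and a value of $r$. The map $\pk\mapsto H_1(\pw,r)\odot\pk$ is a bijection of $\PK$, since it is left multiplication in a group. Therefore, conditioned on $r$, the value $T$ is uniform on $\PK$. Since this conditional distribution does not depend on $r$, the pair $(r,T)$ is uniform on $\{0,1\}^\ell\times\PK$, so $r$ and $T$ are independent.

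The second step handles $s$. Condition on $T$. Then $r$ is still uniform, and $s = r\oplus H_2(\pw,T)$ is $r$ shifted by a fixed string, so $s$ is uniform on $\{0,1\}^\ell$ independently of $T$. Combining the two steps, for every fixed $\pw$ and $(H_1,H_2)$, the pair $\phi=(s,T)$ is uniform on $\{0,1\}^\ell\times\PK$. Because this distribution does not depend on $\pw$, and $\pw\gets D$ is sampled independently of $\pk^*$, $r$ and the oracles, the pair $(\pw,\phi)$ is a product of independent samples.

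The only point requiring care is that the adversary also queries $H_1,H_2$, so independence of $\phi$ and $\pw$ marginally over the oracles would not be enough. This is exactly why the argument is done pointwise for every fixed $(H_1,H_2)$. The resulting statement is that $\phi$ is independent of $(\pw,H_1,H_2)$ jointly, so the adversary's entire view $(\phi, H_1, H_2)$ is independent of $\pw$. I expect this to be what is needed afterwards to bound $\Pr[\pw^*=\pw\mid c=0]\le 2^{-\gamma}$ in the reduction $\B$.
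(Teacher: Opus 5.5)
Your proof is correct and follows essentially the paper's argument. Both fix $\pw$, $H_1$, $H_2$ and show that $(\pk^*, w)\mapsto (s,T)$ sends the uniform distribution to the uniform distribution on $\{0,1\}^\ell\times\PK$, so $\phi$ is independent of $\pw$ even given the full oracle truth tables. The paper states this as bijectivity of $\cE_\TF(\pw,\cdot\,;\cdot)$, since $\cD_\TF$ recovers both $\pk$ and $w$, whereas you check the same fact directly in two one-time-pad steps.
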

        \begin{proof}
            Let us consider fixed $H_1, H_2, \pw$. Recall that $\cE_\TF(\pw, \ast ; \ast)$ is a bijective map $\PK \times \{0,1\}^\ell \to \{0,1\}^\ell \times \PK$, with left inverse $\cD_\TF(\pw, \ast)$. Therefore if $w,\pk$ are chosen uniformly, then $\phi = \cE_\TF(\pw,\pk; w)$ will be uniform as well. Since $\pw$ is fixed here, when $\pw \gets D$ it must be independent of $\phi$. Additionally, since $H_1,H_2$ are fixed, this continues to hold even against an adversary that sees the entire truth tables of $H_1, H_2$ (so certainly against an adversary with polynomially many quantum queries to $H_1,H_2$).
        \end{proof}
        \zcref[S]{claim:phindep} immediately implies $p_0 \coloneqq \Pr[\pw^* = \pw | c = 0] \leq 2^{-\gamma}$. When $c = 1$, $\B$ simulates an honest initiator to $\cA^{H_1,H_2}$, so $p_1 \coloneqq \Pr[\pw^* = \pw | c = 1] = 2^{-\gamma} + \epsilon$; all together, $\Pr[\B \text{ guesses } c] = \frac{1}{2} + \frac{p_1 - p_0}{4} \geq \frac{1}{2} + \frac{\epsilon}{4}$. We complete the proof of security against a malicious respondent by using \zcref[S]{thm:simextract} to build a password-guessing adversary from a malicious respondent adversary. Suppose $(\cA_1,\cA_2)$ wins the experiment of \zcref[S]{def:PAKE-respondent} with probability $\frac{1}{2} + \frac{2^{-\gamma} + \epsilon}{2}$ for some non-negligible $\epsilon$. Let $\mathsf{Acc}$ denote the event that the honest initiator accepts $\cA_1$'s response. Since $\Pr[\mathsf{win}] \leq \frac{1}{2} + \frac{1}{2}\Pr[\mathsf{Acc}]$, we have $\Pr[\mathsf{Acc}] \geq 2^{-\gamma} + \epsilon$. We now construct a password-guessing adversary that succeeds with probability at least $2^{-\gamma} + \epsilon - \mathsf{negl}(\lambda)$, contradicting the above bound.

        Suppose $\cA_1$ makes $q_1$ queries to $\Htag$. Consider the following experiment $\Hyb_0$, where we run the honest initiator against $\cA_1$ and output all values:
        \begin{itemize}
            \item $\pw \gets D$
            \item $(\sk,\pk,\phi) \gets \PAKE_1(\pw)$;
            \item $c,\tau \gets \cA_1^{H_1,H_2,\Htag}(\phi)$;
            \item Compute $K = \Dec(\sk,c), \tau' = \Htag(\pw,\pk,\phi,c,K)$.
            \item Output $\pw,\pk,\phi,c,K,\tau, \tau'$.
        \end{itemize}
        Note that $\Pr[\tau' = \tau] \geq \Pr[\mathsf{Acc}] \geq 2^{-\gamma} + \epsilon$. 
        \\
        $\Hyb_1$: In this hybrid, we use the extractable QROM simulator of \zcref[S]{thm:simextract} to simulate $\Htag$. Since there are no $\Sim.E$ queries, $\Hyb_1$ and $\Hyb_0$ are perfectly indistinguishable.
        \\
        $\Hyb_2$: In this hybrid, we add a $\Sim.E$ query at the end to extract $\cA_1$'s password guess.
        \begin{itemize}
            \item $\pw \gets D$
            \item $(\sk,\pk,\phi) \gets \PAKE_1(\pw)$;
            \item $c,\tau \gets \cA_1^{H_1,H_2,\Sim.RO}(\phi)$;
            \item Compute
            \begin{align*}
                K &= \Dec(\sk,c), \\
                \tau' &= \Sim.RO(\pw,\pk,\phi,c,K), \\
                (\pw',\ast,\ast,\ast,\ast) &= \Sim.E(\tau).
            \end{align*}
            \item Output $\pw,\pk,\phi,c,K,\tau, \tau',\pw'$.
        \end{itemize}
        Since $\Sim.E$ is applied at the end, the other values are distributed identically between $\Hyb_2$ and $\Hyb_1$. By \zcref[S]{thm:simextract}, $\Pr[\tau = \tau' \land \pw \neq \pw'] \leq O(q_1^3/2^\lambda).$ Therefore in $\Hyb_2$,
        \[
        \Pr[\tau = \tau' \land \pw = \pw'] \geq 2^{-\gamma} + \epsilon - O(q_1^3/2^\lambda).
        \]
        $\Hyb_3$: In this hybrid, we move $\Sim.E$ to just after $\cA_1$ outputs.
        \begin{itemize}        
            \item $\pw \gets D$
            \item $(\sk,\pk,\phi) \gets \PAKE_1(\pw)$;
            \item $c,\tau \gets \cA_1^{H_1,H_2,\Sim.RO}(\phi)$;
            \item Compute $(\pw',\ast,\ast,\ast,\ast) = \Sim.E(\tau)$.
            \item Compute $K = \Dec(\sk,c), \tau' = \Sim.RO(\pw,\pk,\phi,c,K)$.
            \item Output $\pw,\pk,\phi,c,K,\tau, \tau',\pw'$.
        \end{itemize}
        By \zcref[S]{thm:simextract}, in $\Hyb_3$ \[
        \Pr[\tau = \tau' \land \pw = \pw'] \geq 2^{-\gamma} + \epsilon -  O(q_1^3/2^\lambda) - \frac{8\sqrt{2}}{2^{\lambda/2}}.
        \]
        Therefore if $\epsilon$ is non-negligible we have constructed an adversary guessing the password with non-negligible advantage: the adversary runs $c,\tau \gets \cA_1^{H_1,H_2,\Sim.RO}(\phi)$ (simulating $\Htag$ with $\Sim$), computes $(\pw',\ast,\ast,\ast,\ast) = \Sim.E(\tau)$ and outputs $\pw'$ as the password guess.
    \end{proof}
    
\section{Advantage-tight one-way to hiding}\label{sec:oth}

In this section, we establish a search-to-decision reduction in the quantum random oracle model where the search upper bound transfers to a distinguishing upper bound without advantage loss (allowing for polynomially greater running time).

In fact, we consider a very general scenario specified by any sampler $D$ that outputs $(H,G,S,\rho)$, where $H,G$ are functions from $\cX \to \cY$, $S$ is a subset of the domain $S \subseteq \cX$, and $\rho$ is a quantum state.\footnote{Technically we consider a family of samplers $H_\lambda,G_\lambda,S_\lambda,\rho_\lambda \gets D(1^\lambda)$ indexed by the security parameter, but this is left implicit. } We require that $H(x) = G(x) \ \forall x  \notin S$. Let $M_S : \cX \to \{0,1\}$ be the membership-checking predicate for $S$. For any (oracle-aided) distinguishing circuit $A$, we define its distinguishing advantage as \[d_A = \Big|\Pr_{(H,G,S,\rho) \gets D}[A^H(\rho) = 1] -  \Pr_{(H,G,S,\rho) \gets D}[A^G(\rho) = 1]\Big|,\] and for any (oracle-aided) search circuit $B$, we define its search advantage as \[p_B = \Pr_{(H,G,S,\rho) \gets D}[x \in S : x \gets B^{H,G,M_S}(\rho)].\] Notice that we allow the search algorithm to make oracle calls to the membership checking predicate $M_S$. 

We show that the distinguishing advantage is \emph{exactly} upper bounded by the search advantage (plus an additive negligible factor). We take all oracle operations to be unit cost, so below QPT means that the total size of all non-oracle operations is polynomial.

\begin{theorem}\label{thm:search-to-decision}
    Let $\epsilon : \mathbb{N} \to [0,1]$ and suppose that for every QPT $B$, $p_B \leq \epsilon(\lambda) + \mathsf{negl}(\lambda)$. Then for every QPT distinguisher $A$, $d_A \leq \epsilon(\lambda) + \mathsf{negl}(\lambda)$. 
\end{theorem}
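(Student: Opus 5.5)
We argue by contrapositive: given a QPT distinguisher $A$ with $d_A \geq \epsilon + \eta$ for a non-negligible $\eta$, we build a QPT search algorithm $B$ with $p_B \geq \epsilon + \eta - \mathsf{negl}$. The starting point is the standard hybrid over queries. Let $A$ make $q$ queries, and let $A_i$ denote $A$ run with its first $i$ queries answered by $G$ and the rest by $H$. Since $H$ and $G$ agree outside $S$, the pair $(A^H,A^G)$ can be compared using the punctured oracle that applies $M_S$ coherently to each query and records whether a query touched $S$.

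The key tool is the ``measure-rewind-measure'' (MRM) technique of \cite{EC:KSSSS20}. Their semi-classical O2H result bounds $d_A$ by the probability $P_{\mathrm{find}}$ that running $A$ with the punctured oracle $H \setminus S$ ever triggers the event ``queried a point in $S$'', and it does so with no $q$-loss: $d_A \leq 2\sqrt{P_{\mathrm{find}}}$ in general, and the difference between $\Pr[A^H=1]$ and $\Pr[A^{H\setminus S}=1]$ is controlled tightly. To get the \emph{linear} bound $d_A \leq P_{\mathrm{find}}$, we use that $A^{H\setminus S}$ has identical behavior under $H$ and $G$, because the punctured oracles coincide. This gives $|\Pr[A^H=1]-\Pr[A^G=1]| \leq \mathrm{TD}$ between the runs with and without puncturing. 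This quantity is at most $\Pr[\mathsf{Find}]$ in the classical analogue, and in the quantum case it is at most $\Pr[\mathsf{Find}]$ for the version in which the event is \emph{measured} (the puncturing measurement). The reason is that conditioned on no-find, the post-measurement state is identical in the $H$ and $G$ worlds, so only the find-branch weight can contribute. I plan to verify this carefully; it is essentially the statement that the two experiments ``$A$ with measured puncturing, oracle $H$'' and ``$\ldots$, oracle $G$'' are identical until $\mathsf{Find}$. The remaining gap is between $A^H$ and $A^H$-with-puncture-measurements. I will handle it by noting that the distinguisher may be assumed to be (or replaced by) one that measures $M_S$ on each query; I expect this to fail directly, so instead I will compare to the punctured version via a telescoping argument in which the per-query disturbance is absorbed. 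Making this loss $\mathsf{negl}$ rather than $q$-dependent is the main obstacle.

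The search algorithm $B$ will be built as follows. $B$ has $M_S$, so it can run $A$ with the coherent puncture checks and detect $\mathsf{Find}$ without measuring $x$. Once $\mathsf{Find}$ fires at query $i$, $B$ must still \emph{output} an element of $S$. Measuring the query register at that moment yields $x \in S$ with certainty, because the find projector already restricts the register to $S$. Thus $p_B = \Pr[\mathsf{Find}]$ exactly, provided $B$ measures puncturing at every query. This suggests that the right route is to show directly $d_A \leq \Pr[\mathsf{Find}_{\text{measured}}] + \mathsf{negl}$, where the negligible term arises from the difference between measured and unmeasured puncturing for the specific final-bit observable.

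To remove the $q$-dependent loss in that last difference, I will use the quantum rewinding idea of \cite{GSLV}, which is where polynomial blowup in running time depending on $q$ is spent. Rather than measuring puncturing at every query, $B$ estimates, through repeated rewinding with amplitude-estimation-style loops using the checkable predicate $M_S$, whether the current state has non-negligible weight on $S$. It measures only at a well-chosen (random, or adaptively estimated) query, and otherwise rewinds gently. The claim to aim for is $d_A \leq \max_i \Pr[\text{first find at or before } i] + \mathsf{negl}$, with each term realized by an efficient $B$ running $\mathrm{poly}(q,1/\eta)$ repetitions. I expect that establishing this tightness, i.e. beating the naive $2\sqrt{\cdot}$ or $q$-factor losses, is the crux of the whole argument.
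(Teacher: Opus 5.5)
There is a genuine gap. You correctly identify the obstacle (removing the $q$-dependent loss), but you never supply a mechanism that removes it, and the concrete route you single out cannot work.

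First, a correction to the premise. Semi-classical O2H is due to \cite{C:AmbHamUnr19}, and its bound is $2\sqrt{(d+1)\Pr[\mathsf{Find}]}$, so it does carry a depth loss. The MRM bound of \cite{EC:KSSSS20} is linear in the search probability, but it loses a factor $O(q)$ precisely because it measures a single randomly chosen query.

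More importantly, the inequality you propose to prove directly is false. That inequality is $d_A \le \Pr[\mathsf{Find}_{\text{measured}}] + \mathsf{negl}(\lambda)$, with $B$ simply running $A$ under measured puncturing. For a counterexample, take $\cX = [N]$ with $N = \lambda^2$, $S = \{x^*\}$ for uniform $x^*$, $H \equiv 0$, and $G = \mathbf{1}_{x^*}$. Let $A$ run about $\frac{\pi}{4}\sqrt{N}$ Grover iterations and then check its measured point with one more query, so $d_A \approx 1$. Under measured puncturing the oracle acts as the all-zero function until $\mathsf{Find}$, so nothing rotates the state toward $x^*$. Each failed membership test projects onto the uniform superposition over $\cX \setminus \{x^*\}$, after which the diffusion step puts weight only about $4/N$ back on $x^*$. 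Hence $\Pr[\mathsf{Find}] = O(q/N) = O(1/\lambda)$, a Zeno effect, and the ``difference between measured and unmeasured puncturing'' is $\Omega(1)$, not negligible.

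Your final target, $d_A \le \max_i \Pr[\text{first find at or before } i] + \mathsf{negl}$, is the same false inequality when finds are measured, since the maximum is attained at $i = q$. Measuring only at one well-chosen query is exactly what costs the factor $q$ in MRM. A successful $B$ must amplify rather than merely detect, and your proposal does not say what is amplified or why it captures all of $d_A$.

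The missing idea, which is how the paper proceeds, is to argue at the level of operators rather than events, using both a forward and a reverse branch. Set $E = A_H^\dagger\Pi_{\acc} A_H - A_G^\dagger\Pi_{\acc} A_G$ and $\Delta = A_H - A_G = \sum_{t} R_t(O_H - O_G)F_t$, where $O_H - O_G$ vanishes off $S$. Then for every (unnormalized) input $\phi$ in the image of $\Pi_{\inp}$,
\[
\|E\phi\| \le \|\Delta\phi\| + \|\Delta^\dagger\Pi_{\acc} A_G\phi\| \le 2q\bigl(\|\Pi_{\good} U_{\forward}\phi\| + \|\Pi_{\good} U_{\rev}\Pi_{\acc} A_G\phi\|\bigr) \le 4q\,\|\Pi_{\good} U_A\Pi_{\inp}\phi\|.
\]
Here $U_A$ coherently superposes two history states:
\begin{itemize}
\item a forward history state, namely $A^G$ run up to a uniformly random query;
\item a reverse history state, namely $A^H$ run backwards from the accepting component $\Pi_{\acc} A_G\phi$.
\end{itemize}
Your plan has no reverse branch, yet one is needed to control the $\Delta^\dagger$ term.

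Because this bound holds for all inputs, you can let $P$ project onto the right singular vectors of $\Pi_{\good} U_A\Pi_{\inp}$ with singular value at least $1/(8r)$. This gives $\|E - PEP\| \le q/r$, and since $-I \preceq E \preceq I$, it follows that $d_A \le \E\|P\psi\|^2 + q/r$. Quantum singular value amplification \cite{GSLV}, with $O(\lambda r)$ calls to $U_A, U_A^\dagger, \Pi_{\good}, \Pi_{\inp}$, then moves essentially all of $P\psi$ into the image of $\Pi_{\good}$. Measuring the query register therefore yields $p_B \ge (1-2^{-\lambda})^2\,\E\|P\psi\|^2$. Taking $r = 2q\,s(\lambda)$, for the polynomial $s$ with $d_A > \epsilon + 1/s$, finishes the argument.

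The point is that the factor $q$ is absorbed into the singular-value threshold, i.e.\ into $B$'s running time, and never into its success probability. This is the step your proposal leaves open.
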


The starting point for our proof is a decomposition of an arbitrary attacker's distinguishing advantage as the sum of two terms: a ``forward'' branch that sums its amplitudes on $S$ at each oracle query, and a ``reverse'' branch that, starting from the adversary's \emph{final} state, sums the amplitudes on $S$ at each oracle query when running the adversary in reverse. 

This decomposition is the central idea underlying the ``measure-rewind-measure'' technique of \cite{EC:KSSSS20}. However, their search bound loses a factor that depends on the number of queries $q$ made by the distinguisher, since their search adversary essentially chooses one (forward or reverse) query to measure at random.

Here, we move this factor of $q$ from the advantage bound to the \emph{running time} of the search adversary by using quantum rewinding. In particular, we define a unitary $U$ that runs the search adversary suggested by the above decomposition \emph{coherently} over the choice of branch $b \in \{\mathsf{forward},\mathsf{reverse}\}$ and query $t \in [q]$. We then define an ``input'' projector $\Pi_\inp$ that projects onto the clean starting space of this (purified) adversary, and a ``success'' projector $\Pi_\good$ that projects onto the set $S$. Then, by alternating the projectors $\Pi_\inp$ and $U^\dagger \Pi_\good U$ sufficiently many times, one can amplify \emph{almost all} of the adversary's input state that is responsible for its distinguishing advantage (which, by the decomposition stated above, lies in the span of right singular vectors of $\Pi_\good U \Pi_\inp$ above a certain threshold) onto the image of $\Pi_\good$. Measuring the output register then produces a string in $S$.

The quantum rewinding step is captured by the below amplitude amplification lemma, which follows readily from \cite[Theorem 26]{GSLV}.

\begin{lemma}[\cite{GSLV}]\label{lemma:amplification}
    Let $U$ be a unitary, let $\Pi_\inp$ and $\Pi_\good$ be projectors, and write \[\Pi_\good U \Pi_\inp = \sum_j \sigma_j\ketbra{u_j}{v_j}, \qquad P_{\geq \gamma} \coloneqq \sum_{j: \sigma_j \geq \gamma}\ketbra{v_j}{v_j}.\] For any $\gamma > 0$ and $\delta \in (0,1/2)$, there exists a unitary $\Amp_{\gamma,\delta}$ using a total of $O(\log(1/\delta)/\gamma)$ calls to $U, U^\dagger, \Pi_\inp, \Pi_\good$ such that for any $\rho$ supported on the image of $\Pi_\inp$, \[\Tr\left(\Pi_\good \Amp_{\gamma,\delta}(\rho)\right) \geq (1-\delta)^2\Tr\left(P_{\geq \gamma}\rho\right).\] 
\end{lemma}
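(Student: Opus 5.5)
The plan is to derive the statement from quantum singular value transformation (QSVT), namely the singular value amplification / thresholding result \cite[Theorem 26]{GSLV}, applied to the projected unitary encoding $A \coloneqq \Pi_\good U \Pi_\inp = \sum_j \sigma_j\ketbra{u_j}{v_j}$. The structural input is Jordan's lemma: the pair $(\Pi_\inp, U^\dagger\Pi_\good U)$ decomposes the space into mutually orthogonal invariant blocks of dimension at most two. Within each block, $U$ maps the right singular vector $\ket{v_j}$ to a vector whose $\Pi_\good$-component is $\sigma_j\ket{u_j}$. Consequently every operation built from $U, U^\dagger$ and the reflections $2\Pi_\inp - I$, $2\Pi_\good - I$ respects this decomposition.

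\textbf{Step 1 (polynomial).} Fix $\gamma$ and $\delta$. I will take an odd real polynomial $P$ of degree $d = O(\log(1/\delta)/\gamma)$ satisfying
\[
|P(x)| \le 1 \text{ for all } x\in[-1,1], \qquad P(x)\ge 1-\delta \text{ for } x\in[\gamma,1].
\]
This is a standard smoothed approximation of the sign function, for example an integrated, rescaled Gaussian/erf approximation truncated via Chebyshev expansion. The degree bound $O(\log(1/\delta)/\gamma)$ is exactly what GSLV prove for this approximation.

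\textbf{Step 2 (QSVT).} By the QSVT theorem, since $P$ is odd and bounded by $1$ on $[-1,1]$, there are phases $\phi_1,\dots,\phi_d$ with the following property. The alternating product $\Amp_{\gamma,\delta}$ of $U$ and $U^\dagger$, interleaved with the projector-controlled phase rotations $e^{i\phi_k(2\Pi_\inp-I)}$ and $e^{i\phi_k(2\Pi_\good-I)}$, satisfies
\[
\Pi_\good\,\Amp_{\gamma,\delta}\,\Pi_\inp = P^{(SV)}(A) = \sum_j P(\sigma_j)\ketbra{u_j}{v_j}.
\]
This uses $d$ calls to $U$ or $U^\dagger$ and $O(d)$ uses of each projector. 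The controlled rotations need one ancilla qubit; I will absorb it into $\Amp$, initialized and uncomputed inside the operation. I expect the main care to go here: checking that the parity convention matches (an odd $P$ is what gives a map from the image of $\Pi_\inp$ to the image of $\Pi_\good$), and that the real-part or phase conventions of GSLV can be met with a real $P$. If needed, I will use their real-polynomial variant, which costs a constant-factor overhead via linear combination of unitaries. That would make $\Amp$ a unitary only up to the standard LCU post-selection, so I would prefer the direct odd-polynomial version.

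\textbf{Step 3 (bound).} It suffices, by linearity and convexity, to treat a pure state $\ket{\phi} = \sum_j \alpha_j\ket{v_j} + \ket{\phi_0}$ in the image of $\Pi_\inp$, where $\ket{\phi_0}$ lies in the kernel of $A$ and so contributes nothing to $P_{\ge\gamma}$. Then
\[
\Pi_\good\Amp\ket{\phi} = \sum_j \alpha_j P(\sigma_j)\ket{u_j},
\]
and since the $\ket{u_j}$ are orthonormal there are no cross terms. This gives
\[
\Tr(\Pi_\good\Amp(\ketbra{\phi})) = \sum_j |\alpha_j|^2 P(\sigma_j)^2 \ge (1-\delta)^2\sum_{j:\sigma_j\ge\gamma}|\alpha_j|^2 = (1-\delta)^2\bra{\phi}P_{\ge\gamma}\ket{\phi}.
\]
Averaging over a decomposition of $\rho$ into such pure states gives the claimed inequality.
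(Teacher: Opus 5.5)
Your proposal is correct and follows essentially the paper's route: the paper gives no argument beyond citing \cite[Theorem 26]{GSLV}, which is itself obtained by feeding an odd sign-approximating polynomial of degree $O(\log(1/\delta)/\gamma)$ into QSVT, and your Step 3 is the short calculation (no cross terms between orthogonal singular vectors) that turns this into the stated trace bound. The one thing to fix is Step 2: a generic real bounded odd $P$ cannot be realized exactly and ancilla-free as $\Pi_\good \Amp \Pi_\inp$, but this is harmless, because you can either use the complex QSVT polynomial $P_{\mathbb{C}}$ with $\mathrm{Re}\,P_{\mathbb{C}} = P$ (Step 3 only needs $|P_{\mathbb{C}}(\sigma_j)| \geq P(\sigma_j) \geq 1-\delta$ for $\sigma_j \geq \gamma$), or use GSLV's one-ancilla real variant, which is a genuine unitary rather than a post-selected LCU since the $\ket{+}\bra{+}$ projection on the ancilla can simply be dropped, as it only decreases $\Tr(\Pi_\good\,\cdot)$.
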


Given this lemma, we are ready to prove \zcref[S]{thm:search-to-decision}.

\begin{proof} (of \zcref[S]{thm:search-to-decision})
    Let $A$ be any QPT distinguisher that makes $q$ oracle calls and has advantage $d_A$. 
    
    \paragraph{Notation.} Fix a sample $H,G,S,\rho$ from $D$ and consider a purified implementation of $A$ that operates on a register $\Zr$ that holds its input state $\rho$ along with any auxiliary registers required for the purification of the strategy. We also define a clock register $\Rr_\clock$ spanned by $\ket{0},\ket{1},\dots,\ket{q}$ and initialized to $\ket{0}$. Finally, let $\ket{\psi}_{\Zr,\Rr_\pure,\Rr_\clock}$ denote an arbitrary purification of $\rho$ using purification register $\Rr_\pure$, tensored with $\ket{0}_{\Rr_\clock}$. Note that $A$ never touches $\Rr_\pure$ or $\Rr_\clock$.

    Using the purified implementation of $A$, we write 
    \begin{align*}
        A_H &= A_q O_H A_{q-1} \dots A_1 O_H A_0 \\
        A_G &= A_q O_G A_{q-1} \dots A_1 O_G A_0,
    \end{align*}
     where $O_H$ is an oracle query to $H$ and $O_G$ is an oracle query to $G$. For $t \in [q]$, define 
    \begin{align*}
        F_t &= A_{t-1} O_G A_{t-2} \dots A_1 O_G A_0, \\
        R_t &= A_q O_H A_{q-1} \dots A_{t+1} O_H A_t.
    \end{align*}
    Next, define $W_q$ to be a unitary satisfying \[W_q\ket{0}_{\Rr_\clock} = \frac{1}{\sqrt{q}}\sum_{t = 1}^q\ket{t}_{\Rr_\clock},\]  define forward and reverse history-state unitaries \[U_\forward \coloneqq \left(\ketbra{0}_{\Rr_\clock} \otimes I + \sum_{t = 1}^q\ketbra{t}_{\Rr_\clock} \otimes F_t\right)\left(W_q \otimes I_{\Zr,\Rr_\pure}\right), \]\[\quad U_\rev \coloneqq \left(\ketbra{0}_{\Rr_\clock} \otimes I + \sum_{t = 1}^q\ketbra{t}_{\Rr_\clock} \otimes R_t^\dagger\right)\left(W_q \otimes I_{\Zr,\Rr_\pure}\right),\]
    and define projectors \[\Pi_\inp \coloneqq \ketbra{0}_{\Rr_\clock} \otimes I_{\Zr,\Rr_\pure}, \quad \Pi_\good \coloneqq \sum_{t = 1}^q\ketbra{t}_{\Rr_\clock} \otimes \Pi_S \otimes I_{\Rr_\pure},\] where $\Pi_S$ is the projection $\sum_{x \in S}\ketbra{x}$ applied to the query register of $A$. Note that for any state $\ket{\psi}$ in the image of $\Pi_\inp$, \begin{equation}\label{eq:forward-expand}
        \| \Pi_\good U_\forward\ket{\psi}\|^2 = \frac{1}{q}\sum_{t \in [q]}\|\Pi_S F_t\ket{\psi}\|^2
    \end{equation} 
    and
    \begin{equation}\label{eq:reverse-expand}
        \| \Pi_\good U_\rev\ket{\psi}\|^2 = \frac{1}{q}\sum_{t \in [q]}\|\Pi_S R_t^\dagger\ket{\psi}\|^2.
    \end{equation}

    Finally, let $\Or$ be the single-qubit sub-register of $Z$ on which $A$ writes its output bit, and define
    \[\Pi_\acc \coloneqq \ketbra{1}_\Or \otimes I.\]

    \paragraph{Approach.} As explained above, we will now construct a unitary $U_A$ based on the description of $A$ such that, for some threshold $r$, any right singular vector of $\Pi_\good U_A \Pi_\inp$ with singular value below $1/r$ contributes negligibly to the distinguishing advantage of $A$. Moreover, we want the complexity of implementing $U_A$ to be comparable to $A$. If this can be arranged, then we can apply \zcref[S]{lemma:amplification} to amplify the overwhelming majority of $A$'s distinguishing advantage into the subspace $\Pi_\good$, which produces a successful search answer.

    The idea for $U_A$ will be derived from re-writing the expression for $A$'s distinguishing advantage as follows. 

    \begin{align*}
        &\big|\Pr[A^H(\ket{\psi}) = 1] - \Pr[A^G(\ket{\psi}) = 1]\big| \\
        &\quad= \big|\bra{\psi}\left(A_H^\dagger\Pi_\acc A_H - A_G^\dagger\Pi_\acc A_G\right)\ket{\psi}\big| \\
        &\quad= \big|\bra{\psi}\left(A_H^\dagger\Pi_\acc A_H - A_H^\dagger \Pi_\acc A_G + A_H^\dagger \Pi_\acc A_G - A_G^\dagger\Pi_\acc A_G\right)\ket{\psi}\big| \\ 
        &\quad= \big|\bra{\psi}\left(A_H^\dagger \Pi_\acc\Delta + \Delta^\dagger \Pi_\acc A_G\right)\ket{\psi}\big| \\
        &\quad\leq \| \Delta\ket{\psi}\| + \|\Delta^\dagger \Pi_\acc A_G\ket{\psi}\|,
    \end{align*}

    where 

    \[\Delta \coloneqq A_H - A_G = \sum_{t \in [q]} R_t(O_H - O_G)F_t.\]

    We have that 
    \begin{align*}
        \|\Delta \ket{\psi}\| &\leq \sum_{t \in [q]}\|R_t (O_H - O_G)F_t\ket{\psi}\| \\
        &\leq 2\sum_{t \in [q]} \|\Pi_SF_t\ket{\psi}\| \\ &\leq 2\sqrt{q}\left(\sum_{t \in [q]}\|\Pi_S F_t\ket{\psi}\|^2\right)^{1/2} \\ &= 2\sqrt{q}\left(q\|\Pi_\good U_\forward\ket{\psi} \|^2\right)^{1/2}\\&= 2q\|\Pi_\good U_\forward\ket{\psi}\|,
    \end{align*}
    where the last inequality is Cauchy-Schwarz and the first equality is \zcref[S]{eq:forward-expand}. Similarly,  
    \begin{align*}
        \| \Delta^\dagger \Pi_\acc A_G \ket{\psi}\| &\leq \sum_{t \in [q]}\|F_t^\dagger (O_H - O_G)^\dagger R_t^\dagger \Pi_\acc A_G\ket{\psi}\| \\ &\leq 2\sum_{t \in [q]} \|\Pi_S R_t^\dagger \Pi_\acc A_G\ket{\psi}\| \\
        &\leq 2\sqrt{q}\left(\sum_{t \in [q]}\|\Pi_S R_t^\dagger \Pi_\acc A_G\ket{\psi} \|^2\right)^{1/2} \\
        &= 2q\|\Pi_\good U_\rev\Pi_\acc A_G\ket{\psi}\|,
    \end{align*}
    using \zcref[S]{eq:reverse-expand} and the fact that $A_G$ and $\Pi_\acc$ preserve $\Pi_\inp$.

    To summarize, we can write \begin{equation}\label{equation:decomposition}\begin{aligned}
        &\big|\Pr[A^H(\ket{\psi}) = 1] - \Pr[A^G(\ket{\psi}) = 1]\big| \\
        &\quad\leq \| \Delta\ket{\psi}\| + \|\Delta^\dagger \Pi_\acc A_G\ket{\psi}\| \\
        &\quad\leq 2q\left(\|\Pi_\good U_\forward\ket{\psi}\| + \|\Pi_\good U_\rev\Pi_\acc A_G\ket{\psi}\|\right).
    \end{aligned} \end{equation}
    
    In other words, up to the $2q$ scaling, one component of the distinguishing advantage comes from a ``forward'' branch $\Pi_\good U_\forward\ket{\psi}$ and the other comes from a ``reverse'' branch $\Pi_\good U_\rev\Pi_\acc A_G\ket{\psi}$. Similar observations have been used in \cite{EC:KSSSS20,AC:GeLiaXue24}.

    \paragraph{The unitary $U_A$.} We use this decomposition to inform the specification of the unitary $U_A$. First, expand the register $\Zr$ to include single-qubit registers $\Br$ and $\Fr$ initialized to $\ket{0}$. $\Br$ will be a ``branch'' register, allowing both the forward and reverse branches to be run coherently, and $\Fr$ will be a register that allows for a unitary implementation of $\Pi_\acc$, that is, \[U_\acc \coloneqq I_\Fr \otimes (I - \Pi_\acc) + X_\Fr \otimes \Pi_\acc.\] Redefine $\ket{\psi}$ to include $\ket{00}_{\Br\Fr}$, redefine $\Pi_\inp$ as \[\Pi_\inp \coloneqq \ketbra{00}_{\Br\Fr} \otimes \ketbra{0}_{\Rr_\clock} \otimes I,\] and redefine $\Pi_\good = \Pi_\good \otimes I_{\Br\Fr}$. Then, define \[U_A \coloneqq \left(\ketbra{0}_\Br \otimes I_\Fr \otimes U_\forward + \ketbra{1}_\Br \otimes (I_\Fr \otimes U_\rev)U_\acc(I_\Fr \otimes A_G)\right)(H_\Br \otimes I),\] where $H$ is the Hadamard gate. 

    \paragraph{The singular value cutoff.} Write a singular value decomposition \[\Pi_\good U_A \Pi_\inp = \sum_{j}\sigma_j\ketbra{u_j}{v_j},\] and, fixing an $r$, define \[P \coloneqq \sum_{j : \sigma_j \geq 1/(8r)} \ketbra{v_j}.\] 
    
    We can now establish the central claim of the proof, showing that $A$'s distinguishing advantage comes fully from the $P$-image of its input $\ket{\psi}$, up to error $q/r$.

    \begin{claim}\label{claim:cutoff}
        \[\big|\Pr[A^H(\ket{\psi}) = 1] - \Pr[A^G(\ket{\psi}) = 1]\big| \leq \|P\ket{\psi}\|^2 + \frac{q}{r}.\]
    \end{claim}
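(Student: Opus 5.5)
The plan is to split the input state along the singular-value cutoff, writing
\[\ket{a} \coloneqq P\ket{\psi}, \qquad \ket{b} \coloneqq (I-P)\ket{\psi}.\]
The accepting-probability difference is a quadratic form in this state, namely
\[\Pr[A^H(\ket{\psi})=1]-\Pr[A^G(\ket{\psi})=1] = \bra{\psi}M\ket{\psi}, \qquad M \coloneqq A_H^\dagger\Pi_\acc A_H - A_G^\dagger \Pi_\acc A_G .\]
The $\ket{a}$ part will be charged $\|P\ket{\psi}\|^2$, and every term involving $\ket{b}$ will be charged $O(q/r)$ using the forward/reverse bounds already derived before \zcref[S]{equation:decomposition}.

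\textbf{Step 1 (the small part is nearly invisible to $\Pi_\good U_A$).} Since $\ket{\psi}$ lies in the image of $\Pi_\inp$, and every $\ket{v_j}$ with $\sigma_j>0$ also lies there, $\ket{b}$ is supported on right singular vectors with $\sigma_j<1/(8r)$ together with $\ker(\Pi_\good U_A\Pi_\inp)$. Hence
\[\|\Pi_\good U_A\ket{b}\| \leq \frac{1}{8r}\|\ket{b}\| \leq \frac{1}{8r}.\]
Next I expand $U_A$ on the branch register. After the Hadamard, the $\Br=0$ branch contributes $\Pi_\good U_\forward\ket{b}$. The $\Br=1$ branch contributes $\Pi_\good U_\rev\Pi_\acc A_G\ket{b}$ on $\Fr=1$, plus a nonnegative extra contribution $\Pi_\good U_\rev(I-\Pi_\acc)A_G\ket{b}$ on $\Fr=0$. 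These pieces sit in orthogonal $\Br,\Fr$ sectors, so with
\[X\coloneqq\|\Pi_\good U_\forward\ket{b}\|, \qquad Y\coloneqq\|\Pi_\good U_\rev\Pi_\acc A_G\ket{b}\|\]
we get $\tfrac12(X^2+Y^2)\le \|\Pi_\good U_A\ket{b}\|^2$. Therefore $X+Y\le 2\|\Pi_\good U_A\ket b\|\le 1/(4r)$.

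The two Cauchy--Schwarz chains preceding \zcref[S]{equation:decomposition} apply verbatim to any vector in the image of $\Pi_\inp$. They give
\[\|\Delta\ket{b}\| + \|\Delta^\dagger\Pi_\acc A_G\ket{b}\| \le 2q(X+Y) \le \frac{q}{2r}.\]

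\textbf{Step 2 (bounding the quadratic form).} I expand
\[\bra{\psi}M\ket{\psi}=\bra aM\ket a+2\,\mathrm{Re}\bra bM\ket a+\bra bM\ket b.\]
The first term satisfies $|\bra aM\ket a|\le\|\ket a\|^2=\|P\ket\psi\|^2$, because $-I\preceq M\preceq I$.

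For terms with $\bra b$ on the left, the key is to use the alternative telescoping
\[M=\Delta^\dagger\Pi_\acc A_H + A_G^\dagger\Pi_\acc\Delta,\]
which is checked by expanding $\Delta = A_H - A_G$. For any unit $\ket{x}$ this gives
\[\bra bM\ket x=\langle \Delta b|\Pi_\acc A_H x\rangle+\langle \Delta^\dagger\Pi_\acc A_G b| x\rangle .\]
Consequently $|\bra bM\ket x|\le \|\Delta\ket b\|+\|\Delta^\dagger\Pi_\acc A_G\ket b\|\le q/(2r)$, applied with $\ket{x}=\ket a$ and with $\ket{x}=\ket b$. This decomposition is exactly what puts the two ``bad'' quantities on $\ket b$ rather than on $\ket\psi$. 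Summing the three $\ket{b}$-terms gives at most $3q/(2r)$ on top of $\|P\ket\psi\|^2$.

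\textbf{Main obstacle.} The conceptual obstacle is routing each cross term so that only $\Delta\ket b$ and $\Delta^\dagger\Pi_\acc A_G\ket b$ appear, since only these are controlled by the cutoff; the alternative factorization of $M$ is the fix for this. The remaining issue is the constant. The crude count above gives $3q/(2r)$ rather than $q/r$. I would try to sharpen it first by combining the terms $2\,\mathrm{Re}\bra bM\ket a + \bra bM\ket b = \bra bM(\ket\psi+\ket a)$ and using $\|\ket{\psi}+\ket{a}\|\le 2$, together with a tighter treatment of the $\Fr$-branch normalization. If the constant still does not come out as $q/r$, the cutoff $1/(8r)$ in the definition of $P$ can simply be shrunk by a constant factor. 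That only changes the amplification cost in \zcref[S]{lemma:amplification} by a constant, so the claim holds in the form needed, with $q/r$ made negligible by choosing $r$ a large polynomial.
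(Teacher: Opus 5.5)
Your proof is correct and is essentially the paper's argument carried out at the vector level. The paper shows $\|E\ket{\phi}\|\le 4q\|L\ket{\phi}\|$, where $L=\Pi_\good U_A\Pi_\inp$ and $E$ is your $M$ compressed to the image of $\Pi_\inp$, so $\|E(I-P)\|\le q/(2r)$; it then bounds $\|E-PEP\|\le\|(I-P)E\|+\|E(I-P)\|\le q/r$ using Hermiticity of $E$, which is all your adjoint factorization of $M$ accomplishes. Your proposed sharpening already gives exactly $q/r$ without shrinking the cutoff: $2\,\mathrm{Re}\bra{b}M\ket{a}+\bra{b}M\ket{b}=\mathrm{Re}\,\bra{b}M(\ket{\psi}+\ket{a})$, your Step~2 bound gives $|\bra{b}M\ket{x}|\le\frac{q}{2r}\|\ket{x}\|$, and $\|\ket{\psi}+\ket{a}\|\le 2$.
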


    \begin{proof}
    Let
    \[
        L \coloneqq \Pi_\good U_A\Pi_\inp
    \]
    and define
    \[
        E \coloneqq
        \ketbra{00}_{\Br\Fr}
        \otimes \ketbra{0}_{\Rr_\clock}
        \otimes
        \left(
            A_H^\dagger\Pi_\acc A_H
            -
            A_G^\dagger\Pi_\acc A_G
        \right).
    \]

    We first compare $E$ with $L$. Let $\ket{\phi}$ be an arbitrary state (not necessarily in the image of $\Pi_\inp$) and define $\ket{\phi_\inp} \coloneqq \Pi_\inp\ket{\phi}$.
     Define
    \begin{align*}
        f(\phi)
        &:=
        \left\|
            \Pi_\good U_\forward
            \ket{\phi_\inp}
        \right\|,\\
        b(\phi)
        &:=
        \left\|
            \Pi_\good U_\rev
            \Pi_\acc A_G\ket{\phi_\inp}
        \right\|.
    \end{align*}

    Then 
    \begin{align*} 
        \|E\ket{\phi}\| &= \|\left(
            A_H^\dagger\Pi_\acc A_H
            -
            A_G^\dagger\Pi_\acc A_G
        \right)\ket{\phi_\inp}\| \\ &\leq \| \Delta\ket{\phi_\inp}\| + \|\Delta^\dagger \Pi_\acc A_G\ket{\phi_\inp}\| \\ &\leq 2q\bigl(f(\phi)+b(\phi)\bigr),
    \end{align*}

    using \zcref[S]{equation:decomposition} applied to the possibly
    unnormalized vector $\ket{\phi_\inp}$.

    By the definition of $U_A$, we have
    \begin{align*}
        L\ket{\phi}
        =\frac{1}{\sqrt{2}}\Bigl(
        &\ket{00}_{\Br\Fr}\,
        \Pi_\good U_\forward\ket{\phi_\inp}\\
        {}+{}&
        \ket{10}_{\Br\Fr}\,
        \Pi_\good U_\rev(I-\Pi_\acc)A_G\ket{\phi_\inp}\\
        {}+{}&
        \ket{11}_{\Br\Fr}\,
        \Pi_\good U_\rev\Pi_\acc A_G\ket{\phi_\inp}
        \Bigr),
    \end{align*}
    so discarding the second summand we have
    \[
        \|L\ket{\phi}\|^2
        \geq
        \frac{1}{2}\bigl(f(\phi)^2+b(\phi)^2\bigr).
    \] Thus, for any input state $\ket{\phi}$,

    \[\| E\ket{\phi}\| \leq 2q\left(f(\phi) + b(\phi)\right) \leq 2q\sqrt{2}\sqrt{f(\phi)^2 + b(\phi)^2} \leq 4q\| L\ket{\phi}\|.\]

    This establishes that \[
        \|E(I-P)\|
        \leq
        4q\|L(I-P)\|
        \leq
        \frac{q}{2r},\]
    where the second inequality follows by definition of $P$ using threshold $\gamma = 1/(8r)$. Therefore, \[
        \|E-PEP\| = \|(I-P)E + P E (I-P)\|  \leq \|(I-P)E\| + \|E (I-P)\|  \leq \frac{q}{r},\] using the fact that $E$ is Hermitian.

    Finally, since $\ket\psi$ is supported on the image of $\Pi_\inp$,
    \begin{align*}
        &\left|
            \Pr[A^H(\ket\psi)=1]
            -
            \Pr[A^G(\ket\psi)=1]
        \right|\\
        &\qquad=
        \left|\bra\psi E\ket\psi\right|\\
        &\qquad\leq
        \left|\bra\psi PEP\ket\psi\right|
        +
        \|E-PEP\|\\
        &\qquad\leq
        \bra\psi P\ket\psi
        +
        \frac{q}{r}\\
        &\qquad=
        \|P\ket{\psi}\|^2
        +
        \frac{q}{r},
    \end{align*}
    where the second inequality uses the fact that, since $E$ is the difference of two projectors, $-I \preceq E \preceq I$.



    

    \end{proof}

    \paragraph{Amplifying and completing the proof.} Apply \zcref[S]{lemma:amplification} with $\Pi_\good, U = U_A, \Pi_\inp, \gamma = 1/(8r)$, and $\delta = 2^{-\lambda}$. This yields an amplified unitary $\Amp$ such that \[\|\Pi_\good\Amp\ket{\psi}\|^2 \geq (1-\mathsf{negl}(\lambda))\|P\ket{\psi}\|^2.\] 
    
    Define the search adversary $B$ to run $\Amp$ on the expanded register $\Zr$ and $\Rr_\clock$ and then measure the query address to obtain a response $x$. $B$ can be implemented using $O(\lambda r)$ calls to $U_A,U_A^\dagger, \Pi_\good$, and $\Pi_\inp$, and thus with the same number of calls to $A$ and $\Pi_S$. 
    
    Hence, taking the expectation over $(H,G,S,\rho) \gets D$, and applying \zcref[S]{claim:cutoff}, we have that 
    \begin{align*}
        d_A &= \Big|\E_{(H,G,S,\rho) \gets D}\left[\Pr[A^H(\ket{\psi}) = 1] - \Pr[A^G(\ket{\psi}) = 1]\right]\Big| \\
        &\leq \E_{(H,G,S,\rho) \gets D}\Big| \Pr[A^H(\ket{\psi}) = 1] - \Pr[A^G(\ket{\psi}) = 1]\Big| \\ &\leq \E_{(H,G,S,\rho) \gets D}\|P\ket{\psi}\|^2 + \frac{q}{r} \\
        &\leq \frac{p_B}{(1-\mathsf{negl}(\lambda))} + \frac{q}{r} \\ &\leq \epsilon(\lambda) + \frac{q}{r} + \mathsf{negl}(\lambda).
    \end{align*}

    Finally, assume for contradiction that there exists a polynomial $s(\lambda)$ such that on infinitely many security parameters, $d_A > \epsilon(\lambda) + \frac{1}{s(\lambda)}$. Then setting $r = 2q s(\lambda) = \mathsf{poly}(\lambda)$ we obtain a contradiction, completing the proof.

\end{proof}

Now we use \zcref[S]{thm:search-to-decision} to analyze the following scenario. Let $(k,\rho) \gets D$ be an arbitrary sampler that outputs $k \in \cX$ and a state $\rho$. Let $H : \cX \to \cY$ be a uniformly random function with $|\cY| = \lambda^{\omega(1)}$, sampled independently of $D$. We have the following corollary.

\begin{corollary}\label{cor:search-to-decision}
    Let $\epsilon: \mathbb{N} \to [0,1]$ and suppose that for every QPT $B$, \[\Pr_{(k,\rho) \gets D,H}\left[B^H(\rho,H(k)) = k\right] \leq \epsilon(\lambda) + \mathsf{negl}(\lambda).\] Then for every QPT $A$, \[\bigg| \Pr_{(k,\rho) \gets D, H}\left[A^H(\rho,H(k)) = 1\right] - \Pr_{(k,\rho) \gets D, H, u \gets \cY}\left[A^H(\rho,u) = 1\right]\bigg| \leq \epsilon(\lambda) + \mathsf{negl}(\lambda).\] 
\end{corollary}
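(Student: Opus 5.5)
The plan is to instantiate \zcref[S]{thm:search-to-decision} with a sampler $D'$ built from $D$ and $H$. Sample $(k,\rho) \gets D$, a uniform $H : \cX \to \cY$, and an independent $u \gets \cY$. Let $G$ agree with $H$ except $G(k) = u$, so the two functions differ only on $S = \{k\}$. The corollary's real world should be ``$H$ with $H(k)$ given,'' and its ideal world ``$H$ with an independent $u$ given.'' The subtlety is that in the ideal world the oracle is still $H$, not $G$.

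To handle this, I will reparametrize. In the real world, $(\text{oracle},\text{value}) = (H, H(k))$. Replacing $H$ by $G$ yields $(G, G(k)) = (G, u)$, where $G$ is again a uniformly random function and $u = G(k)$. So I set $\rho' = (\rho, u)$, with the classical value $u$ appended. Then:
\begin{itemize}
\item with oracle $G$, the pair $(G, u)$ has the real distribution, since $u = G(k)$;
\item with oracle $H$, the value $u$ is independent of $H$, which is exactly the ideal distribution.
\end{itemize}
Hence for every QPT $A$, the corollary's distinguishing advantage equals $d_A$ for $D' = (H, G, \{k\}, (\rho,u))$. Here I use that $H$ and $G$ are both marginally uniform and independent of $(k,\rho)$. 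Also, membership in $S = \{k\}$ is checked by the oracle $M_S$, so efficient checkability is not an issue.

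It remains to verify the search hypothesis of \zcref[S]{thm:search-to-decision}: every QPT $B^{H,G,M_S}(\rho,u)$ outputs $k$ with probability at most $\epsilon + \mathsf{negl}$. I will reduce this to the corollary's search assumption. Given $(\rho, y = H^*(k))$ and oracle $H^*$, the reduction $B'$ sets $G := H^*$ and $u := y$. It then needs to simulate the oracles $H$ and $M_S$, which is the hard part.
\begin{itemize}
\item \textbf{Simulating $M_S$.} $M_S$ is the point function at the unknown $k$. The oracle $H$ differs from $H^*$ exactly at $k$, where it takes a fresh random value.
\item \textbf{Replacing $H$ and $M_S$ by approximations.} I would replace $H$ with $H^*$ and $M_S$ with the all-zero function. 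Standard one-way to hiding (O2H) bounds the resulting change in success probability in terms of the probability of finding $k$ by measuring a random query.
\item \textbf{Why a naive bound fails.} That O2H bound is multiplicative in $q$ and would spoil tightness.
\end{itemize}

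The cleaner approach is to argue directly. In the modified experiment where $B$ gets $(H^*, H^*, 0)$ and $(\rho, H^*(k))$, the success probability is at most $\epsilon + \mathsf{negl}$ by assumption. The difference between the two experiments is then bounded as follows. Any deviation requires querying $k$. I would use the oracle's ability to notice a query to $k$ ($M_S$ itself, or the differing point) to construct a $k$-finder: run $B$ until a uniformly random query, then measure. The resulting loss is a poly factor times the probability of hitting $k$, and the hope is that this costs only an additive term.

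I do not think the additive route works cleanly. I would therefore instead absorb the issue using the advantage-tightness of the theorem itself. I would apply \zcref[S]{thm:search-to-decision} iteratively, or note that the search success with extra oracles $H, M_S$ is bounded by the maximum of two quantities:
\begin{itemize}
\item $B$'s success without them;
\item the probability of hitting $k$ while querying.
\end{itemize}
The second is itself a search event achievable by a QPT $B'$ that measures a random query. If $B$ succeeds with probability $\epsilon + 1/\mathsf{poly}$, then the hybrid argument shows that either the oracle-free version succeeds with probability $\epsilon + 1/\mathsf{poly}'$, or some query-measurement finder does. I expect the main obstacle to be quantifying this without a $q$-factor loss. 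If the loss is only polynomial, I would accept a non-tight intermediate bound and rely on taking $r$ large, as in the proof of the theorem, to push the discrepancy into the $\mathsf{negl}$ term.
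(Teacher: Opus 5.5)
Your reparametrization is correct: with $D' = (H, G, \{k\}, (\rho,u))$, where $G$ is $H$ reprogrammed at $k$ to $u$, the corollary's advantage is exactly $d_A$. The genuine gap is the search hypothesis of \zcref[S]{thm:search-to-decision}, which you never establish.

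With $S=\{k\}$, the theorem's searcher gets $M_{\{k\}}$ and an oracle reprogrammed at the unknown point $k$. Neither can be computed from $(H^*, H^*(k))$, so the corollary's assumption does not apply. Your remark that checkability ``is not an issue'' just moves the issue into this reduction. None of your workarounds closes it:
\begin{itemize}
\item Replacing $M_S$ by the all-zero function and $H$ by $H^*$ changes the oracles precisely at $k$. Nothing keeps the probability of querying $k$ small; a searcher that checks its candidate with $M_S$ before outputting hits $k$ about as often as it succeeds. So O2H gives no useful bound at all, whatever the $q$ factor.
\item The ``maximum of two quantities'' argument fails for the same reason. A random-query-measurement finder achieves only some $1/\mathsf{poly}$ success, which contradicts nothing when $\epsilon$ is non-negligible (as in the OEKE application, $\epsilon = 2^{-\gamma}$).
\item The parameter $r$ lives inside the theorem's proof and controls the $q/r$ term of \zcref[S]{claim:cutoff}. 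It cannot absorb a loss in the hypothesis you feed into the theorem.
\end{itemize}

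The missing idea is to enlarge the differing set so that membership becomes checkable from exactly what the corollary's searcher has. In your orientation, take $S' = G^{-1}(u)$, and let $H'$ equal $G$ off $S'$ and a fresh $a \gets \cY$ on all of $S'$. The paper does the same with $S = H^{-1}(H(k))$.

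Now $H'$ and $M_{S'}$ are computable from oracle access to $G$ and the value $u = G(k)$. So any searcher $B^{G,H',M_{S'}}(\rho,u)$ is directly a searcher of the form $B^{G}(\rho,G(k))$. It outputs $k$ with probability at most $\epsilon + \mathsf{negl}$ by assumption. It can output a point of $S'\setminus\{k\}$ only by finding a second preimage of $G(k)$ under a random function, which is negligible since $|\cY| = \lambda^{\omega(1)}$.

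The theorem then gives $|\Pr[A^G(\rho,u)=1]-\Pr[A^{H'}(\rho,u)=1]| \leq \epsilon + \mathsf{negl}$. A final step replaces $H'$ by the single-point reprogramming $\widetilde{H}$ (with $\widetilde{H}(k) = a$ and $\widetilde{H} = G$ elsewhere), whose pair $(\widetilde{H}, u)$ has the ideal distribution. Since $H'$ and $\widetilde{H}$ differ only on $S'\setminus\{k\}$, this step costs only a $q$-lossy O2H bound (or the theorem with $\epsilon = 0$) applied to a negligible event. The same holds for comparing your original $(H, M_{\{k\}})$ directly with $(H', M_{S'})$, so your framing is salvageable this way.

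The principle you were missing: pay the $q$-dependent O2H loss only on events of negligible probability, never on hitting $k$ itself.
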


\begin{proof}

    First, given $k$ and $H$, define 
    \[y \coloneqq H(k), \quad S \coloneqq H^{-1}(y), \quad G(x) \coloneqq \begin{cases}a \ \ \ \ \ \ \text{if } H(x) = y \\ H(x) \ \ \text{otherwise}\end{cases},\] where $a \gets \cY$. Then since $G$ and $M_S$ are efficiently implementable given $y,a$, and oracle access to $H$, and $H$ is collision-resistant whenever $|\cY| = \lambda^{\omega(1)}$, we have that \[\Pr_{(k,\rho) \gets D, H, G}[B^{H,G,M_S}(\rho,y) \in S \setminus \{k\}] = \mathsf{negl}(\lambda),\] where $G$ is sampled as described above. Combining with the assumption in the corollary statement, we get \[\Pr_{(k,\rho) \gets D,H,G}[B^{H,G,M_S}(\rho,y) \in S] \leq \epsilon(\lambda) + \mathsf{negl}(\lambda).\] So by \zcref[S]{thm:search-to-decision}, we have that \[\bigg| \Pr_{(k,\rho) \gets D, H, G}\left[A^H(\rho,H(k)) = 1\right] - \Pr_{(k,\rho) \gets D, H, G}\left[A^G(\rho,H(k)) = 1\right]\bigg| \leq \epsilon(\lambda) + \mathsf{negl}(\lambda).\]

    Now, define \[\widetilde{G}(x) \coloneqq \begin{cases}a, \ \ \ \ \ \ \text{if }x = k \\ H(x) \ \ \text{otherwise}\end{cases},\] and note that $G$ and $\widetilde{G}$ only differ on $S \setminus \{k\}$. Again due to collision-resistance of $H$ and \zcref[S]{thm:search-to-decision} with $\epsilon(\lambda) = 0$ (though here standard one-way to hiding would suffice), we have that \[\bigg| \Pr_{(k,\rho) \gets D, H, G}\left[A^G(\rho,H(k)) = 1\right] - \Pr_{(k,\rho) \gets D, H, \widetilde{G}}\left[A^{\widetilde{G}}(\rho,H(k)) = 1\right]\bigg| \leq \mathsf{negl}(\lambda).\] Finally, note that the distribution of $k,\rho,H(k),\widetilde{G}$ is \emph{identical} to the distribution of $k,\rho,u,H$ in the experiment in the statement of the corollary. This completes the proof.

\end{proof}

\section{Acknowledgments}

JB would like to thank Saikrishna Badrinarayan, Daniel Masny, and Pratyay Mukherjee for a previous collaboration related to the post-quantum security of the Masny-Rindal OT protocol, leading to insights that guided the direction of this work. JJ would like to thank Lawrence Roy and Naman Kumar for helpful discussions.

\paragraph{AI usage.} The main contributions of this work were developed without AI assistance. AI tools were used to find the central idea behind the proof of \zcref[S]{thm:search-to-decision} (advantage-tight one-way to hiding) and to aid with proofreading, checking of technical details, and polishing.

\bibliographystyle{alpha}
\bibliography{abbrev3,crypto,main}

@string{ieee =                  {IEEE}}

@string{springer =              "Springer"}

@string{acm =                   "{ACM}"}

@article{grinko2025quantum,
  title={Quantum Simulation of Random Unitaries from Clebsch-Gordan Transforms},
  author={Grinko, Dmitry and Yoshida, Satoshi},
  journal={arXiv preprint arXiv:2509.26623},
  year={2025}
}

@article{foxman2026quantum,
  title={Quantum Lazy Sampling and Path Recording for Any Group},
  author={Foxman, Ben and Lombardi, Alex and Ma, Fermi and Nehoran, Barak and Wright, John},
  journal={arXiv preprint arXiv:2606.30281},
  year={2026}
}

@inproceedings{carolan2026compressed,
  title={Compressed permutation oracles},
  author={Carolan, Joseph},
  booktitle={Proceedings of the 58th Annual ACM Symposium on Theory of Computing},
  pages={150--161},
  year={2026}
}

@misc{cryptoeprint:2026/1331,
  author       = {Jiayu Xu},
  title        = {The Most Efficient Protocol for {PAKE}: What Exact Stuff Do You Need to Hash at the End?},
  howpublished = {Cryptology {ePrint} Archive, Paper 2026/1331},
  year         = {2026},
  url          = {https://eprint.iacr.org/2026/1331}
}

@inproceedings{masnyrindal,
author = {Mansy, Daniel and Rindal, Peter},
title = {Endemic Oblivious Transfer},
year = {2019},
isbn = {9781450367479},
publisher = {Association for Computing Machinery},
address = {New York, NY, USA},
url = {https://doi.org/10.1145/3319535.3354210},
doi = {10.1145/3319535.3354210},
booktitle = {Proceedings of the 2019 ACM SIGSAC Conference on Computer and Communications Security},
pages = {309–326},
numpages = {18},
location = {London, United Kingdom},
series = {CCS '19}
}

@inproceedings{dall2023necessity,
  title={On the Necessity of Collapsing for Post-Quantum and Quantum Commitments},
  author={Dall’Agnol, Marcel and Spooner, Nicholas},
  booktitle={18th Conference on the Theory of Quantum Computation, Communication and Cryptography},
  year={2023}
}

@article{hovelmanns2025cake,
  title={CAKE requires programming-on the provable post-quantum security of (O) CAKE},
  author={H{\"o}velmanns, Kathrin and H{\"u}lsing, Andreas and Kudinov, Mikhail and Ritsch, Silvia},
  journal={Cryptology ePrint Archive},
  year={2025}
}

@article{arriaga2025noic,
  title={NoIC: PAKE from KEM without ideal ciphers},
  author={Arriaga, Afonso and Barbosa, Manuel and Jarecki, Stanislaw},
  journal={Cryptology ePrint Archive},
  year={2025}
}

@article{arriaga2026tempo,
  author       = {Afonso Arriaga and Manuel Barbosa and Stanislaw Jarecki},
  title        = {Tempo: An ML-KEM to PAKE Compiler Resilient to Timing Attacks},
  journal      = {IACR Transactions on Cryptographic Hardware and Embedded Systems},
  year         = {2026},
}

@article{vos2025hybrid,
  title={A hybrid asymmetric password-authenticated key exchange in the random oracle model},
  author={Vos, Jelle and Jarecki, Stanislaw and Wood, Christopher A and Yun, Cathie and Myers, Steve and Sierra, Yannick},
  journal={Cryptology ePrint Archive},
  year={2025}
}

@misc{CFRG_PAKE_Selection,
  author       = {{Crypto Forum Research Group}},
  title        = {{CFRG PAKE selection}},
  year         = {2019},
  howpublished = {\url{https://github.com/cfrg/pake-selection}},
  note         = {Accessed: 2026-09-30}
}

@techreport{ETSI,
  author      = {{European Telecommunications Standards Institute}},
  title       = {Impact of Quantum Computing on Cryptographic Security Proofs},
  institution = {ETSI},
  type        = {ETSI Technical Report},
  number      = {TR 103 965 V1.1.1},
  year        = {2024},
  month       = dec,
  url         = {https://www.etsi.org/deliver/etsi_tr/103900_103999/103965/01.01.01_60/tr_103965v010101p.pdf}
}

@misc{vos_hybrid_draft,
  title        = {Hybrid Post-Quantum Password Authenticated Key Exchange},
  author       = {Vos, Jelle and Jarecki, Stanislaw and Wood, Christopher A.},
  howpublished = {\url{https://datatracker.ietf.org/doc/draft-vos-cfrg-pqpake/}},
  note         = {Internet-Draft, IETF},
  year         = {2026},
  urldate      = {2026-07-10}
}

@article{verschoor2022quantum,
  title={Quantum information in security protocols},
  author={Verschoor, Sebastian Reynaldo},
  year={2022},
  publisher={University of Waterloo}
}

@inproceedings{GSLV,
author = {Gily{\'e}n, Andr{\'a}s and Su, Yuan and Low, Guang Hao and Wiebe, Nathan},
title = {Quantum singular value transformation and beyond: exponential improvements for quantum matrix arithmetics},
year = {2019},
isbn = {9781450367059},
publisher = {Association for Computing Machinery},
address = {New York, NY, USA},
url = {https://doi.org/10.1145/3313276.3316366},
doi = {10.1145/3313276.3316366},
booktitle = {Proceedings of the 51st Annual ACM SIGACT Symposium on Theory of Computing},
pages = {193–204},
numpages = {12},
location = {Phoenix, AZ, USA},
series = {STOC 2019}
}

@inproceedings{LombardiMQW22,
  author    = {Alex Lombardi and Ethan Mook and Willy Quach and Daniel Wichs},
  title     = {Post-Quantum Insecurity from {LWE}},
  booktitle = {Theory of Cryptography -- TCC 2022, Part I},
  editor    = {Eike Kiltz and Vinod Vaikuntanathan},
  series    = {Lecture Notes in Computer Science},
  volume    = {13747},
  pages     = {3--32},
  publisher = {Springer},
  year      = {2022},
  doi       = {10.1007/978-3-031-22318-1_1},
  url       = {https://eprint.iacr.org/2022/869}
}

@inproceedings{BrakerskiCMVV18,
  author    = {Zvika Brakerski and Paul Christiano and Urmila Mahadev
               and Umesh Vazirani and Thomas Vidick},
  title     = {A Cryptographic Test of Quantumness and Certifiable
               Randomness from a Single Quantum Device},
  booktitle = {2018 IEEE 59th Annual Symposium on Foundations
               of Computer Science (FOCS)},
  pages     = {320--331},
  publisher = {IEEE},
  year      = {2018},
  doi       = {10.1109/FOCS.2018.00038},
  url       = {https://arxiv.org/abs/1804.00640}
}

@inproceedings{CojocaruHLYY25,
  author    = {Alexandru Cojocaru and Minki Hhan and Qipeng Liu
               and Takashi Yamakawa and Aaram Yun},
  title     = {Quantum Lifting for Invertible Permutations and Ideal Ciphers},
  booktitle = {Advances in Cryptology -- {CRYPTO} 2025, Part II},
  editor    = {Yael Tauman Kalai and Seny F. Kamara},
  series    = {Lecture Notes in Computer Science},
  volume    = {16001},
  pages     = {481--512},
  publisher = {Springer},
  year      = {2025},
  doi       = {10.1007/978-3-032-01878-6_16},
  url       = {https://eprint.iacr.org/2025/738}
}

@misc{carolan2026compressedpermutationoraclesrevisited,
      title={Compressed Permutation Oracles Revisited}, 
      author={Joseph Carolan and Christian Majenz},
      year={2026},
      eprint={2609.28469},
      archivePrefix={arXiv},
      primaryClass={quant-ph},
      url={https://arxiv.org/abs/2609.28469}, 
}

\appendix
\section{Generalizing the attack}\label{subsec:attack-general}
For the sake of clarity and concreteness, \zcref[S]{thm:IC-attack} shows an attack against a specific instantiation of ideal cipher-based OEKE. However, our attack is quite general, and applies \emph{mutatis mutandis} to all known variants of EKE and OEKE.

\paragraph{EKE.}
\begin{figure}[ht]
\pseudocodeblock{
\underline{P_1, \role = \text{``initiator''}} \< \< \underline{P_2, \role = \text{``respondent''}} \\ 
(\pk,\sk) \gets \KG \< \< \\
\< \sendmessageright*{\phi := \ICEnc_1(\pw,pk)} \< \\
\< \< \pk := \ICDec_1(\pw,\phi) \\
\< \< (c',K') := \Enc(\pk) \\
\< \sendmessageleft*{\phi' = \ICEnc_2(\pw,c')} \< \\
c = \ICDec_2(\pw,\phi') \\
K := \Dec(\sk,c) \< \< \\
\pkey = H(\pw, \pk, \phi,\phi', c,K) \< \< \\
\text{ output }\pkey \< \< \text{output }\pkey' = H(\pw, \pk, \phi, \phi', c',K') \\
}
\caption{A typical instantiation of EKE using a KEM $\KEM$, public key ideal cipher $\ICEnc_1, \ICDec_1$ and ciphertext ideal cipher $\ICEnc_2, \ICDec_2$.}
\label{EKEdiag}
\end{figure}
Encrypted Key Exchange (shown in \zcref[S]{EKEdiag}) uses the same first message as OEKE, but instead of sending an authentication tag in the second message, the KEM ciphertext is also encrypted under the password using an ideal cipher. Since our attack only requires the environment to send a first message, it also works against EKE: the only difference is that $\Env_1$ must now query $\ICDec_2(\pw_i, \phi')$ to obtain the KEM ciphertext for key derivation.

\paragraph{2-Feistel/Tempo.}
Modern variants of OEKE do not use a full ideal cipher: the most popular alternative is a 2-Feistel network (also known as a Programmable Once Public Function) \cite{CCS:McQRosRoy20,EC:JanRoyXu25,arriaga2025noic}.
\\
Let $H_1 : \mathcal{PW} \times \{0,1\}^\ell \to \PK, H_2: \mathcal{PW} \times \PK \to \{0,1\}^\ell$ be random oracles, where $\PK$ is the space of public keys and an abelian group with operation $\odot$, and $\mathcal{PW}$ is the space of passwords ($\ell = \omega(\log \lambda)$; one can take $\ell = \lambda$ for simplicity). We define the 2-Feistel based encryption and decryption algorithms as follows.

\begin{itemize}
    \item $\cE_\TF(\pw,\pk;w) = (s,T)$, where $R = H_1(\pw,w)$, $T = pk \odot R$, $m = H_2(\pw,T)$, $s = m \oplus w.$
    \item $\cD_\TF(\pw,(s,T)) = \pk$, where $ m = H_2(\pw, T)$, $w = s \oplus m$, $\pk = H_1(\pw, w)^{-1} \odot T.$
\end{itemize}
One can easily check that $\cD_\TF(\pw,\cE_\TF(\pw,\pk;w)) = \pk$ ($w$ is recovered by the computation of $\cD_\TF$ as well).

The 2-Feistel is a more efficient alternative to an ideal cipher, since in general one needs eight rounds of a Feistel network to be indifferentiable from an IC \cite{C:DaiSte16}. Conversely, the 2-Feistel is a weaker primitive than an IC:
\begin{enumerate}
    \item Encryption now uses additional randomness so there are many possible encryptions of $pk$ under a given $\pw$;
    \item 2F ciphertexts are partially malleable: given a ciphertext $(s,T)$ that is an encryption of $pk$ under $\pw$, an adversary can produce $(s',T')$ that is an encryption of $pk \odot \Delta$ under $\pw$ for adversarially chosen $\Delta$.
\end{enumerate}
These differences make the security proof more complex but still possible (see  \cite{EC:JanRoyXu25,arriaga2025noic} for details). 
\\
The Tempo protocol \cite{arriaga2026tempo} instantiates the above 2-Feistel OEKE template to use ML-KEM. The only difference is it splits public keys into two pieces $t \in \PK_1,\zeta \in \{0,1\}^{\ell'}$ via an efficient and invertible procedure $\mathsf{Split}$, where $\zeta$ is uniformly random when $\pk$ is honestly generated. The protocol initiator 2-Feistel encrypts $t$ only, sending $\zeta$ in the clear; the respondent decrypts and applies $\mathsf{Split}^{-1}$ to get a public key. Additionally, the security argument relies on a $\mathsf{KeyGen2}$ function that generates $(\sk,\pk)$ from $\zeta$: when $\zeta$ is chosen uniformly $\mathsf{KeyGen2}(\zeta)$ is guaranteed to be distributed identically to an honest key pair.
\begin{theorem}
    Let $\Pi$ be an OEKE protocol instantiated with (1) any KEM with negligible correctness error (\zcref[S]{def:KEM-correctness}), (2) the 2-Feistel-based encryption and decryption algorithms $\cE_\TF,\cD_\TF$ defined above, or the Tempo algorithms, (3) any key derivation function $\cK$, and (4) any tag function $\cT$. Then $\Pi$ does not post-quantum UC-realize $\FPake$.
\end{theorem}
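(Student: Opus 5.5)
We adapt the proof of \zcref[S]{thm:IC-attack} to the 2-Feistel and Tempo encryptions. Only the environment's first-message preparation and the check state $\ket{\psi}$ change. As before, we first assume without loss of generality that $\KEM$ has computationally uniform public keys. Otherwise a malicious respondent could decrypt $\phi$ under candidate passwords and test which candidate public key looks honestly generated, breaking security.

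We then fix distinct $\pw_0,\pw_1$ and define $\Env_0,\Env_1$ as in \zcref[S]{thm:IC-attack}, with the following change to Step 2. The environment prepares
\[\frac{1}{\sqrt{2|\PK|2^\ell}}\sum_{u,\pk,w}\ket{\pw_u}_\Ar\ket{\pk}_\Br\ket{w}_{\Wr}\ket{\pw_u,\pk,w}.\]
In $\Env_1$ the register $\Br$ instead holds coins $r$, and $\pk(r)$ is computed coherently. It then computes $\phi=\cE_\TF(\pw_u,\pk;w)=(s,T)$ coherently, querying $H_1,H_2$ through $\adv$. Next it uncomputes every intermediate value except $\phi$, which is possible because $\cD_\TF$ recomputes $m$ and $w$ from $(\pw_u,\phi)$, and it measures $\phi$. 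The key fact is that $\cE_\TF(\pw,\cdot\,;\cdot)$ is a bijection $\PK\times\{0,1\}^\ell\to\{0,1\}^\ell\times\PK$ with inverse computable from $\pw$. Hence on branch $c=0$ the leftover state is exactly
\[\ket{\psi}=\tfrac{1}{\sqrt2}\sum_u\ket{\pw_u}\ket{\pk_u}\ket{w_u}\ket{\pw_u,\pk_u,w_u},\]
where $(\pk_u,w_u)$ is the inverse of $\phi$ under $\pw_u$. The environment computes $(\pk_u,w_u)$ classically with a few classical oracle queries, so it can apply the projective test $\{\ketbra{\psi},I-\ketbra{\psi}\}$, which accepts with probability $1$ in the real world. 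For Tempo, the environment applies $\mathsf{Split}$ to $\pk$ and encrypts only $t$. Since $\zeta$ is uniform for honest keys, in branch $c=0$ it samples $(t,\zeta)$ uniformly and keeps $\zeta$ in superposition. Because $\zeta$ is sent in the clear, measuring $\phi$ fixes $\zeta$, and the same bijection argument applies to the $t$ component.

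The rest of the proof follows \zcref[S]{thm:IC-attack}. In $\Env_1$ we measure $(\Ar,\Br)$ to get $(\pw_i,r_i)$, output $1$ if $\pw_i\neq\pw_b$, forward $\phi$, and check that $\cK(\Dec(\sk(r_i),c'))$ equals $P_2$'s key. In the real world this succeeds with probability $1-\mathsf{negl}$: $P_2$ decrypts to $\pk(r_i)$ by correctness of $\cD_\TF$, then KEM correctness applies. The ideal-world hybrids are the same. $\Hyb_1$ replaces the key check by $\pwg=\pw_i$, losing $2^{-\lambda}$. $\Hyb_2$ replaces the $\KG$ superposition by uniform $\pk$ using computational public key uniformity, with the challenge register embedded and the coins register unused. $\Hyb_3$ post-selects on $\ket{\psi}$ via Gentle Measurement. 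After that, $i$ is independent of the simulator's state, so $\Pr[\pwg=\pw_i]=1/2$ on the relevant branch, which gives the $7/8$ bound.

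The main obstacle is the uncomputation step. We must ensure that once $\phi$ is measured no garbage register, such as $R=H_1(\pw,w)$ or $m$, remains entangled with $u$; any such register would change $\ket{\psi}$. The ideal-cipher case avoided this because its cipher was deterministic. Here we keep only $(\pw_u,\pk,w)$ as the "input" registers and uncompute $R,T,m$ by rerunning the queries in reverse. Keeping $w$ explicitly in the state is what makes $\ket{\psi}$ classically describable from $\phi$. For Tempo we additionally need to check that uniform $(t,\zeta)$ corresponds to uniform $\pk$ under the uniformity assumption, which follows from invertibility of $\mathsf{Split}$.
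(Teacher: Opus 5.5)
Your proposal is correct and follows the paper's proof. The ideal-cipher environments and hybrids carry over unchanged once one observes that, for a fixed password (and, for Tempo, fixed $\zeta$), the encryption map is a bijection whose inverse recovers the randomness $w$ together with the key, so the post-measurement check state $\ket{\psi}$ is classically computable from $\phi$. The remaining differences are cosmetic. The paper computes $T$ and $s$ in place, so there is no garbage to uncompute; note also that $T$ is part of $\phi$ and should not appear in your list of uncomputed values. For Tempo, the paper phrases $\Env_1$ via $\mathsf{KeyGen2}$ and the $\mathsf{UNI\text{-}PK}^+$ assumption, whereas you use honest key generation, bijectivity of $\mathsf{Split}$, and ordinary public-key uniformity with a private copy of the whole public key.
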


\begin{proof}

Consider the following environment interacting with 2-Feistel OEKE;
\paragraph{Environment $\Env_0$}
\begin{enumerate}
    \item Sample $b \gets \{0,1\}$ and send $(\NewS, \sid, P_2, P_1, \pw_b)$ to $P_2$ for some nonce $\sid$.
    \item Prepare the normalized state
    \[
    \frac{1}{\sqrt{2|\PK|2^\ell}}\sum_{\substack{u \in \{0,1\}, pk \in \PK,\\ w \in \{0,1\}^\ell}} \ket{w}_{\Cr}\ket{\pw_u}_{\Ar}\ket{\pw_u}_{\Xr}\ket{pk}_{\Br}\ket{pk}_{\Yr}.
    \]
    \item Query $H_1$ on registers $\Xr, \Cr$ into a new register $\Zr_1$ (through $\adv$) to compute \[
    \begin{aligned}
    &\frac{1}{\sqrt{2|\PK|2^\ell}}\sum_{\substack{u \in \{0,1\}, pk \in \PK,\\ w \in \{0,1\}^\ell}} \ket{w}_{\Cr}\ket{\pw_u}_{\Ar}\ket{\pw_u}_{\Xr}\ket{pk}_{\Br}\ket{pk}_{\Yr}\\
    &\qquad\ket{pk \odot H_1(\pw_u, w)}_{\Zr_1}.
    \end{aligned}
    \]
    \item Query $H_2$ on registers $\Xr, \Zr_1$ into a new register $\Zr_2$ (through $\adv$) to compute \[
    \begin{aligned}
    &\frac{1}{\sqrt{2|\PK|2^\ell}}\sum_{\substack{u \in \{0,1\}, pk \in \PK,\\ w \in \{0,1\}^\ell}} \ket{w}_{\Cr}\ket{\pw_u}_{\Ar}\ket{\pw_u}_{\Xr}\ket{pk}_{\Br}\ket{pk}_{\Yr}\\
    &\qquad\ket{pk \odot H_1(\pw_u, w)}_{\Zr_1}\ket{w \oplus H_2(\pw_u, \Zr_1)}_{\Zr_2}.
    \end{aligned}
    \]
    \item Measure $\Zr_2, \Zr_1$ to get $(s,T)$ and leftover state $\rho^0_{\Cr \Ar \Br \Xr \Yr}$.
    \item For each $u \in \{0,1\}$, compute \[
    m_u = H_2(\pw_u, T), w_u = s \oplus m_u, \pk_u = H_1(\pw_u, w_u)^{-1} \odot T.
    \]
    \item Let $\ket{\psi} = \frac{1}{\sqrt{2}} \sum_{u \in \{0,1\}} \ket{w_u}_{\Cr}\ket{\pw_u}_{\Ar}\ket{\pw_u}_{\Xr}\ket{\pk_u}_{\Br}\ket{\pk_u}_{\Yr}$ and apply the measurement $M = \{\ket{\psi}\bra{\psi}, I - \ket{\psi}\bra{\psi}\}$ to $\rho^0$.
    \item Output $1$ if $M$ accepts: otherwise output $0$.
\end{enumerate}
The above $\Env_0$ works exactly the same as for ideal cipher OEKE: the key point is that given a candidate password $\pw$, decryption of $(s,T)$ under $\pw$ not only recovers the public key but also the randomness $w$ used to encrypt, so the post-measurement state $\ket{\psi}$ can be computed. Put another way, the computation of the first message is classically reversible given $\pw$. $\Env_1$ is modified in the same way: since the only difference between $\Env_0$ and $\Env_1$ through Step 6 is how public keys are sampled, we can again reduce to public key uniformity.
\\
The case of Tempo is analogous:
\paragraph{Environment $\Env_0$}
\begin{enumerate}
    \item Sample $b \gets \{0,1\}$ and send $(\NewS, \sid, P_2, P_1, \pw_b)$ to $P_2$ for some nonce $\sid$.
    \item Prepare the normalized state  \[ \frac{1}{\sqrt{2|\PK_1|2^{\ell+\ell'}}}\sum_{\substack{u \in \{0,1\}, t \in \PK_1, \\ w \in \{0,1\}^\ell, \zeta \in \{0,1\}^{\ell'}}} \ket{w}_{\Cr}\ket{\zeta}_{\Wr}\ket{\pw_u}_{\Ar}\ket{\pw_u}_{\Xr}\ket{t}_{\Br}\ket{t}_{\Yr}. \]
    \item Query $H_1$ on registers $\Xr, \Wr, \Cr$ into a new register $\Zr_1$ (through $\adv$) to compute \[
    \begin{aligned}
    &\frac{1}{\sqrt{2|\PK_1|2^{\ell+\ell'}}}\sum_{\substack{u \in \{0,1\}, t \in \PK_1, \\ w \in \{0,1\}^\ell, \zeta \in \{0,1\}^{\ell'}}} \ket{w}_{\Cr}\ket{\zeta}_{\Wr}\ket{\pw_u}_{\Ar}\ket{\pw_u}_{\Xr}\ket{t}_{\Br}\ket{t}_{\Yr}\\
    &\qquad\ket{H_1(\pw_u, \zeta, w) \odot t}_{\Zr_1}.
    \end{aligned}
    \]
    \item Query $H_2$ on registers $\Xr, \Wr, \Zr_1$ into a new register $\Zr_2$ (through $\adv$) to compute \[
    \begin{aligned}
    &\frac{1}{\sqrt{2|\PK_1|2^{\ell+\ell'}}}\sum_{\substack{u \in \{0,1\}, t \in \PK_1, \\ w \in \{0,1\}^\ell, \zeta \in \{0,1\}^{\ell'}}} \ket{w}_{\Cr}\ket{\zeta}_{\Wr}\ket{\pw_u}_{\Ar}\ket{\pw_u}_{\Xr}\ket{t}_{\Br}\ket{t}_{\Yr}\\
    &\qquad\ket{H_1(\pw_u, \zeta, w) \odot t}_{\Zr_1}\ket{w \oplus H_2(\pw_u, \zeta, \Zr_1)}_{\Zr_2}.
    \end{aligned}
    \]
    \item Measure $\Zr_2, \Zr_1, \Wr$ to get $s,T, \zeta$ and leftover state $\rho^0_{\Cr \Ar \Br \Xr \Yr}$.
    \item For each $u \in \{0,1\}$, compute \[
    m_u = H_2(\pw_u, \zeta, T), w_u = s \oplus m_u, t_u = H_1(\pw_u, \zeta, w_u)^{-1} \odot T.
    \]
    \item Let $\ket{\psi} = \frac{1}{\sqrt{2}} \sum_{u \in \{0,1\}} \ket{w_u}_{\Cr}\ket{\pw_u}_{\Ar}\ket{\pw_u}_{\Xr}\ket{t_u}_{\Br}\ket{t_u}_{\Yr}$ and apply the measurement $M = \{\ket{\psi}\bra{\psi}, I - \ket{\psi}\bra{\psi}\}$ to $\rho^0$.
    \item Output $1$ if $M$ accepts: otherwise output $0$.
\end{enumerate}
In $\Env_1$, we still prepare a superposition over all $\zeta$, but instead compute $t$ via $t = \mathsf{Split}(\mathsf{KeyGen2}(1^\lambda, \zeta)[1])[0]$. Thus we rely on the assumption that when $\zeta$ is chosen uniformly, an efficient QPT adversary cannot distinguish $(t \gets \PK_1, \zeta \in \{0,1\}^{\ell'}$ and $\zeta \in \{0,1\}^{\ell'}$ with $t$ as above. This is implied by the $\mathsf{UNI-PK^+}$ assumption in \cite{arriaga2026tempo}, and is necessary for the post-quantum security of Tempo by the same logic as the necessity of public key uniformity for OEKE: otherwise an adversary can mount an offline attack by decrypting the first message under many candidate passwords.

\end{proof}

\section{Modified Measure-and-Reprogram}\label{section:newmandrp}
We sketch why $V$ can be given access to $H$ in \zcref[S]{thm:mandp}. First we introduce some notation from \cite{C:DonFehMaj20}.  Let $\ket{\phi_i}$ be defined as $\cA$’s state right before making its $i+1$st query—with the special case $\ket{\phi_q}$ denoting the final output state—
to which we add the superscript $\mathcal{O}$ when all previous queries have been answered using $\mathcal{O}$.
Next, we use $\cA^\mathcal{O}_{i \to j}$
to denote the unitary that brings $\cA$ from $\ket{\phi_i}$ to $\ket{\phi_j}$, using $\mathcal{O}$ from the $i$-th query on.
Finally, we use the shorthand $X \coloneqq \ket{x}\bra{x}$. The measure-and-reprogram proof starts with the following lemma:
\begin{lemma}
    Let $\cA$ be a $q$-query oracle quantum algorithm. Then, for any function $H: \mathcal{X} \to \mathcal{Y}$, any $x \in \mathcal{X}, \Theta \in \mathcal{Y}$, and any projector $\Pi_{x,\Theta}$, it holds that
\[
\mathbb{E}_{i,b}
\left[
\left\|
\left(
X \otimes \Pi_{x,\Theta}
\right)
\left(
\mathcal{A}^{H(x \ast \Theta)}_{i+b \to q}
\right)
\left(
\mathcal{A}^{H}_{i \to i+b}
\right)
X
|\phi_i^H\rangle
\right\|^2
\right]
\geq
\frac{
\left\|
\left(
X \otimes \Pi_{x,\Theta}
\right)
|\phi_q^{H(x  * \Theta)}\rangle
\right\|^2
}{
(2q+1)^2
}.
\]
\end{lemma}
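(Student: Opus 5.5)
The plan is to reproduce the telescoping (hybrid) argument of \cite{C:DonFehMaj20}. The final state under the reprogrammed oracle will be written as a sum of exactly $2q+1$ vectors, one for each pair $(i,b)$ in the support of the expectation (with $b=1$ only allowed for $i<q$). Cauchy--Schwarz then finishes the proof. Throughout, write $H' \coloneqq H(x \ast \Theta)$. The basic observation is that $H'$ and $H$ agree everywhere except at $x$. Hence the query unitaries satisfy $O_{H'}(I-X) = O_H(I-X)$ on the query input register, so $O_{H'} - O_H = (O_{H'} - O_H)X$.

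First I will set up the telescoping identity. For $0 \le i < q$, the two hybrids $\mathcal{A}^{H'}_{i \to q}|\phi_i^H\rangle$ and $\mathcal{A}^{H'}_{i+1 \to q}|\phi_{i+1}^H\rangle$ differ only in whether the $(i+1)$-st query is answered by $H'$ or by $H$. Their difference is therefore
\[
\mathcal{A}^{H'}_{i \to q}X|\phi_i^H\rangle - \mathcal{A}^{H'}_{i+1 \to q}\mathcal{A}^H_{i \to i+1}X|\phi_i^H\rangle ,
\]
using the observation above to insert $X$ in front of the query. Summing over $i$ from $0$ to $q-1$ gives
\[
|\phi_q^{H'}\rangle = |\phi_q^H\rangle + \sum_{i=0}^{q-1}\left(\mathcal{A}^{H'}_{i\to q}X|\phi_i^H\rangle - \mathcal{A}^{H'}_{i+1\to q}\mathcal{A}^H_{i\to i+1}X|\phi_i^H\rangle\right).
\]

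Next I will apply $X \otimes \Pi_{x,\Theta}$ to both sides. Here the left factor $X$ acts on the output register carrying the claimed $x$. The term $(X\otimes\Pi_{x,\Theta})|\phi_q^H\rangle$ equals $(X\otimes\Pi_{x,\Theta})X|\phi_q^H\rangle$, reading $X|\phi_q\rangle$ as measuring the output; this is the $(i,b)=(q,0)$ summand. Each remaining summand is one of the $(i,0)$ or $(i,1)$ terms of the form $\left(X\otimes\Pi_{x,\Theta}\right)\mathcal{A}^{H'}_{i+b\to q}\mathcal{A}^H_{i\to i+b}X|\phi_i^H\rangle$, up to a sign. So $(X\otimes\Pi_{x,\Theta})|\phi_q^{H'}\rangle$ is a signed sum of the $2q+1$ vectors $v_{i,b}$ appearing in the statement.

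Then I apply the triangle inequality and Cauchy--Schwarz:
\[
\|\textstyle\sum_{i,b} \pm v_{i,b}\|^2 \le (2q+1)\sum_{i,b}\|v_{i,b}\|^2 = (2q+1)^2\,\E_{i,b}\|v_{i,b}\|^2 ,
\]
with $(i,b)$ uniform over the $2q+1$ admissible pairs. This is exactly the claimed inequality.

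Finally I will note why the modification in \zcref[S]{thm:mandp} (letting $V$ query $H$) goes through. The lemma holds for an arbitrary projector $\Pi_{x,\Theta}$, so the oracle-dependent predicate $V^{H'}$ can be folded into $\Pi_{x,\Theta}$ (namely $C^\dagger \Pi_{\mathsf{win}} C$ with $C$ using $H'$). The rest of the \cite{C:DonFehMaj20} argument (iterating over $t$ points and summing over $\mathbf{x}^*$) is then unchanged.

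I expect the main obstacle to be bookkeeping rather than ideas. The difficulty is making the indexing convention consistent: which unitary $\mathcal{A}_{i\to i+b}$ contains the $(i+1)$-st query, and whether $X$ acts on the query register or on the output register. This has to be set up so that every telescoping term matches a summand of the expectation exactly, giving $2q+1$ terms and not $2q+2$.
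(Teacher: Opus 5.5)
Your proof is correct and is the standard argument of \cite{C:DonFehMaj20}, going back to \cite{C:DFMS19}: you telescope $|\phi_q^{H(x \ast \Theta)}\rangle$ into $2q+1$ signed terms using $O_{H(x\ast\Theta)}-O_H=(O_{H(x\ast\Theta)}-O_H)X$, then apply Cauchy--Schwarz. The paper does not reprove the lemma but cites it, and your closing remark (the bound holds for every fixed projector, so it can be built from the reprogrammed oracle) is exactly the observation the paper uses to extend \zcref[S]{thm:mandp} to oracle-aided $V$.
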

Here the expectation is over a random pair $(i,b) \in (\{0, \dots, q-1\} \times \{0,1\}) \cup (q,0)$. The projector $\Pi_{x,\Theta}$ will function as $V$ later on. Notice that the expectation is only over the randomness of the measure-and-reprogram simulator, and it holds for any \emph{fixed} $x,\Theta,\Pi_{x,\Theta},H$. Therefore the same lemma holds for a projector $\Pi_{x,\Theta}^{H(x \ast \Theta)}$ that gets access to the reprogrammed random oracle.
\\
\\
The next lemma is proven by induction using the previous lemma as a base case:
\begin{lemma}\label{mpint}
Let $\cA$ be a $q$-query oracle quantum algorithm. Then, for any function $H: \mathcal{X} \to \mathcal{Y}$, any $\mathbf{x} \in \mathcal{X}^n, \mathbf{\Theta} \in \mathcal{Y}^n$, and any projector $\Pi_{\mathbf{x},\mathbf{\Theta}}$, it holds that
    \[
\frac{
    \left\|
        \left(
            |\mathbf{x}\rangle\langle\mathbf{x}|
            \otimes \Pi_{\mathbf{x},\boldsymbol{\Theta}}
        \right)
        \mathcal{A}^{H * \boldsymbol{\Theta}_{\mathbf{x}}}
        |\phi_0\rangle
    \right\|^2
}{
    (2q+1)^{2n}
}
\leq
\mathbb{E}_{\mathbf{r}}
\left[
    \left\|
        \left(
            |\mathbf{x}\rangle\langle\mathbf{x}|_{A}
            \otimes \Pi_{\mathbf{x},\boldsymbol{\Theta}}
        \right)
        S_{\mathbf{r}}^{H}(\mathcal{A})
        |\phi_0\rangle
    \right\|^2
\right].
\]
\end{lemma}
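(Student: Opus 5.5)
We prove \zcref[S]{mpint} by induction on $n$, following the structure of the corresponding lemma in \cite{C:DonFehMaj20}. The one new ingredient is that the projector $\Pi_{\mathbf{x},\mathbf{\Theta}}$ may itself query the (reprogrammed) oracle. Recall that $S^H_{\mathbf{r}}(\cA)$ is the simulator's run of $\cA$ with randomness $\mathbf{r} = ((i_1,b_1),\dots,(i_n,b_n))$. At the query chosen by $(i_k,b_k)$ it measures the query register, and from that point on it answers $x_k$ with $\Theta_k$; for $b_k = 1$ the reprogramming starts one query later. We keep $\mathbf{x},\mathbf{\Theta},H$ and the projector fixed throughout. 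The quantity being bounded is a squared norm of a fixed vector, and the only averaging is over $\mathbf{r}$. So oracle access inside $\Pi_{\mathbf{x},\mathbf{\Theta}}$ changes nothing, provided we treat $\Pi_{\mathbf{x},\mathbf{\Theta}}^{H*\mathbf{\Theta}_{\mathbf{x}}}$ as a fixed projector depending on $(\mathbf{x},\mathbf{\Theta},H)$.

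\emph{Base case $n=1$.} This is exactly the preceding lemma. Take $\Pi_{x,\Theta}$ to be the projector that implements $V$ with oracle $H(x*\Theta)$, and note that $(|x\rangle\langle x|\otimes\Pi_{x,\Theta})\cA^{H*\Theta_x}|\phi_0\rangle$ is the right-hand side there.

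\emph{Inductive step.} Write $\mathbf{x} = (\mathbf{x}',x_n)$ and $\mathbf{\Theta} = (\mathbf{\Theta}',\Theta_n)$, and put $H' := H*\mathbf{\Theta}'_{\mathbf{x}'}$. First apply the $n=1$ lemma to the oracle $H'$, the point $x_n$, the value $\Theta_n$, and the projector $|\mathbf{x}'\rangle\langle\mathbf{x}'|\otimes\Pi_{\mathbf{x},\mathbf{\Theta}}$. This is legitimate because the lemma holds for an arbitrary function in place of $H$ and an arbitrary projector. It loses one factor $(2q+1)^2$ and yields an expectation over $(i_n,b_n)$ of terms of the form
\[
\big\|(X_n\otimes|\mathbf{x}'\rangle\langle\mathbf{x}'|\otimes\Pi)\,\cA^{H*\mathbf{\Theta}_{\mathbf{x}}}_{i_n+b_n\to q}\,\cA^{H'}_{i_n\to i_n+b_n}\,X_n\,|\phi^{H'}_{i_n}\rangle\big\|^2 .
\]
For each fixed $(i_n,b_n)$, we regard the remaining computation as a new $q$-query algorithm run with oracle $H'$. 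Its ``final projector'' absorbs the measurement $X_n$, the post-reprogramming unitaries, and $\Pi$. We then apply the induction hypothesis for $n-1$ to this algorithm and projector, losing $(2q+1)^{2(n-1)}$. Finally we average over $(i_n,b_n)$ and combine the expectations into the expectation over $\mathbf{r}$ that defines $S^H_{\mathbf{r}}(\cA)$.

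\emph{Main obstacle.} The delicate point is the bookkeeping in the inductive step. When we apply the hypothesis, the absorbed final operator must still be a projector, of the form unitary$^\dagger\cdot$projector$\cdot$unitary, as in the paper's definition of $V$ via $C^\dagger\Pi_{\mathsf{win}}C$. We must also check that the reprogrammed oracle it accesses equals the one the simulator actually presents at those times. In addition, the ordering of the reprogramming points, and hence the permutation $\pi$, has to match the multi-stage syntax of \zcref[S]{thm:mandp}. Once \zcref[S]{mpint} holds, \zcref[S]{thm:mandp} follows as in \cite{C:DonFehMaj20}. We average over $H$ and $\mathbf{\Theta}$, use that $H*\mathbf{\Theta}_{\mathbf{x}}$ is distributed as a uniform $H$, and sum over the distinct-entry $\mathbf{x}^*$.
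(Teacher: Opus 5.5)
Your overall route is the paper's: induction on $n$ from the one-point lemma, with $H,\mathbf{x},\mathbf{\Theta}$ and the projector held fixed. However, the key step you describe does not go through, and the items you list as the ``main obstacle'' are exactly where it breaks. Suppose the ``post-reprogramming unitaries'' $U=\cA^{H*\mathbf{\Theta}_{\mathbf{x}}}_{i_n+b_n\to q}\cA^{H'}_{i_n\to i_n+b_n}$, which still make oracle queries, are absorbed into the final operator. Then the algorithm left for the induction hypothesis is only the part producing $|\phi^{H'}_{i_n}\rangle$, and the hypothesis does not apply to it. The operator $X_nU^\dagger(|\mathbf{x}\rangle\langle\mathbf{x}|\otimes\Pi_{\mathbf{x},\mathbf{\Theta}})UX_n$ is in general not a projector. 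More to the point, it is not of the form $|\mathbf{x}'\rangle\langle\mathbf{x}'|\otimes\Pi'$ with $|\mathbf{x}'\rangle\langle\mathbf{x}'|$ acting on that algorithm's output register, which the statement needs (e.g.\ for the term in which a point is read off the final output); nothing has output $\mathbf{x}'$ at step $i_n$.

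Moreover, the absorbed tail queries $H*\mathbf{\Theta}_{\mathbf{x}}$, which is already reprogrammed at every point of $\mathbf{x}'$. So at best you would obtain only branches in which each $x_j$, $j<n$, is measured before step $i_n$. Those are not the branches of $S^H_{\mathbf{r}}(\cA)$ for general $\mathbf{r}$, since a simulator cannot answer with an oracle reprogrammed at points it has not yet measured. A fixed order with $x_n$ last also cannot work in general, e.g.\ when $\cA$ derives $x_1$ from $H(x_n)$. The paper's remark that $V$ may query $H(\mathbf{x}*\mathbf{\Theta})$ does not license absorbing oracle calls mid-run: it is valid only after $\cA$ has halted and all $n$ points have been measured.

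What works is to keep everything after step $i_n$ inside the new algorithm. For fixed $(i_n,b_n)$, let $\tilde{\cA}$ run all of $\cA$, making all $q$ queries to $H'=H*\mathbf{\Theta}'_{\mathbf{x}'}$, with two changes. First, it coherently copies the register on which $X_n$ acts at step $i_n$ into a fresh register $\mathsf{E}$. Second, from step $i_n+b_n$ on it answers with $H'$ patched by the fixed rule $x_n\mapsto\Theta_n$. This still costs one $H'$-query each: when the input is $x_n$, swap the output register for a uniform superposition (which $H'$ leaves invariant), and add $\Theta_n$ afterwards.

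The deferred copy replaces the mid-circuit projection. The $(i_n,b_n)$ term equals $\|(|\mathbf{x}'\rangle\langle\mathbf{x}'|\otimes\Pi')\tilde{\cA}^{H*\mathbf{\Theta}'_{\mathbf{x}'}}|\phi_0\rangle|0\rangle_{\mathsf{E}}\|^2$ with the genuine projector $\Pi'=|x_n\rangle\langle x_n|\otimes\Pi_{\mathbf{x},\mathbf{\Theta}}\otimes|x_n\rangle\langle x_n|_{\mathsf{E}}$. Now the induction hypothesis applies verbatim. $S^H_{\mathbf{r}'}(\tilde{\cA})$ followed by the $\mathsf{E}$-check is exactly $S^H_{(\mathbf{r}',(i_n,b_n))}(\cA)$ on the branch where step $i_n$ yields $x_n$, with the points of $\mathbf{x}'$ reprogrammed before or after $i_n$ as $\mathbf{r}'$ dictates. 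Averaging over $(i_n,b_n)$ gives $\mathbb{E}_{\mathbf{r}}$ with total loss $(2q+1)^{2}(2q+1)^{2(n-1)}$. The permutation $\pi$ is just the time order of the chosen positions, so no separate ordering argument is needed.
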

The notation required to define $S_{\mathbf{r}}^{H}(\mathcal{A})$ formally is quite complex so we omit it, only noting that $S_{\mathbf{r}}^{H}(\mathcal{A})$ is the simulator that chooses $n$ queries $\mathbf{r} = (r_1, \dots, r_n)$ of $\cA$ to $H$ to measure and reprogram. The key observation is that the lemma once again holds for any fixed $H,\mathbf{x},\mathbf{\Theta},\Pi_{\mathbf{x},\mathbf{\Theta}}$ so it also holds for a projector $\Pi_{\mathbf{x},\mathbf{\Theta}}^{H(\mathbf{x} \ast \mathbf{\Theta})}$.

Finally, we come to the proof of \cite[Theorem~6]{C:DonFehMaj20}, the statement of which is the same as \zcref[S]{thm:mandp} but $V$ is not given oracle access. The proof proceeds by taking the expectation of \zcref[S]{mpint} over random $H, \mathbf{\Theta}$
\[
\begin{aligned}
&\mathbb{E}_{H, \mathbf{\Theta}}\frac{
    \left\|
        \left(
            |\mathbf{x}\rangle\langle\mathbf{x}|
            \otimes \Pi_{\mathbf{x},\boldsymbol{\Theta}}
        \right)
        \mathcal{A}^{H * \boldsymbol{\Theta}_{\mathbf{x}}}
        |\phi_0\rangle
    \right\|^2
}{
    (2q+1)^{2n}
}
\\ &\quad\leq
\mathbb{E}_{\mathbf{r},H, \mathbf{\Theta}}
\left[
    \left\|
        \left(
            |\mathbf{x}\rangle\langle\mathbf{x}|_{A}
            \otimes \Pi_{\mathbf{x},\boldsymbol{\Theta}}
        \right)
        S_{\mathbf{r}}^{H}(\mathcal{A})
        |\phi_0\rangle
    \right\|^2
\right]
\end{aligned}
\]
and noting that after a change of variables the left hand side equals 
\[
\mathbb{E}_{H}\frac{
    \left\|
        \left(
            |\mathbf{x}\rangle\langle\mathbf{x}|
            \otimes \Pi_{\mathbf{x},H(\mathbf{x})}
        \right)
        \mathcal{A}^{H}
        |\phi_0\rangle
    \right\|^2
}{
    (2q+1)^{2n}
}.
\]
If we repeat the argument with our new projector $\Pi_{\mathbf{x},\mathbf{\Theta}}^{H(\mathbf{x} \ast \mathbf{\Theta})}$ the same change of variables occurs: the left hand side becomes
\[
\mathbb{E}_{H}\frac{
    \left\|
        \left(
            |\mathbf{x}\rangle\langle\mathbf{x}|
            \otimes \Pi_{\mathbf{x},H(\mathbf{x})}^H
        \right)
        \mathcal{A}^{H}
        |\phi_0\rangle
    \right\|^2
}{
    (2q+1)^{2n}
}.
\]
Note that here it is important that we give $\Pi_{\mathbf{x},\mathbf{\Theta}}$ access to the reprogrammed $H$, \emph{not} the original $H$: otherwise $\Pi_{\mathbf{x},\mathbf{\Theta}}$ and $\cA$ would be accessing different oracles. This makes intuitive sense since $\Pi_{\mathbf{x},\mathbf{\Theta}}$ is applied after the oracle is reprogrammed. With that the proof of the main measure-and-reprogram theorem goes through the same way.

\end{document}